\newtheorem{theorem}{Theorem} 
\newcommand{\makenamedtheorem}[2]{%
  \newaliascnt{#1}{theorem}
  \newtheorem{#1}[#1]{#2}
  \aliascntresetthe{#1}
}
\theoremstyle{definition}
\crefname{theorem}{Theorem}{Theorems}
\Crefname{theorem}{Theorem}{Theorems}
\crefname{lemma}{Lemma}{Lemmas}
\Crefname{lemma}{Lemma}{Lemmas}
\crefname{corollary}{Corollary}{Corollaries}
\Crefname{corollary}{Corollary}{Corollaries}
\crefname{conjecture}{Conjecture}{Conjectures}
\Crefname{conjecture}{Conjecture}{Conjectures}
\crefname{proposition}{Proposition}{Propositions}
\Crefname{proposition}{Proposition}{Propositions}
\crefname{protocol}{Protocol}{Protocols}
\Crefname{protocol}{Protocol}{Protocols}
\crefname{claim}{Claim}{Claims}
\Crefname{claim}{Claim}{Claims}
\crefname{fact}{Fact}{Facts}
\Crefname{fact}{Fact}{Facts}
\crefname{assumption}{Assumption}{Assumptions}
\Crefname{assumption}{Assumption}{Assumptions}
\crefname{example}{Example}{Examples}
\Crefname{example}{Example}{Examples}
\crefname{problem}{Problem}{Problems}
\Crefname{problem}{Problem}{Problems}
\crefname{definition}{Definition}{Definitions}
\Crefname{definition}{Definition}{Definitions}
\crefname{intuition}{Intuition}{Intuitions}
\Crefname{intuition}{Intuition}{Intuitions}
\crefname{idea}{Idea}{Ideas}
\Crefname{idea}{Idea}{Ideas}
\crefname{exercise}{Exercise}{Exercises}
\Crefname{exercise}{Exercise}{Exercises}
\crefname{remark}{Remark}{Remarks}
\Crefname{remark}{Remark}{Remarks}
\newif\ifarxiv
\title{New Bounds for Circular Trace Reconstruction}
\begin{document}

\arxivtrue

\author{
Arnav Burudgunte \quad Paul Valiant\thanks{Partially supported by NSF award CCF-2127806 and by Office of Naval Research award N000142412695.} \quad Hongao Wang \\
Purdue University \\
}

\maketitle
\begin{abstract}

The ``trace reconstruction'' problem asks, given an unknown binary string $x$ and a channel that repeatedly returns ``traces'' of $x$ with each bit randomly deleted with some probability $p$, how many traces are needed to recover $x$? There is an exponential gap between the best known upper and lower bounds for this problem. Many variants of the model have been introduced in hopes of motivating or revealing new approaches to narrow this gap. We study the variant of \emph{circular} trace reconstruction introduced by Narayanan and Ren (ITCS 2021), in which traces undergo a random cyclic shift in addition to random deletions.

We show an improved lower bound of $\tilde{\Omega}(n^5)$ for circular trace reconstruction. This contrasts with the (previously) best known lower bounds of $\tilde{\Omega}(n^3)$ in the circular case and $\tilde{\Omega}(n^{3/2})$ in the linear case. Our bound shows the indistinguishability of traces from two \emph{sparse} strings $x,y$ that each have a \emph{constant} number of nonzeros. Can this technique be extended significantly? How hard is it to reconstruct a \emph{sparse} string $x$ under a cyclic deletion channel? We resolve these questions by showing, using Fourier techniques, that $\tilde{O}(n^6)$ traces suffice for reconstructing any constant-sparse string in a circular deletion channel, in contrast to the upper bound of $\exp(\tilde{O}(n^{1/3}))$ for general strings in the circular deletion channel. This shows that new algorithms or new lower bounds must focus on \emph{non-constant-sparse} strings.

\end{abstract}

\section{Introduction}

Let $x$ be a binary string of length $n$. Given sample access to a deletion channel %needs defining???
which deletes each character in $x$ with probability $p$ and returns the remaining subsequence, the trace reconstruction problem asks how many samples are necessary to recover $x$ reliably. This problem has been studied extensively in various forms and under various assumptions, such as \cite{Batu2004reconstruct, ban2019beyond, chase2021new, chase2021separating, krishnamurthy2021trace, rivkin25a}. In the standard setting, where the deletion probability $p$ is a constant and the original string $x$ is arbitrary, there remains an exponential gap between the best known upper bound $\exp(\tilde{O}(n^{1/5}))$ and the best known lower bound $\tilde{\Omega}(n^{3/2}))$ for worst-case reconstruction \cite{chase2021new, chase2021separating}. In hopes of narrowing this gap, many related problems have been proposed and studied, such as matrix trace reconstruction \cite{krishnamurthy2021trace}, population recovery \cite{ban2019beyond}, and coded trace reconstruction \cite{cheraghchi2020coded, brakensiek2020coded}. 

We study one such mild variant, the problem of \textit{circular trace reconstruction} introduced by \cite{narayanan2021circular}. In this problem, the original string $x$ undergoes a random cyclic shift after deletion, and the reconstruction algorithm must return any string which is cyclically equivalent to $x$. Circular trace reconstruction is known to be at least as hard as the linear version, with the best known bounds in each model being fairly similar: prior to our work, the best known lower bound was $\tilde{\Omega}(n^3)$ in the circular model compared to $\tilde{\Omega}(n^{3/2})$ in the linear model. The best known upper bound for arbitrary strings is $\exp(\tilde{O}({n^{1/3}}))$ in the circular model (assuming $n$ can be written as a product of two or fewer primes), compared to $\exp(\tilde{O}({n^{1/5}}))$ in the linear model. 

\subsection{Our Contributions}

%Our main results concern sparse circular strings, i.e., those with a constant number of $1$s. 

We show that, in the worst case, $\tilde{\Omega}(n^5)$ traces are necessary to distinguish two %cyclically distinct sparse 
strings, implying a $\tilde{\Omega}(n^5)$ lower bound for the general problem of circular trace reconstruction. This improves the previous $\tilde{\Omega}(n^3)$ proved by \cite{narayanan2021circular}. 

We prove this result by exhibiting two cyclically distinct strings $x,y$ that have a \emph{constant} number of nonzero entries, yet which need $\tilde{\Omega}(n^5)$ traces to distinguish. This strong lower bound motivates the general question: how difficult is it to reconstruct constant-sparse strings $x$ from circular traces? To what degree can this lower bound of $\tilde{\Omega}(n^5)$ be improved?

We resolve these questions almost-tightly by showing that $\tilde{O}(n^6)$ traces suffice to distinguish any two sparse circular strings. As no polynomial-sample algorithm is known for the general case, our results imply that stronger lower bounds can only be obtained from non-constant-sparse strings, and also that new algorithmic improvements should focus on the non-constant-sparse case. 

\ifarxiv
\medskip\noindent{\bf Techniques:} Both our upper bound and lower bound rely on an analysis of a new cyclically invariant quantity that we call \emph{cyclic statistics}. 
For a binary string $x$ with $k$ nonzero entries, $1\underbrace{0\ldots 0}_{x_1}1\underbrace{0\ldots 0}_{x_2}\ldots1\underbrace{0\ldots 0}_{x_k}$, we denote it here as $10^{x_1}\ldots10^{x_k}$. For a positive integer $m$, given an $m$-tuple $(i_1,\ldots,i_m)$ of integers mod $k$, we define the associated $m^\textrm{th}$ order cyclic statistic of $x$ as a degree $m$ polynomial in $x_1,\ldots,x_k$, produced by taking a monomial function of the $x_j$s and summing the function over all cyclic shifts of $x_1,\ldots,x_k$.

Our lower and upper bounds come from the following two results we show about cyclic statistics. First, there are cyclically distinct sparse strings $x,y$ that have \emph{identical} cyclic statistics up to order 4. By contrast, surprisingly, every pair of cyclically distinct strings must differ in some cyclic statistic of order $m\leq 6$.

The first result leads to our lower bound, when combined with a proof that: if two sparse strings $x$ and $y$ have identical cyclic statistics up to order $m$, the distributions of the traces generated by $x$ and $y$ will be the same up to $\frac{\log^{m+1}n}{n^{m+1}}$ in Hellinger distance.
The second result leads to our upper bound, by showing how to distinguish $m^\textrm{th}$ order cyclic statistics using $\tilde{O}(n^m)$ traces.
As the cyclic statistics are well-connected with both lower bound and upper bound in the sparse circular strings reconstruction problem, these cyclic statistics analysis tools may be of independent interest as well.
\fi

\subsection{Technical Overview}
%\subsection{Main Results and Our Techniques}

We focus on the specific case of sparse circular strings, defined as binary strings which contain $k$ $1$s for some constant $k$. This family of strings has the useful property that any $k$-sparse string $x$ is a cyclic shift of $10^{x_1}10^{x_2}\ldots10^{x_k}$; therefore, $x$ can be represented as the integer sequence $(x_1,\ldots,x_k)$, where $x_j$ specifies the number of $0$s (the ``gap'') between the $j$th and $(j+1)$th $1$s in $x$. With probability $(1-p)^k$ the deletion channel will preserve all $1$s in $x$, yielding a trace $\tx$ which can also be represented by a sequence of $k$ gaps. Crucially, in our setting of constant $p,k$, this probability is constant. When this occurs, the distribution over traces $\tx$ depends only on the original integer sequence of gaps in $x$. Both our upper and lower bounds are based on properties of this integer sequence.

\ifarxiv
    \paragraph{Cyclic Statistics.}
\else
    \subparagraph*{Cyclic Statistics.} 
\fi
Our main technical tool is a family of shift-invariant functions which we call \textit{cyclic statistics}. At a high level, a cyclic statistic of a sequence $x = (x_1,\ldots,x_k)$ is a cyclically-invariant polynomial of the variables $x_1,\ldots,x_k$; the statistic is obtained by summing a monomial function over all cyclic shifts of $x$. The \textit{order} of this statistic is the degree of the corresponding monomial. If two strings $x$ and $y$ are cyclic shifts of each other, their cyclic statistics are clearly identical. Surprisingly, we prove that the converse holds for degree 6: if $x$ and $y$ are cyclically distinct, there exists a cyclic statistic of degree $\leq 6$ in which they differ. We give a lower bound showing that this result is nearly tight. 

\begin{lemma}[Informal]
    \label{lem:stat-different}
    Let $x$ and $y$ be two cyclically distinct integer sequences. Then $x$ and $y$ differ in some cyclic statistic of order at most $6$. Conversely, there exist integer sequences $x$ and $y$ with identical cyclic statistics up to order $4$.
\end{lemma}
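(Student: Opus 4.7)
My plan is to analyze cyclic statistics via the discrete Fourier transform on $\mathbb{Z}/k$. Setting $\omega = e^{2\pi i/k}$ and $\hat{x}(\xi) = \sum_{j=0}^{k-1} x_j \omega^{j\xi}$, a cyclic shift of $x$ multiplies each $\hat{x}(\xi)$ by the character $\omega^{c\xi}$, so $x$ and $y$ lie in the same cyclic orbit if and only if their Fourier transforms differ by such a character. Moreover, by elementary invariant theory, the cyclic statistics of order $m$ linearly span the $C_k$-invariant polynomials of degree $m$, and in Fourier coordinates these are exactly the span of monomials $\hat{x}(\xi_1)\cdots\hat{x}(\xi_m)$ with $\xi_1+\cdots+\xi_m\equiv 0\pmod{k}$. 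Thus the problem reduces to: which sets of Fourier monomial constraints suffice to pin down $\hat{x}$ up to character modulation?

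For the positive statement, the first step is to use order-$2$ statistics to recover $|\hat{x}(\xi)|^2$ for every $\xi$, yielding the magnitudes and in particular the support $Z=\{\xi:\hat{x}(\xi)\neq 0\}$. When $Z=\mathbb{Z}/k$, order-$3$ statistics deliver the bispectrum $\hat{x}(\xi_1)\hat{x}(\xi_2)\overline{\hat{x}(\xi_1+\xi_2)}$, which together with the magnitudes determines the phases up to a global character modulation by a standard bispectral-recovery argument (as familiar from multi-reference alignment). The main obstacle, and what I expect to be the bulk of the work, is the degenerate case in which $Z$ is a proper subset of $\mathbb{Z}/k$: then the order-3 constraints do not link enough frequencies, and we must bring in higher orders. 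My plan is to case-analyze the possible shapes of $Z$ (chiefly unions of cosets of some subgroup $H\leq\mathbb{Z}/k$, but also sparser configurations), and to show that for each such $Z$ the constraints coming from $\hat{x}(\xi_1)\cdots\hat{x}(\xi_m)$ with $\sum\xi_i\equiv 0$ and $m\leq 6$ suffice to stitch together the phases on $Z$. Pushing this analysis down to a uniform bound of $6$—rather than $\log k$ or $k$—is the technical crux.

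For the converse, the plan is to exhibit an explicit pair of cyclically inequivalent integer sequences whose cyclic statistics through order $4$ all coincide. Such pairs should naturally have highly structured Fourier support, for instance contained in a proper subgroup of $\mathbb{Z}/k$, since this makes many low-order constraints vanish or coincide automatically. A directed search over small $k$ (say $k\in\{6,7,8\}$) guided by this Fourier-sparsity heuristic should surface a concrete counterexample, whose order-$1$ through order-$4$ statistics can then be verified by direct computation. I expect the hard part here to be less the construction itself than showing that our pair cannot be collapsed to a still smaller example, so that the bound of $4$ on the converse is presented as sharply as possible.
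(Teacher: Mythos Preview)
Your Fourier framework is exactly right and matches the paper's setup: cyclic statistics of order $m$ determine the products $\hat{x}(\xi_1)\cdots\hat{x}(\xi_m)$ with $\sum\xi_i\equiv 0\pmod k$, and the non-degenerate case $Z=\mathbb{Z}/k$ does fall to the bispectral argument you sketch.

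The gap is in your plan for degenerate $Z$. You propose a case analysis over ``unions of cosets of some subgroup $H$, but also sparser configurations,'' and flag that getting down to $6$ is the crux without saying how. The paper supplies two ingredients you are missing. First, a Galois-theoretic argument over the cyclotomic field $\mathbb{Q}(e^{2\pi i/k})$ shows that $\hat{x}(\xi)=0$ forces $\hat{x}(\xi')=0$ whenever $\gcd(\xi,k)=\gcd(\xi',k)$; thus $Z$ is always a union of the sets $G_\alpha=\{\xi:\gcd(\xi,k)=\alpha\}$, never a ``sparser configuration'' (and these are not cosets of a subgroup). Second---and this is where $6$ comes from---every residue $j\in\mathbb{Z}/d\mathbb{Z}$ can be written as a sum mod $d$ of either $2$ or $3$ elements coprime to $d$. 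Applied with $d=k/\alpha$, this represents any frequency via at most $3$ frequencies in $G_\alpha$; the constraints needed to propagate phases within one $G_\alpha$ and to stitch across different $G_\alpha$'s then involve at most $3+3=6$ Fourier coefficients. Without this representation lemma, an unstructured case analysis would naturally produce a bound growing with $k$, not a uniform constant.

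For the converse, your heuristic is sound but the search range is too small: the paper's witness pair has length $k=12$, namely $x=(0,2,3,2,1,1,1,1,2,3,2,0)$ and $y_j=3-x_j$, and a qualifying pair at $k\le 8$ seems unlikely.
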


We prove the upper bound in \Cref{lem:stat-different} using Fourier techniques (see Lemma~\ref{lem:cyclic-stats} in Section~\ref{sec:cyclic}). We show that if $x$ and $y$ have identical cyclic statistics up to order $6$, their Fourier transforms $\hat{x}$ and $\hat{y}$ must satisfy a certain system of sparse linear equations, each involving $6$ or fewer variables. We then analyze this system and show that any solution $\hat{x},\hat{y}$ has the property that there exists an integer $c$ for which \begin{equation}\label{eq:cyclic-conclusion}\hat{x}_j = \hat{y}_j \exp\left(2\pi i \cdot \frac{cj}{k}\right)\quad \textrm{for all } j \in [k].\end{equation} By basic properties of Fourier transforms, this implies that $y$ is a cyclic shift of $x$. 

The significance of the order $6$ stems from a number theoretic result. Our original constraints involve an equation of two variables for each $j \in [k]$, but our analysis only applies for $j$ relatively prime to $k$. We show that each $j \in [k]$ can be written as the sum of 2 or 3 integers relatively prime to $k$, transforming our constraint into a 6-variable equation, which implies Equation~\ref{eq:cyclic-conclusion}, given all cyclic statistics up to order 6 vanish.

The lower bound---that there exist $x$ and $y$ with identical cyclic statistics up to order 4---is proved directly. We provide a pair of sequences of length 12 satisfying this property, obtained via computer search: $x_{(j)}=(0,2,3,2,1,1,1,1,2,3,2,0)$ and the sequence $y_{(j)}=3-x_{(j)}$.  

Cyclic statistics are preserved in expectation by the cyclic deletion channel, leading to our algorithmic results. They also characterize the distribution of traces obtained from the deletion channel, from which we obtain our information-theoretic lower bound. We give a brief overview of this connection and our results below.

\ifarxiv
    \paragraph{Lower Bound}
\else
    \subparagraph*{Lower Bound.} 
\fi
In order to build our lower bound, we relate the probability distribution over traces of $x$ to the cyclic statistics of $(x_1,\ldots,x_k)$. Roughly, we show that the probability of generating a trace $\tx$ from $x$ is proportional to a polynomial function of $n$ and that the coefficients of this polynomial depend only on the cyclic statistics of $x$. We use this fact to show that if two cyclically distinct strings $x$ and $y$ have identical cyclic statistics up to order $m$, the Hellinger distance between the distributions of the traces generated from $x$ and $y$ is $\tilde{O}(n^{-m-1})$. Therefore, it requires $\tilde{\Omega}(n^{m+1})$ samples to distinguish the two distributions.

\begin{theorem}[Informal version of \Cref{lem:lowerbd}]
    \label{lem:lower-bound-informal}
    Let $(x_1,\ldots,x_k)$ and $(y_1,\ldots,y_k)$ be two sequences that are permutations of each other and have identical cyclic statistics up to order $m$. Then for any constant deletion probability $p$, any algorithm which distinguishes the strings
    \begin{equation*}
        x = 10^{n+x_1}10^{n+x_2}\ldots10^{n+x_k} \text{ and } y = 10^{n+y_1}10^{n+y_2}\ldots10^{n+y_k}
    \end{equation*} 
    requires $\tilde{\Omega}(n^{m+1})$ traces from a cyclic deletion channel. 
\end{theorem}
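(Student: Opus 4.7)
The plan is to upper bound the squared Hellinger distance $H^2(P_x,P_y)$ between the trace distributions by $\tilde{O}(n^{-(m+1)})$ and conclude, via the standard hypothesis-testing bound, that $\tilde{\Omega}(n^{m+1})$ traces are needed to distinguish them. First I would condition on the event $E$ that the channel preserves all $k$ ones; since $k$ is constant, $\Pr[E]=(1-p)^k=\Theta(1)$ and this probability is identical under $x$ and $y$, so by the standard decomposition $H^2(P_x,P_y) = \Pr[E]\,H^2(P_x{\mid}E,P_y{\mid}E) + \Pr[\bar E]\,H^2(P_x{\mid}\bar E,P_y{\mid}\bar E)$ it suffices to bound the term on $E$ (the $\bar E$ case, where $k' < k$ ones survive, is handled by a parallel cyclic-statistic argument after merging the appropriate $n+x_j$ gaps). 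Conditioned on $E$, a trace is specified up to cyclic shift by its gap vector $(\tilde{x}_1,\ldots,\tilde{x}_k)$, and, writing $f_K(N) := \binom{N}{K} p^{N-K}(1-p)^K$, the probability is
\begin{equation*}
    P_x(\tilde{x}) \;\propto\; \sum_{s=0}^{k-1}\,\prod_{j=1}^{k}\, f_{\tilde{x}_j}\bigl(n + x_{j+s}\bigr),
\end{equation*}
with indices mod $k$; the sum over $s$ reflects the $k$-fold rotational ambiguity between input and trace.

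Next I would set $F_x(\tilde{x}) := P_x(\tilde{x})/P_{\mathbf{0}}(\tilde{x})$, where $P_{\mathbf{0}}$ corresponds to $x=(0,\ldots,0)$, and Taylor-expand $F_x$ as a formal power series in $(x_1,\ldots,x_k)$ around the origin (legitimate because $f_K(n+x)$ extends to an analytic function of real $x$). The cyclic symmetrization over $s$ forces the coefficient of each monomial $x^\alpha$ in $F_x$ to equal the coefficient of every cyclic shift of $x^\alpha$, so the expansion regroups by cyclic orbits $[\alpha]$:
\begin{equation*}
    F_x(\tilde{x}) \;=\; \sum_{[\alpha]} c_{[\alpha]}(\tilde{x})\, S_{[\alpha]}(x), \qquad S_{[\alpha]}(x) \;:=\; \sum_{\alpha'\in[\alpha]} x^{\alpha'},
\end{equation*}
where each $S_{[\alpha]}$ is, by definition, a cyclic statistic of order $|\alpha|$. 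Since $x$ and $y$ agree on all cyclic statistics up to order $m$, the difference $F_x(\tilde{x})-F_y(\tilde{x})$ has only terms with $|\alpha|\geq m+1$.

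To bound each coefficient $c_{[\alpha]}$, I would analyze the log-derivatives of $g_j(x) := \log f_{\tilde{x}_j}(n+x) - \log f_{\tilde{x}_j}(n)$. Direct computation yields $g_j'(0) = (\tilde{x}_j-(1-p)n)/(pn) + O(n^{-1})$ and $g_j^{(r)}(0) = O(n^{-(r-1)})$ for $r\geq 2$. Restricting to the typical set $\mathcal{T} = \{\tilde x : |\tilde{x}_j-(1-p)n| = \tilde{O}(\sqrt{n}) \ \forall j\}$ (which carries all but an inverse-super-polynomial fraction of $P_{\mathbf{0}}$ by Chernoff), every $g_j^{(r)}(0)$ is $\tilde{O}(n^{-r/2})$; combining via Fa\`a di Bruno on $\phi_j = e^{g_j}$ gives $|c_{[\alpha]}(\tilde{x})| = \tilde{O}(n^{-|\alpha|/2})$, hence $|F_x(\tilde{x})-F_y(\tilde{x})| = \tilde{O}(n^{-(m+1)/2})$ on $\mathcal{T}$ (the bounded-size cyclic statistic differences $|S_{[\alpha]}(x)-S_{[\alpha]}(y)|$ are $O(1)$). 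Combining with the standard inequality $H^2(P_x,P_y) \leq \tfrac{1}{2}\sum_{\tilde{x}} P_{\mathbf{0}}(\tilde{x})\,(F_x(\tilde{x})-F_y(\tilde{x}))^2$ and using tail bounds on $P_{\mathbf{0}}$ to absorb $\tilde{x}\notin\mathcal{T}$ into polylog factors yields $H^2 = \tilde{O}(n^{-(m+1)})$, completing the argument.

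\textbf{Main obstacle.} The delicate step is uniform control of $c_{[\alpha]}(\tilde{x})$ outside $\mathcal{T}$: the $\tilde{O}(n^{-r/2})$ per-derivative gain only holds when $\tilde{x}_j$ is concentrated around $(1-p)n$, so one must carefully weigh the contributions of large-deviation traces (where $g_j^{(r)}(0)$ may be $\Omega(1)$) against the correspondingly small $P_{\mathbf{0}}$-mass they carry, and verify that the net contribution to $\sum P_{\mathbf{0}}(F_x-F_y)^2$ is only a $\mathrm{polylog}(n)$ correction to the $\mathcal{T}$-bound. A secondary care-point is the clean bookkeeping that turns the cyclic-shift sum into a cyclic-statistic decomposition, and the extension to the $\bar E$ event, where surviving-gap lengths are sums of consecutive $n+x_j$'s -- each such sum is still linear in the relevant $x_j$'s, so an analogous Taylor-in-$x$ expansion followed by the same cyclic-statistic regrouping produces the same $\tilde{O}(n^{-(m+1)})$ Hellinger bound on that branch.
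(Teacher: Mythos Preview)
Your plan is essentially correct and shares the core insight with the paper: the trace probability, as a function of $(x_1,\ldots,x_k)$, has an expansion whose leading terms are governed by low-order cyclic statistics, so matching statistics up to order $m$ forces the Hellinger distance to be $\tilde{O}(n^{-(m+1)})$. The technical route, however, differs in two places worth knowing about.

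First, for the $\bar E$ branch you propose a parallel cyclic-statistic argument on merged gaps. The paper bypasses this entirely with a one-line data processing inequality: generate a trace from $\Del{x}$ by first sampling from the conditional distribution where all $k$ ones survive, and then pass the result through a second channel that deletes only ones. The second channel is identical for $x$ and $y$, so $d_H(\Del{x},\Del{y})\le d_H(\mu_0,\nu_0)$ and you never have to analyze $k'<k$ at all. Second, where you Taylor-expand $F_x$ in the variables $x_1,\ldots,x_k$ (an infinite series whose tail you must control via Fa\`a di Bruno and geometric decay), the paper instead expands the \emph{exact} probability as a finite-degree polynomial in $n'=pn$: writing $\Pr[\tilde x=a]= Sym(x)\cdot\sum_i\prod_j\prod_{h=x_j+1}^{x^*}(n'-b_{j+i}+h)$, the $(L-m)$-th coefficient in $n'$ is shown (via Newton's identities) to be a linear combination of cyclic statistics of order $\le m$ times symmetric functions of $x$. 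Since $x$ and $y$ are permutations, the symmetric factors match, so the ratio $D_1/D_2$ differs from $1$ only at degree $\ge z+1$, giving $|D_1/D_2-1|\le (c\log n/n)^{(z+1)/2}$ on the Chernoff-typical set directly, with no series convergence to check. Your version would work, but the paper's finite polynomial in $n'$ and the DPI reduction make the bookkeeping substantially lighter. A minor caution: your ``standard inequality'' $H^2\le\tfrac12\sum P_{\mathbf 0}(F_x-F_y)^2$ is not automatic; it needs $F_x,F_y$ bounded below (which does hold on $\mathcal T$ since $F_x,F_y=1+o(1)$, so this is fine once stated).
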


By \Cref{lem:stat-different}, there exist two cyclically distinct sequences $x$ and $y$ with identical cyclic statistics up to order $4$. Plugging the corresponding strings into \Cref{lem:lower-bound-informal}, we obtain our $\tilde{\Omega}(n^5)$ lower bound for circular trace reconstruction. 

\ifarxiv    
    \paragraph{Upper Bound}
\else
    \subparagraph*{Upper Bound.} 
\fi
We build an algorithm that can recover any $k$-sparse string $x$ using $\tilde{O}(n^6)$ samples from the deletion channel, showing that our techniques from above \emph{cannot} be extended to yield a significantly stronger lower bound. 

\begin{theorem}[Informal version of \Cref{thm:upperbd}]
    Let $x$ and $y$ be two cyclically distinct constant-sparse binary strings of length $n$. Then there exists an algorithm which, for any constant deletion probability $p$, distinguishes $x$ from $y$ using $\tilde{O}(n^6)$ traces from a cyclic deletion channel.
\end{theorem}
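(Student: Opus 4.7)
The plan is to build an algorithm that, from $\tilde O(n^6)$ traces, estimates every cyclic statistic of $x$ of order at most $6$ with sufficient precision to round to its integer value, and then invokes \Cref{lem:stat-different} to decide whether the true source is $x$ or $y$. First I would condition on the event $E$ that no $1$ of $x$ is deleted in the trace. This event is observable (just count the $1$s) and occurs with probability $(1-p)^k = \Theta(1)$, so it costs only a constant factor in sample complexity. Under $E$, a trace may be represented as $(\tilde x_1,\ldots,\tilde x_k)$ up to an unknown uniform cyclic shift, where $\tilde x_i \sim \mathrm{Bin}(x_i, 1-p)$ are independent.

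Next, for each index tuple $\vec a = (i_1,\ldots,i_m)$ with $m \leq 6$, I would compute the empirical average $\hat S_{\vec a}$ of the shift-invariant quantity $S_{\vec a}(\tilde x)$ across the retained traces. Expanding Binomial moments, the conditional expectation $\mathbb E[S_{\vec a}(\tilde x) \mid E]$ is an explicit linear combination of $\{S_{\vec b}(x) : |\vec b| \leq m\}$ whose coefficients, polynomial in $p$, form an upper-triangular system (ordered by degree) with constant-size entries, hence invertible. Inverting across all tuples of order $\leq 6$ yields unbiased estimators $\hat T_{\vec b}$ of the integer-valued cyclic statistics $S_{\vec b}(x)$. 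By \Cref{lem:stat-different}, these statistics separate any two cyclically distinct sparse strings, so rounding each $\hat T_{\vec b}$ to the nearest integer and comparing to the a priori known vectors for $x$ and $y$ gives the distinguishing decision.

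The main obstacle is showing that $\tilde O(n^6)$ traces really do suffice for the rounding to be correct with high probability. A naive Chebyshev bound applied directly to $S_{\vec a}(\tilde x)$ only gives $\tilde O(n^{2m-1}) = \tilde O(n^{11})$ samples for $m = 6$, which is too weak. To match the $\tilde O(n^6)$ target I would instead, for each $m \leq 6$ and each index tuple $\vec a$, estimate the empirical joint distribution of $(\tilde x_{s + i_1},\ldots,\tilde x_{s + i_m})$ under a random cyclic shift $s$: this distribution lives on a support of size $O(n^m)$, and so standard plug-in estimation achieves TV error $o(1)$ using $\tilde O(n^m)$ samples, at which point each cyclic statistic of order $\leq m$ can be read off as a low-order moment of the estimated distribution. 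The delicate points I expect to grapple with are (i) handling the unknown cyclic shift uniformly across all index tuples while preserving concentration, and (ii) upgrading the qualitative \Cref{lem:stat-different} into a quantitative ``gap'' guarantee that tolerates the finite rounding precision; I expect the Fourier-theoretic machinery underlying \Cref{lem:stat-different} to yield this quantitative strengthening directly, since Equation~\ref{eq:cyclic-conclusion} already pins down the Fourier coefficients up to a cyclic shift.
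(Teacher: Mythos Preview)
Your proposal correctly identifies the basic strategy (condition on all $1$s surviving, estimate cyclic statistics, invoke Lemma~\ref{lem:stat-different}) and correctly diagnoses the obstacle: the na\"ive empirical mean of $S_{i_1,\ldots,i_m}(\tilde x)$ has variance $\Theta(n^{2m-1})$ and so needs $\tilde O(n^{11})$ samples for $m=6$. However, your proposed fix does not work. Estimating the joint distribution of $(\tilde x_{s+i_1},\ldots,\tilde x_{s+i_m})$ to total-variation error $o(1)$ with $\tilde O(n^m)$ samples is fine, but then ``reading off'' the moment $\mathbb E[\prod_r \tilde x_{s+i_r}]$ from that estimate incurs error up to $\|f\|_\infty\cdot d_{\mathrm{TV}}$ with $\|f\|_\infty=\Theta(n^m)$, so the resulting error in the moment is $o(n^m)$, not $O(1)$. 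In fact, taking the moment of the empirical joint distribution is \emph{identical} to the empirical mean of the moment, so your alternative is literally the na\"ive estimator you already rejected; it cannot beat the $\tilde\Omega(n^{11})$ barrier.

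The paper's key idea, which you are missing, is a variance-reduction step: cluster the means $\{qx_j,qy_j\}$ into groups whose centers are $\tilde\Omega(\sqrt n)$ apart and whose diameters are $\tilde O(\sqrt n)$; for each trace, subtract from each $\tilde z_j$ the center of the cluster it lands in. With high probability each $\tilde z_j$ is assigned to the correct cluster, and after centering each factor has magnitude $\tilde O(\sqrt n)$, so a degree-$6$ product is $\tilde O(n^3)$ and Hoeffding gives $\tilde O(n^6)$ samples. A subtlety you would also need to address is that this centering can make $x-s$ and $y-s$ cyclically equivalent; the paper handles this by working with cyclic statistics \emph{mod $\ell$}, where $\ell$ is the period of the cluster-center sequence (see Lemma~\ref{lem:stats-diff}). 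Finally, your concern (ii) about a ``quantitative gap'' is unnecessary: cyclic statistics of integer sequences are integers, so any two that differ must differ by at least $1$.
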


As explained above, our algorithm exploits the fact that the retention probability $q \coloneqq 1-p$ is a constant. Therefore, with constant probability, no 1 is deleted in $x$. Assuming this occurs, every trace $\tx$ can be expressed as a sequence of binomial random variables; that is, $\tilde{x} = (\tilde{x}_1,\ldots,\tilde{x}_k)$ with $\tilde{x}_j \sim \Bin{x_j}{q}$. By \Cref{lem:stat-different}, any two cyclically distinct sequences $x$ and $y$ differ in some cyclic statistic of order at most $6$. As all cyclic statistics of integer sequences are integers, the difference between $x$ and $y$ in this statistic is at least $1$. Thus we aim to estimate cyclic statistics from traces to within constant error. 
% Since the gaps $\tilde{x}_j$ are independent binomials, all cyclic statistics of $\tx$ are the equal to the corresponding cyclic statistics in $x$. 
% Therefore, a na\"ive estimator is the empirical cyclic statistics of the traces. 
% As with constant probability, there is no $1$ deleted, we can discard the traces with fewer than $k$ $1$'s. 
% Then every trace can be expressed as a sequence of binomial random variables, that is $\tilde{x} = (\tilde{x}_1,\ldots,\tilde{x}_k)$ and $\tilde{x}_i \sim \Bin{x_i}{q}$. 
% Here $q = 1-p$ is the retention probability. 
% Notice that all those $\tilde{x}_i$ are independent, the expected cyclic statistics are the same as the cyclic statistics of the expected traces. 

A na\"ive estimator is the empirical average of a certain cyclic statistic of the traces. However, this na\"ive algorithm requires $\Omega(n^{11})$ samples.
This is because, in the worst case, each $\tx_j$ may have mean  and variance both $\Omega(n)$. Thus the variance of the product of $6$ such random variables may be $\Omega(n^{11})$. However, instead, if we can shift the binomials such that they become random variables whose absolute value is $\tilde{O}(\sqrt{n})$, the absolute value of their product will be bounded by $\tilde{O}(n^3)$. 
We accomplish this by partitioning the set of expectations $\{qx_j\}$ into constantly many clusters. All elements in the same cluster are $\tilde{O}(\sqrt{n})$-close to each other and $\tilde{\Omega}(\sqrt{n})$-far from all other clusters. Since each binomial is tightly concentrated around its expectation, we can map each $\tx_j$ to the correct cluster with high probability, and subtract the cluster center from $\tx_j$ to obtain a random variable whose absolute value is bounded by $\tilde{O}(\sqrt{n})$ with high probability. 

% then by subtracting the center value of the group from the traces, we can get the random variable we want. 
% We also prove that we can correctly detect the group of traces with high probability. 
% Hence, we can use this algorithm to recover the string from traces with $\tilde{O}(n^6)$ traces.

A final hurdle is that subtracting cluster centers may erase any differences in the cyclic statistics of $x$ and $y$. See the discussion at the start of Section~\ref{sec:upperbound} and Example~\ref{ex:cyclic} for more algorithmic intuition, and discussion of why \textit{cyclic statistics mod $\ell$} are needed to yield a $\tilde{O}(n^6)$ algorithm. See \Cref{alg:tester} and \Cref{thm:upperbd} for our main algorithmic result.

%We avoid this by showing that a constant-strength signal exists in the cyclic statistics computed over a \emph{subset} of cyclic shifts of $x$ and $y$---the subset of shifts which leave the corresponding clusters unchanged. Since the true cluster of each $\tx_j$ can be identified with high probability, the cyclic shift can be identified ``modulo" any symmetries of the cluster sequence, and it suffices to consider only these symmetries. Estimating these modified statistics, which we call \textit{cyclic statistics mod $\ell$}, yields a $\tilde{O}(n^6)$ algorithm. 

\subsection{Structure of the Rest of the Paper}
In Section~\ref{sec:related} we discuss related work. Section~\ref{sec:defs} introduces the key definitions used in this paper. The rest of the paper consists of three technical sections that may be read independently, in any order. Section~\ref{sec:cyclic} builds tools to analyze cyclic statistics, leading to our characterization that ``any two cyclically distinct strings $x,y$ must differ in \emph{some} cyclic statistic of order $m\leq 6$.'' Section~\ref{sec:upperbound} contains our algorithmic results, using the characterization of Section~\ref{sec:cyclic} to give an algorithm that distinguishes any constant-sparse strings with $\tilde{O}(n^6)$ traces. Section~\ref{sec:lower} shows our lower bounds, based on a Hellinger distance analysis, showing that two strings $x,y$ that are permutations of each other and have identical cyclic statistics up to order $m$ require $\tilde{\Omega}(n^{m+1})$ traces to distinguish.

\subsection{Related Work}\label{sec:related}

There is a substantial amount of work related to the trace reconstruction problem and its variants. %We have discussed some of them previously and will give a more detailed discussion in this part. 

The problem was introduced in \cite{levenshtein2002efficient,Batu2004reconstruct}.
\cite{Batu2004reconstruct} discusses the original linear version of the trace reconstruction problem with deletion probability $O(1/\log n)$ and provides a simple and intuitive algorithm to reconstruct the strings. 
\cite{holenstein08trace} gives an algorithm for the trace reconstruction problem with constant deletion probability using $\e(\tilde{O}(n^{1/2}))$ samples. 
This result was improved independently by \cite{de2019optimal} and \cite{nazarov16Trace}, which present an algorithm using $\e(O(n^{1/3}))$ samples. 
More recently, \cite{chase2021separating} provides an algorithm that can reconstruct any string after a deletion channel with constant deletion probability with $\e(\tilde{O}(n^{1/5}))$ traces. 

On the lower bound side, \cite{Batu2004reconstruct} provides a $\Omega(n)$ lower bound, showing that distinguishing the two strings $0^n10^{n-1}$ and $0^{n-1}10^n$ requires $\Omega(n)$ samples. 
Then \cite{holden20Lower} and \cite{chase2021new} improved the lower bound by replacing the $0$'s in the previous strings by alternating $01010101\ldots$. 
However, the best known lower bound, from the work of \cite{chase2021new} is still $\tilde{\Omega}(n^{3/2})$, and there is still an exponential gap between the best known upper bound and lower bound. 

Restricting the class of algorithms to yield improved lower bounds, \cite{de2019optimal} shows that all ``mean-based'' algorithms require $\e(\Omega(n^{1/3}))$ samples to reconstruct strings. 
Here, a mean-based algorithm is any algorithm that only uses the empirical means of individual bits of the traces. 
Extending this line of work, \cite{kuan24on} considers ``$k$-mer-based'' algorithms which rely on counting the occurrences of contiguous $k$-bit strings (where ``mean-based'' algorithm are the case $k=1$). \cite{kuan24on} shows that, even for  $k$ up to $n^{1/5}$, any $k$-mer-based algorithm requires $\e(\Omega(n^{1/5}))$ samples to reconstruct the string. 
%Here, an algorithm is called `$k$-mer-based' if it relies on the estimation of the statistics related to the occurrence of contiguous $k$-bit strings. 
All known algorithms for the worst-case trace reconstruction problem are $k$-mer-based, including the algorithm presented by \cite{chase2021separating}. More recently, \cite{chen24trace} strengthened this result to show an analogous $\e(\Omega(n^{1/5}))$-sample lower bound for all statistical query (SQ) algorithms whose queries are ``$\tilde{O}(n^{1/5})$-local.'' Thus if there is an algorithm using fewer than $\e(n^{1/5})$ samples, it must use ``non-local information.'' We point out that the algorithm of the present paper, though we are in the rather different setting of \emph{sparse} and \emph{circular} trace reconstruction, looks at more global properties of the input (see Algorithm~\ref{alg:tester} and Theorem~\ref{thm:upperbd}).

As the general worst-case trace reconstruction problem still has an exponential gap, much work has developed and investigated variants of the problem. 
\cite{krishnamurthy2021trace} investigates many different variants, including matrix reconstruction and trace reconstruction for sparse strings. 
Most relevant to our work: in the regime where $k$ and $p$ are both constants, they analyze the ``trivial'' algorithm of directly estimating the size of each gap, giving an $O(n\log n)$ upper bound. (Directly estimating the size of each gap does not work in our circular trace reconstruction setting, since each trace gets randomly cyclically shifted, and thus a gap in a trace could have come from \emph{any} of the $k$ gaps in the original string, and thus we cannot aggregate statistics about any individual gap.) 
\cite{aamand2025near} provides an algorithm using $O(n\log n)$ samples for the trace reconstruction problem for ``separated'' strings, which means the number of $0$'s between two $1$'s is at least $\text{polylog } n$. 
The work of \cite{grigorescu_limitations_2022} and \cite{sima_trace_2021} shows that when the string is confined in a known Hamming ball or an edit distance ball of radius $k$, there are algorithms that can recover the string with $n^{O(k)}$ traces.
%mention the result on strings confined in the hamming ball of radius k and edit distance ball of radius k. Then mention the result for the average-case and coded trace reconstruction
\cite{peres17Average} investigates the average-case trace reconstruction problem, that is, if the string is uniformly randomly chosen instead of adversarially generated, how many traces are required to recover it with high probability. 
They show that, in contrast to the worst-case, it only requires $\e(O(\log^{1/2}n))$ traces in the average case. 
\cite{cheraghchi2020coded} investigates the coded version of the trace reconstruction problem, that is, if we code the original string with a high-rate error-correction code, how many traces do we need to recover it? 
\cite{brakensiek2020coded} shows that the coded trace reconstruction is nearly equivalent to the average-case trace reconstruction problem, and $\e(O(\log^{1/3}(1/\epsilon))$ traces are enough to reconstruct the string. 

The work of \cite{narayanan2021circular} introduces the circular trace reconstruction problem. In this problem, the traces are generated by passing through a deletion channel and then undergoing a random cyclic shift. 
This problem aims to recover the string or any of its cyclic shifts. 
They show, via a reduction, that this problem is at least as hard as the linear trace reconstruction problem and provide a $\e(O(n^{1/3}))$ upper bound when $n$ is a prime or a product of two primes in the worst-case, and a $n^{O_p(1)}$ upper bound for the average case, along with a lower bound of $\tilde{\Omega}(n^3)$. This lower bound comes from showing the indistinguishability of two constant-sparse strings whose cyclic statistics cancel up to order $m=2$; the current paper extends this construction to $m=4$ and shows this is almost the best possible.

\section{Definitions}\label{sec:defs}

We study traces of $k$-sparse binary strings, meaning the set of binary strings with $k$ nonzeros. Throughout this paper, we treat $k$ as a universally bounded constant. For two sequences $x = (x_1,\ldots,x_k)$ and $y = (y_1,\ldots,y_k)$, we say $y$ is a \textit{cyclic shift} of $x$ if there exists an integer $c$ such that $x_j = y_{j+c}$ for all $j \in [k]$, where indices are interpreted mod $k$. If no such $c$ exists, then we say $x$ and $y$ are \textit{cyclically distinct}. 

Every $k$-sparse binary string $x$ is a cyclic shift of some string of the form $10^{x_1}\ldots10^{x_k}$. For our analysis, it will often be useful to represent each string as an integer sequence $x = (x_1,\ldots,x_k)$, where $x_j$ corresponds to the number of zeros in the ``gap'' between the $j^\textrm{th}$ and $j+1^\textrm{st}$ $1$s in $x$. In this paper, we will generally use the integer sequence representation of $x$; when we require the binary string instead, we will explicitly write $x = 10^{x_1}\ldots10^{x_k}$.

We use $[k]$ to refer to the set of integers $\{1,\ldots,k\}$ and $\Bin{z}{q}$ to denote a binomial random variable with $z$ trials and success probability $q$. For two probability measures $\mu$ and $\nu$ over a set $\Omega$, the Hellinger distance between $\mu$ and $\nu$ is defined as $d_H(\mu, \nu) \coloneqq \sqrt{\sum_{x \in \Omega} \left( \sqrt{\mu(x)} - \sqrt{\nu(x)} \right)^2}$.

Most of the technical analysis of this paper concerns understanding properties of ``cyclic statistics'' of sequences $x_1,\ldots,x_k$, which we define here:
\begin{definition}
    Let $i_1,\ldots,i_m \in [k]$, and $\ell$ be a divisor of $k$ (with $\ell=1$ allowed). This defines a \emph{cyclic statistic mod $\ell$} of $x = (x_1,\ldots,x_k) \in \mathbb{Z}^k$ defined by $S_{i_1,\ldots,i_m;\ell}(x) \coloneqq \sum_{j=1}^{k/\ell} x_{i_1 + j\ell}\ldots x_{i_m + j\ell}$, (where all indices are interpreted$\mod k$). %is said to be a \textit{cyclic statistic mod $\ell$} of $x$. 
    When $\ell = 1$, we refer to the function as merely a \textit{cyclic statistic} and write $S_{i_1,\ldots,i_m}(x)$.
\end{definition}

\ifarxiv    
    \paragraph{Problem definitions.}
\else
    \subparagraph*{Problem definitions.} 
\fi
We investigate the problem of recovering a string from its traces generated by a circular deletion channel, which is defined as follows. 

\begin{definition}[Circular deletion channel]
    Fix $p \in (0,1)$. A \textit{circular deletion channel}, denoted $\Del{x}$ takes as input $x \in \{0,1\}^n$ and induces the probability distribution defined by the following process:
    \begin{enumerate}
        \item Delete each bit in $x$ independently with probability $p$ to yield a string $\tilde{x}$. 
        \item Return a uniformly random cyclic shift of $\tilde{x}$.
    \end{enumerate}
    % is a communication channel such that every $n$-bit string $x = x_1x_2\ldots x_n$ passes through it 
\end{definition}
In this paper, we focus on the case where $p$ is a universal constant. We are now ready to define the (circular) trace reconstruction problem. 

\begin{definition}[Trace reconstruction problem]
    Let $x \in \{0,1\}^n$. Given sample access to traces generated by the deletion channel $\Del{x}$, an algorithm $\mathcal{A}$ which returns $x'$ solves the \textit{trace reconstruction problem} if, with probability at least $2/3$, $x'$ is a cyclic shift of $x$. 
\end{definition}

\begin{definition}[Distinguishing problem]
    Let $x,y \in \{0,1\}^n$ with $x$ and $y$ cyclically distinct. An algorithm $\mathcal{A}$ \textit{distinguishes $x$ from $y$} if, given sample access to a deletion channel $\Del{z}$,  $\mathcal{A}$ returns $x$ with probability at least $2/3$ if $x$ is a cyclic shift of $z$, and $\mathcal{A}$ returns $y$ with probability at least $2/3$ if $y$ is a cyclic shift of $z$. 
\end{definition}

\ifarxiv
\begin{remark}
    We note that the two problems above are nearly equivalent for $k$-sparse strings in the sense that an $O(f(n))$ upper bound for the distinguishing problem implies an $O(f(n)\log n)$ algorithm for reconstruction. This is due to the following standard technique. Suppose algorithm $\mathcal{A}$ distinguishes all pairs $x,y \in \mathcal{X}$ with probability at least $2/3$ using $f(n)$ samples. To reconstruct an unknown string $z$, we draw $O(f(n)\log n^{2k}) = O_k(f(n)\log n)$ samples from the deletion channel $\Del{z}$, and test each possible pair of cyclically distinct strings $x,y$ against each other using $\mathcal{A}$. By a Chernoff bound, $\mathcal{A}$ will return a valid answer on each pair with probability $1 - O(1/n^{2k})$. Since there are $O(n^k)$ cyclically distinct $k$-sparse strings of length $n$, \textit{all} $O(n^{2k})$ tests will succeed with constant probability, in which case we return the unique string $z'$ which is chosen by $\mathcal{A}$ against all other candidates. 
\end{remark}
\fi

\section{Cyclic Statistics Characterize Cyclically Distinct Sequences}\label{sec:cyclic}

We begin by studying cyclic statistics of integer sequences, a key component of our upper and lower bounds. The goal of this section is to answer the following question: given two cyclically distinct integer sequences $x$ and $y$, what is the smallest value of $m$ such that there exists a cyclic statistic $S_{i_1,\ldots,i_m}$ satisfying $S_{i_1,\ldots,i_m}(x) \neq S_{i_1,\ldots,i_m}(y)$? We show that for all pairs of sequences, $m \leq 6$ suffices.

Our proof relies on the Fourier transforms of $x$ and $y$, denoted $\hat{x}$ and $\hat{y}$ respectively. We partition the Fourier coefficients $\hat{x}_j$ into equivalence classes based on $\gcd(j,k)$. We show that for each equivalence class, the Fourier coefficients in $x$ are jointly zero or nonzero \emph{together}. Notably, as it turns out, for any $x$ and $y$ with matching second-order cyclic statistics, $\hat{y}$ has identical pattern of zeros and nonzeros as $\hat{x}$ (see Lemma~\ref{lem:zero-or-nonzero}).

Armed with this insight, we restrict our analysis to the set of \emph{nonzero} Fourier coefficients in $x$ and $y$, which correspond to the same set of indices. We define a set of variables $\{z_j = \frac{1}{2\pi i} \log \frac{\hat{x}_j}{\hat{y}_j} \}$ for each index $j$ with nonzero Fourier coefficients in $\hat{x}$ and $\hat{y}$. We show that equivalence of cyclic statistics (up to order 6) imposes a system of linear constraints on the variables $z_j$, and that any set of solutions must be strictly real and  there exists an integer $c$ such that $z_j \equiv \frac{cj}{k} \mod 1$ for all $j$. It follows immediately that $\hat{x}_j = \hat{y}_j \exp\left(2\pi i \cdot \frac{cj}{k}\right)$ for all $j$. By basic properties of Fourier transforms, this implies that $x$ is a cyclic shift of $y$, proving our upper bound.

The organization of this section is as follows. In \Cref{sec:fourier}, we relate cyclic statistics to the Fourier transform and prove that for any $x$ and $y$ with identical cyclic statistics up to order $6$, $\hat{x}$ and $\hat{y}$ have the same pattern of zeros and nonzeros. In \Cref{sec:numtheory} we set up several number theoretic results necessary to analyze the system of linear constraints. Finally, in \Cref{sec:cyclic-main}, we apply these facts to the set of variables $\{z_j\}$ defined above to obtain the main result of this section, Lemma~\ref{lem:cyclic-stats}, which states that any two cyclically distinct strings $x,y$ must differ in \emph{some} cyclic statistic of order $m\leq 6$. 

\subsection{Cyclic Statistics and the Fourier Transform}
\label{sec:fourier}

Consider two integer sequences $x$ and $y$ with identical cyclic statistics up to order $m$. We show that this relationship implies an equivalence between certain $m$-way products of Fourier coefficients.

\begin{lemma}\label{lem:fourier-similar}
    Let $x$ and $y$ be two integer sequences of length $k$ with identical $m\mathrm{th}$ order cyclic statistics. Consider their Fourier transforms $\hat{x}$ and $\hat{y}$. For any $m$-tuple $i_1,\ldots,i_m$ with $i_1+\cdots+i_m\equiv 0 \mod k$, we have $\hat{x}_{i_1}\cdots\hat{x}_{i_m} = \hat{y}_{i_1}\cdots\hat{y}_{i_m}$.
\end{lemma}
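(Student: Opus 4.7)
The plan is to expand $\hat{x}_{i_1}\cdots\hat{x}_{i_m}$ directly via the definition of the discrete Fourier transform and then regroup the resulting $m$-fold sum so that the hypothesis $i_1+\cdots+i_m\equiv 0\pmod{k}$ kills the one ``dangerous'' phase factor. What remains will be a \emph{fixed} linear combination (depending only on $i_1,\ldots,i_m$ and $k$) of $m$th order cyclic statistics of $x$; the same identity applied to $y$ then immediately yields the conclusion.

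Concretely, writing $\omega = e^{2\pi i/k}$ and $\hat{x}_j = \sum_{a=1}^{k} x_a\,\omega^{aj}$, expanding gives
\[
\hat{x}_{i_1}\cdots\hat{x}_{i_m} \;=\; \sum_{a_1,\ldots,a_m\in[k]} x_{a_1}\cdots x_{a_m}\,\omega^{\,a_1 i_1+\cdots+a_m i_m}.
\]
I would then change variables by writing $a_j = a_1 + c_j$ for $j=2,\ldots,m$, with each $c_j$ ranging over $\mathbb{Z}/k$. The exponent becomes $a_1(i_1+\cdots+i_m) + c_2 i_2+\cdots+c_m i_m$, and by hypothesis the first summand vanishes mod $k$. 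Thus
\[
\hat{x}_{i_1}\cdots\hat{x}_{i_m} \;=\; \sum_{c_2,\ldots,c_m\in\mathbb{Z}/k}\omega^{\,c_2 i_2+\cdots+c_m i_m}\sum_{a_1=1}^{k} x_{a_1}\,x_{a_1+c_2}\cdots x_{a_1+c_m},
\]
and the inner sum is precisely the cyclic statistic $S_{0,c_2,\ldots,c_m}(x)$ from the definition in Section~\ref{sec:defs}.

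The identical computation applied to $y$ yields the same expansion with $S_{0,c_2,\ldots,c_m}(y)$ in place of $S_{0,c_2,\ldots,c_m}(x)$. Since the hypothesis guarantees that these cyclic statistics agree term-by-term, the two products coincide. I do not expect a real obstacle: the whole argument is a single change of variables plus the observation that the exponent $\sum_j a_j i_j$ is invariant under the diagonal shift $(a_j)\mapsto(a_j+t)$ exactly when $\sum_j i_j\equiv 0\pmod{k}$, which is the content of the hypothesis. The only bookkeeping to double-check is that the conventions for the DFT and for the indexing of $S_{0,c_2,\ldots,c_m}$ line up with those used elsewhere in the paper, but this is just relabeling.
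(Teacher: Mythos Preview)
Your proposal is correct and is essentially identical to the paper's own proof: both expand the product of Fourier coefficients, perform the diagonal change of variables $a_j\mapsto a_1+c_j$ (the paper writes $j_r'=j_r-j_1$), use the hypothesis $\sum i_j\equiv 0\pmod k$ to eliminate the $a_1$-dependent phase, and identify the inner sum as $S_{0,c_2,\ldots,c_m}(x)$.
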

\begin{proof}
    Letting $j_2'\equiv j_2-j_1 \mod k \ldots,j_m'\equiv j_m-j_1 \mod k$, and interpreting indices mod $k$:
    \begin{align*}
        \hat{x}_{i_1}\cdots\hat{x}_{i_m}&=\left(\sum_{j_1=0}^{k-1}x_{j_1}e^{\frac{2\pi i}{k}j_1i_1}\right)\cdots\left(\sum_{j_m=0}^{k-1}x_{j_m}e^{\frac{2\pi i}{k}j_mi_m}\right)\\
        &=\left(\sum_{j_1=0}^{k-1}x_{j_1}e^{\frac{2\pi i}{k}j_1i_1}\right)\left(\sum_{j_2'=0}^{k-1}x_{j_2'+j_1}e^{\frac{2\pi i}{k}(j_2'+j_1) i_2}\right)\cdots\left(\sum_{j_m'=0}^{k-1}x_{j_m'+j_1}e^{\frac{2\pi i}{k}(j_m'+j_1) i_m}\right)\\
        &=\sum_{j_2',\ldots,j_m'=0}^{k-1}\left(e^{\frac{2\pi i}{k}(j_2'i_2+\cdots+j_m'i_m)}\sum_{j_1=0}^{k-1}x_{j_1}x_{j_2'+j_1}\ldots x_{j_m'+j_1}\right)\\
        &=\sum_{j_2',\ldots,j_m'=0}^{k-1} e^{\frac{2\pi i}{k}(j_2'i_2+\cdots+j_m'i_m)}S_{0,j_2',\ldots,j_m'}(x) 
    \end{align*}
    where the exponential terms vanish in the second-to-last equality because $i_1+\cdots+i_m\equiv 0 \mod k$ and so $e^{\frac{2\pi i}{k}j_1 i_1}\cdots e^{\frac{2\pi i}{k}j_1 i_m}=1$. Since $S_{0,j_2',\ldots,j_m'}(x) = S_{0,j_2',\ldots,j_m'}(y)$ for all $j_2',\ldots,j_m'$ by assumption, we have that the corresponding $m$-way products of Fourier terms are equal.
\end{proof}

\Cref{lem:fourier-similar} implies that for any $i_1,\ldots,i_m$ satisfying $i_1 + \cdots + i_m \equiv 0 \mod k$, if the Fourier coefficients $\hat{x}_{i_1},\ldots,\hat{x}_{i_m}$ are nonzero, then $\hat{y}_{i_1},\ldots,\hat{y}_{i_m}$ must be nonzero as well. We now partition the elements of  $[k]$ into equivalence classes based on their gcd with $k$.

\begin{definition}
    Let $\alpha$ be a divisor of $k$. We define $G_\alpha \coloneqq \{j \in [k] : \gcd(j,k) = \alpha\}$. 
\end{definition}

We show that for a single integer sequence $x$, the Fourier coefficients of $x$ in each $G_\alpha$ are either \textit{all} zero or \textit{all} nonzero, allowing us to analyze them together. This can be shown using a well-known property of cyclotomic fields: for every primitive $d$th root of unity of form $\omega_d = e^{2\pi i/d}$, the Galois group $\textrm{Gal}(\mathbb{Q}(\omega_d)/\mathbb{Q})$ is isomorphic to $(\mathbb{Z}/d\mathbb{Z})^\times \coloneqq \{j \in \{0,\ldots,d-1\} : \gcd(j,k) = 1\}$, the multiplicative group of integers mod $d$. The isomorphism is specified by \Cref{fact:cyclotomic}. 

\begin{fact}
    \label{fact:cyclotomic}
    Let $d \in \mathbb{N}$ and let $\omega_d \coloneqq e^{2\pi i/d}$ be a primitive $d$th root of unity. Consider $\mathbb{Q}(\omega_d)$, the field extension of $\mathbb{Q}$ generated by $\omega_d$. For all $a \in (\mathbb{Z}/d\mathbb{Z})^\times$, there exists an automorphism $\sigma_a \in \textrm{Gal}(\mathbb{Q}(\omega_d)/\mathbb{Q})$ such that $\sigma_a(\omega_d) = \omega_d^a$.\footnote{For more on cyclotomic fields, see \url{https://en.wikipedia.org/wiki/Cyclotomic_field}.}
\end{fact}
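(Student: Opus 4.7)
The plan is to derive this classical fact from the irreducibility of the cyclotomic polynomial $\Phi_d(x) \coloneqq \prod_{a \in (\mathbb{Z}/d\mathbb{Z})^\times}(x - \omega_d^a) \in \mathbb{Z}[x]$ over $\mathbb{Q}$. Granting irreducibility, $\Phi_d$ is the minimal polynomial of $\omega_d$, so $\mathbb{Q}(\omega_d) \cong \mathbb{Q}[x]/(\Phi_d(x))$ as $\mathbb{Q}$-algebras. Since $\omega_d^a$ is another root of $\Phi_d$ whenever $\gcd(a,d)=1$, the universal property of the quotient yields a $\mathbb{Q}$-algebra homomorphism $\sigma_a\colon \mathbb{Q}(\omega_d) \to \mathbb{Q}(\omega_d)$ sending $\omega_d \mapsto \omega_d^a$; as a nonzero $\mathbb{Q}$-linear endomorphism of a finite-dimensional $\mathbb{Q}$-vector space, it is automatically bijective and hence lies in $\mathrm{Gal}(\mathbb{Q}(\omega_d)/\mathbb{Q})$, giving the desired $\sigma_a$.

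The main obstacle is proving irreducibility of $\Phi_d$. I would use the standard reduction-mod-$p$ argument of Dedekind: suppose $\Phi_d = f\,g$ in $\mathbb{Z}[x]$ with $f$ monic, irreducible, and $f(\omega_d) = 0$, and show that $f(\omega_d^p) = 0$ for every prime $p \nmid d$. Iterating over the primes in the factorization of any $a$ coprime to $d$ then forces $f$ to have every root $\omega_d^a$ of $\Phi_d$, so $f = \Phi_d$. Suppose for contradiction that $f(\omega_d^p) \neq 0$ for some $p \nmid d$; then $g(\omega_d^p) = 0$, so $\omega_d$ is a root of $g(x^p)$, and since $f$ is the minimal polynomial of $\omega_d$ (and monic, so Gauss's lemma applies), $f(x)$ divides $g(x^p)$ in $\mathbb{Z}[x]$.

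Reducing mod $p$ and using the Frobenius identity $g(x^p) \equiv g(x)^p \pmod{p}$, we obtain $\bar f \mid \bar g^{\,p}$ in $\mathbb{F}_p[x]$, so $\bar f$ and $\bar g$ share an irreducible factor and $\overline{\Phi_d} = \bar f \bar g$ has a repeated root. But $\Phi_d$ divides $x^d - 1$, and the derivative $d\,x^{d-1}$ is coprime to $x^d - 1$ in $\mathbb{F}_p[x]$ (since $p \nmid d$); hence $x^d - 1$ is separable mod $p$, contradicting the existence of a repeated root of $\overline{\Phi_d}$. This establishes irreducibility of $\Phi_d$ and thus completes the proof of the Fact.
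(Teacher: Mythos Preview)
Your proof is correct and is the standard Dedekind argument for the irreducibility of $\Phi_d$, from which the existence of $\sigma_a$ follows immediately. The paper, however, does not prove this statement at all: it records it as a well-known fact about cyclotomic fields (with a footnote pointing to Wikipedia) and uses it as a black box. So there is no ``paper's own proof'' to compare against; you have simply supplied the textbook justification the paper omits.

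One minor remark on your iteration step: the argument as written establishes $f(\omega_d^p)=0$ for the specific root $\omega_d$ of $f$. To conclude that $f(\omega_d^a)=0$ for every $a$ coprime to $d$ by iterating over the prime factors of $a$, you implicitly use that the same argument applies with $\omega_d$ replaced by any root of $f$ (since $f$, being irreducible with that root, is again its minimal polynomial). This is routine, but worth making explicit if you want the write-up to be fully self-contained.
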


\begin{lemma}
    \label{lem:cyclotomic}
    Let $x$ be an integer sequence of length $k$ with Fourier transform $\hat{x}$. For any $\alpha \in [k]$ and $j,j' \in G_\alpha$, if $\hat{x}_j = 0$, then $\hat{x}_{j'} = 0$. 
\end{lemma}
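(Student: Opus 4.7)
The plan is to exhibit a single field automorphism that simultaneously sends $\hat{x}_j$ to $\hat{x}_{j'}$ and preserves the ground field $\mathbb{Q}$, so that $\hat{x}_j=0$ forces $\hat{x}_{j'}=0$. The key observation is that both $\hat{x}_j$ and $\hat{x}_{j'}$ naturally live in the same cyclotomic field $\mathbb{Q}(\omega_d)$, where $d \coloneqq k/\alpha$, and they are related by one of the Galois automorphisms guaranteed by \Cref{fact:cyclotomic}.

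First I would set up coordinates. Let $d=k/\alpha$, and note that $\omega_k^{\alpha} = e^{2\pi i/d} = \omega_d$ is a primitive $d\mathrm{th}$ root of unity. For any index $i \in G_\alpha$ we have $i = \alpha i_0$ for a unique $i_0 \in \{0,1,\ldots,d-1\}$ with $\gcd(i_0,d)=1$. Writing $j = \alpha j_0$ and $j' = \alpha j_0'$ in this way, and using that $\omega_k^{ji} = \omega_d^{j_0 i}$, I would express
\begin{equation*}
\hat{x}_j = \sum_{t=0}^{k-1} x_t\, \omega_d^{j_0 t}, \qquad \hat{x}_{j'} = \sum_{t=0}^{k-1} x_t\, \omega_d^{j_0' t}.
\end{equation*}
Since the $x_t$ are integers, both sums are elements of the ring $\mathbb{Z}[\omega_d] \subset \mathbb{Q}(\omega_d)$.

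Next I would use \Cref{fact:cyclotomic} to bridge the two expressions. Since $\gcd(j_0,d)=1$, the element $j_0$ is a unit in $\mathbb{Z}/d\mathbb{Z}$, so I can define $a \coloneqq j_0'\, j_0^{-1} \bmod d$. Because $j_0'$ is also a unit mod $d$, so is $a$, i.e., $a \in (\mathbb{Z}/d\mathbb{Z})^\times$. Let $\sigma_a \in \mathrm{Gal}(\mathbb{Q}(\omega_d)/\mathbb{Q})$ be the automorphism sending $\omega_d \mapsto \omega_d^{a}$ provided by the Fact. Applying $\sigma_a$ term-by-term (which is valid since each $x_t \in \mathbb{Q}$ is fixed), I would obtain
\begin{equation*}
\sigma_a(\hat{x}_j) = \sum_{t=0}^{k-1} x_t\, \omega_d^{a j_0 t} = \sum_{t=0}^{k-1} x_t\, \omega_d^{j_0' t} = \hat{x}_{j'}.
\end{equation*}

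Finally, since $\sigma_a$ is a field automorphism, it sends $0$ to $0$. Thus if $\hat{x}_j = 0$, then $\hat{x}_{j'} = \sigma_a(\hat{x}_j) = 0$, which is exactly the claim. There is no real obstacle here beyond getting the index arithmetic right and invoking \Cref{fact:cyclotomic} at the correct level $d = k/\alpha$ (rather than $k$ itself); the main subtlety is recognizing that elements of $G_\alpha$ are precisely those indices whose Fourier coefficient lies in the subfield $\mathbb{Q}(\omega_d)$ and corresponds to a unit multiple of a fixed generator, so that a single Galois automorphism permutes them transitively.
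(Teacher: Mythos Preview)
Your proposal is correct and follows essentially the same approach as the paper: set $d=k/\alpha$, recognize that $\hat{x}_j,\hat{x}_{j'}\in\mathbb{Q}(\omega_d)$, and use the Galois automorphism $\sigma_a$ from \Cref{fact:cyclotomic} with $a\equiv j_0' j_0^{-1}\pmod d$ to carry $\hat{x}_j$ to $\hat{x}_{j'}$. The paper additionally introduces an auxiliary folded sequence $z$ of length $d$ satisfying $\hat{z}_{j/\alpha}=\hat{x}_j$ before applying $\sigma_a$, but this intermediate step is not essential and your direct computation $\sigma_a(\hat{x}_j)=\sum_t x_t\,\omega_d^{aj_0 t}=\hat{x}_{j'}$ achieves the same end.
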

\begin{proof}
    Suppose $\hat{x}_j = 0$. Let $d \coloneqq \frac{k}{\alpha}$ and consider the a primitive $d$th root of unity $\omega_{d} \coloneqq e^{2\pi i/d}$. The $j$th Fourier coefficient of $x$ is given by 
    \begin{equation*}
        \hat{x}_j = \sum_{\ell=0}^{k-1} x_{\ell} e^{(2\pi i/k)j\ell} 
        = \sum_{\ell=0}^{k-1} x_{\ell} e^{(2\pi i/d) \cdot (d/k) \cdot j\ell}
        = \sum_{\ell=0}^{k-1} x_{\ell} \omega_d^{ (j/\alpha)\ell}
    \end{equation*}
    Let $z$ be a sequence of length $d$ generated by interpreting each index of $x$ mod $d$ and summing together all elements that map to a given location. (In other words, $z_j \coloneqq \sum_{\ell : \ell \equiv j \mod d} x_{\ell}$.) Then the Fourier transform of $z$ at location $\frac{j}{\alpha}$ is given by
    \begin{equation*}
        \hat{z}_{j/\alpha} 
        = \sum_{\ell=0}^{d-1} z_{\ell} \omega_{d}^{(j/\alpha)\ell} 
        = \sum_{\ell=0}^{k-1} x_{\ell} \omega_{d}^{(j/\alpha)(\ell\mod d)} 
        = \sum_{\ell=0}^{k-1} x_{\ell} \omega_{d}^{(j/\alpha)\ell}
        = \hat{x}_j
    \end{equation*}
    Similarly, $\hat{z}_{j'/\alpha} = \hat{x}_{j'}$. Because $z$ is an integer sequence, $\hat{z}_{j/\alpha}$ and $\hat{z}_{j'/\alpha}$ are elements of the cyclotomic field $\mathbb{Q}(\omega_{d/\alpha})$. Choose $b \in (\mathbb{Z}/d\mathbb{Z})^\times$ such that $b \frac{j}{\alpha}\equiv\frac{j'}{\alpha}\mod d$. 
    By \Cref{fact:cyclotomic}, there exists $\sigma_b \in \textrm{Gal}(\mathbb{Q}(\omega_d)/\mathbb{Q})$ such that $\sigma_b(\omega_{d}) = \omega_{d}^b$. Therefore, we have
    \begin{equation*}
        \hat{x}_{j'}
        = \hat{z}_{b \cdot (j/\alpha) \mod d} 
        = \sum_{\ell=0}^{d-1} z_{\ell} \cdot \omega_{d}^{b(j/\alpha)\ell} 
        = \sum_{\ell=0}^{d-1} z_{\ell} \cdot \sigma_b \left( \omega_{d}^{(j/\alpha)\ell} \right) 
        = \sigma_b(\hat{z}_{j/\alpha}) 
        = \sigma_b(\hat{x}_j)
        = 0
    \end{equation*}
    as desired.
\end{proof}

Consider two sequences $x$ and $y$ with identical second order cyclic statistics. As a simple consequence of Lemmas \ref{lem:fourier-similar} and \ref{lem:cyclotomic}, $\hat{x}$ and $\hat{y}$ are jointly zero or jointly nonzero everywhere in the indices belonging to a single $G_\alpha$. 

\begin{lemma}
    \label{lem:zero-or-nonzero}
    Let $x$ and $y$ be two integer sequences of length $k$ with identical second order cyclic statistics.
    %up to order $6$. 
    Then for each $\alpha \in [k]$, either (1) $\hat{x}_j \neq 0$ and $\hat{y}_j \neq 0$ for all $j \in G_\alpha$ or (2) $\hat{x}_j = \hat{y}_j = 0$ for all $j \in G_\alpha$. 
\end{lemma}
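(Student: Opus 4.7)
The plan is to deduce the dichotomy from the $m=2$ case of Lemma~\ref{lem:fourier-similar}, using the fact that $x$ and $y$ are real-valued, and then promote the resulting pointwise statement to the claimed $G_\alpha$-level statement via Lemma~\ref{lem:cyclotomic}.

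First, I would apply Lemma~\ref{lem:fourier-similar} with $m=2$, taking $i_1 = j$ and $i_2 = k - j$ so that $i_1 + i_2 \equiv 0 \pmod{k}$. This yields $\hat{x}_j \hat{x}_{-j} = \hat{y}_j \hat{y}_{-j}$ for every $j \in [k]$ (indices interpreted mod $k$). Next I would observe that because $x$ has integer (hence real) entries, its discrete Fourier transform satisfies the conjugate symmetry $\hat{x}_{-j} = \overline{\hat{x}_j}$, and likewise for $y$. Substituting, the identity becomes $|\hat{x}_j|^2 = |\hat{y}_j|^2$, so $|\hat{x}_j| = |\hat{y}_j|$ for every $j$. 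In particular, $\hat{x}_j = 0$ if and only if $\hat{y}_j = 0$.

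Second, I would invoke Lemma~\ref{lem:cyclotomic} applied to the integer sequence $x$: for every divisor $\alpha$ of $k$, the Fourier coefficients $\{\hat{x}_j : j \in G_\alpha\}$ are either all zero or all nonzero. Combining this with the pointwise equivalence $\hat{x}_j = 0 \iff \hat{y}_j = 0$ established above, the same ``all zero or all nonzero'' dichotomy transfers to $\{\hat{y}_j : j \in G_\alpha\}$ and, moreover, the two sequences agree on which alternative holds for each $\alpha$. This is exactly the conclusion of the lemma.

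I do not expect a serious obstacle here: the argument is essentially a two-line deduction from the earlier lemmas. The only point deserving explicit care is the use of conjugate symmetry $\hat{x}_{-j} = \overline{\hat{x}_j}$, which is standard but should be invoked explicitly since it is what converts the product identity from Lemma~\ref{lem:fourier-similar} into an equality of magnitudes, and hence into the zero/nonzero equivalence that drives the rest of the proof.
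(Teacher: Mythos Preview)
Your proposal is correct and takes essentially the same approach as the paper: both apply Lemma~\ref{lem:fourier-similar} with $m=2$ to the pair $(j,k-j)$ and then use Lemma~\ref{lem:cyclotomic} to propagate the zero/nonzero status across $G_\alpha$. Your use of conjugate symmetry $\hat{x}_{-j}=\overline{\hat{x}_j}$ to obtain $|\hat{x}_j|^2=|\hat{y}_j|^2$ directly is a slight streamlining over the paper, which instead notes that $k-j\in G_\alpha$ and invokes Lemma~\ref{lem:cyclotomic} to get $\hat{x}_{k-j}\neq 0$ before comparing products.
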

\begin{proof}
    Suppose $\hat{x}_j \neq 0$ for some $j \in G_\alpha$. By \Cref{lem:cyclotomic}, we must have $\hat{x}_{j'} \neq 0$ for all $j' \in G_\alpha$. Note that $k-j \in S_{\alpha}$, so $\hat{x}_{k-j} \neq 0$. Moreover, $j + (k-j) \equiv 0 \mod k$, so by \Cref{lem:fourier-similar}, we have $\hat{y}_j \cdot \hat{y}_{k-j} = \hat{x}_j \cdot \hat{x}_{k-j} \neq 0$. This requires that $\hat{y}_j \neq 0$. Applying \Cref{lem:cyclotomic} to $y$, we have that $\hat{y}$ is nonzero for all indices in $G_\alpha$, proving the first statement. 

    If $x_j = 0$, then similarly $\hat{y}_j \cdot \hat{y}_{k-j} = \hat{x}_j \cdot \hat{x}_{k-j} = 0$, requiring that either $\hat{y}_j = 0$ or $\hat{y}_{k-j} = 0$. By \Cref{lem:cyclotomic}, $\hat{x}$ and $\hat{y}$ must be zero everywhere in $G_\alpha$, proving the second statement.
\end{proof}

\subsection{Number Theoretic Prerequisites}
\label{sec:numtheory}

Recall that our goal is to analyze a system of constraints involving the nonzero Fourier coefficients of $x$ and $y$, where the constraints arise from the assumption that $x$ and $y$ have identical cyclic statistics up to order $6$. These constraints will turn out to take the form $z_{j_1} + \cdots + z_{j_m} \in \mathbb{Z}$, for a certain set of complex valued variables $\{z_j\}$ and certain $m$ tuples $j_1,\ldots,j_m$. We wish to show that any set of variables satisfying these constraints must be strictly real and form an arithmetic sequence mod $1$. In \Cref{lem:mod-relations}, we show this is true; we apply this fact in \Cref{sec:cyclic-main} to show that $x$ and $y$ are cyclically equivalent. 

Before proving this relation, we require certain number theoretic facts. First, we show that for any integer $d$ and $j \in \mathbb{Z}/d\mathbb{Z}$, $j$ can be written as a sum (mod $d$) of either 2 or 3 numbers relatively prime to $d$. Here and below, we use $\mathbb{Z}/d\mathbb{Z}$ to refer to the cyclic group of integers modulo $d$, represented as the set $\{0,\ldots,d-1\}$. 
%We use $(\mathbb{Z}/d\mathbb{Z})^\times$ to refer to the \emph{multiplicative} group of integers mod $d$. We use addition and multiplication notation to denote operations in these two groups, where applicable.%The two groups are not equivalent.

\begin{fact}\label{fact:2-or-3}
    The following facts are true for any prime power $p^a$ and integer $j$:
    \begin{itemize}
        \item If $p \neq 2$, there exists a pair $b_1,b_2 \in [p^a]$ that are relatively prime to $p$ such that $b_1+b_2\equiv j \mod p^a$. Also, there exists a triple $b_1,b_2,b_3 \in [p^a]$ that are relatively prime to $p$ such that $b_1+b_2+b_3\equiv j \mod p^a$.
    
        \item If $p = 2$ and $j$ is even, there exists a pair $b_1,b_2 \in [p^a]$ that are relatively prime to $p$ such that $b_1+b_2\equiv j \mod p^a$. 
    
        \item If $p = 2$ and $j$ is odd then there exists a triple $b_1,b_2,b_3 \in [p^a]$ that are relatively prime to $p$ such that $b_1+b_2+b_3\equiv j \mod p^a$.
    \end{itemize}
\end{fact}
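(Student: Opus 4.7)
My plan is to handle each of the three bullets by a short direct construction: an elementary counting argument for the odd-prime case, and explicit witnesses for $p = 2$. The unifying intuition is that ``coprime to $p$'' is a single excluded-residue condition modulo $p$, so unless a parity obstruction intervenes (which happens only when $p = 2$), there is ample room to pick the $b_i$.

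For an odd prime $p$ and target $j \in \mathbb{Z}/p^a\mathbb{Z}$, I would first handle the pair case. A value $b_1 \in [p^a]$ coprime to $p$ yields $b_2 := (j - b_1) \bmod p^a$ also coprime to $p$ iff $b_1 \not\equiv 0 \pmod p$ and $b_1 \not\equiv j \pmod p$. Thus $b_1$ must avoid at most two residue classes mod $p$, leaving at least $(p-2)\cdot p^{a-1} \geq 1$ admissible choices whenever $p \geq 3$. For the triple case I would reduce to the pair case: fix $b_3 = 1$ (which is coprime to $p$) and apply the pair construction with shifted target $j - 1 \pmod{p^a}$, obtaining $b_1,b_2$ coprime to $p$ with $b_1+b_2+b_3 \equiv j$.

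For $p = 2$, I would just exhibit witnesses. If $j$ is even, take $(b_1, b_2) = (1,\, (j-1) \bmod 2^a)$; both are odd, and they sum to $j$ modulo $2^a$. If $j$ is odd, take $(b_1, b_2, b_3) = (1, 1,\, (j-2) \bmod 2^a)$; all three are odd and sum to $j$ modulo $2^a$. The missing case---writing an odd $j$ as a sum of two odd numbers modulo $2^a$---is genuinely impossible on parity grounds, which is precisely why the three-term option is required here, and why the hypothesis splits on the parity of $j$ only when $p = 2$.

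There is no serious obstacle: once the right witnesses are identified, the verifications are immediate. The only subtle point is recognizing that the odd-prime counting yields $(p-2)\,p^{a-1}$ admissible $b_1$, so $p \geq 3$ is exactly the threshold needed for the argument to go through, and that the same counting fails at $p = 2$ precisely because of the parity obstruction that motivates the case split on odd vs.\ even $j$.
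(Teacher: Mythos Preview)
Your proposal is correct. For $p=2$ your witnesses $(1,\,j-1)$ and $(1,1,\,j-2)$ are exactly those of the paper. For odd $p$ the paper instead gives explicit witnesses: it observes that at least one of $j-1,\,j+1$ is coprime to $p$ (their difference is $2$, which is a unit mod $p$), so one of $(j-1,1)$ or $(j+1,\,p^a-1)$ works for pairs, and analogously one of $(j-2,1,1)$ or $(j+2,\,p^a-1,\,p^a-1)$ works for triples. Your counting argument---$b_1$ must avoid at most two residue classes mod $p$, leaving $(p-2)p^{a-1}\geq 1$ choices---reaches the same conclusion by a different, slightly less constructive route; it has the merit of making transparent exactly why $p\geq 3$ is the threshold, while the paper's version has the merit of naming the witnesses outright. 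Your reduction of the triple case to the pair case via $b_3=1$ is also a minor simplification over the paper's separate $j\pm 2$ argument.
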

\begin{proof}
    %Note that $b \in (\mathbb{Z}/p^a\mathbb{Z})^\times$ if and only if $b$ is relatively prime to $p$.

    First, consider the case in which $p \neq 2$. At least one of $j-1,j+1$ must be relatively prime to $p^a$ since their difference is 2, which itself is relatively prime to $p$. Since $j\equiv (j-1)+1 \mod p^a $, $j\equiv (j+1)+(p^a-1) \mod p^a$, and both $1$ and $p^a-1$ are relatively prime to $p^a$, at least one of these two expressions expresses $j$ as the sum of 2 numbers both relatively prime to $p^a$.
    Similarly, at least one of $j-2,j+2$ must be relatively prime to $p^a$ since their difference is 4, which itself is relatively prime to $p$. 
    %\neq 2$. 
    Thus since $j\equiv (j-2)+1+1 \mod p^a$ and $j\equiv (j+2)+(p^a-1)+(p^a-1) \mod p^a$, and since both $1$ and $p^a-1$ are relatively prime to $p^a$, at least one of these two expressions expresses $j$ as the sum of 3 numbers both relatively prime to $p^a$.
    
    For the case of $p=2$, we note that if $j$ is even, then $j-1$ is relatively prime to $2^a$ and thus we express $j= (j-1)+1$. Otherwise, if $j$ is odd, then $j-2$ is relatively prime to $2^a$, and we express $j=(j-2)+1+1$.
\end{proof}

\begin{lemma}
    \label{lem:mod-representation}
    Let $d \in \mathbb{N}$. The following is true for all $j \in \mathbb{Z}/d\mathbb{Z}$:
    \begin{itemize}
        \item If $d$ is even and $j$ is odd, then $j$ can be written as the sum mod $d$ of 3 numbers relatively prime to $d$. 
    
        \item Else %If 1) $d$ is odd or 2) $d$ and $j$ are both even, then 
        $j$ can be written as the sum mod $d$ of 2 numbers relatively prime to $d$. 
    \end{itemize}
\end{lemma}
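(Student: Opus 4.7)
The plan is to reduce to the prime power case (which is exactly \Cref{fact:2-or-3}) by using the Chinese Remainder Theorem. Write $d = p_1^{a_1}\cdots p_r^{a_r}$. Recall that an integer $b$ is coprime to $d$ if and only if $b \bmod p_i^{a_i}$ is coprime to $p_i$ for every $i$, and that CRT gives a ring isomorphism $\mathbb{Z}/d\mathbb{Z} \cong \prod_i \mathbb{Z}/p_i^{a_i}\mathbb{Z}$ in which addition is coordinatewise.

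Given $j$, I would first decide, uniformly across all prime powers, whether I am aiming for a 2-term or a 3-term decomposition. I would split into three cases: (i) $d$ odd; (ii) $d$ even and $j$ even; (iii) $d$ even and $j$ odd. In cases (i) and (ii), I would aim for a 2-term decomposition, using \Cref{fact:2-or-3} to decompose $j \bmod p_i^{a_i}$ as $b_1^{(i)} + b_2^{(i)}$ with each summand coprime to $p_i$. For odd $p_i$ this is available unconditionally, and when $p_i = 2$ I am in case (ii), where $j$ is even and the corresponding bullet of \Cref{fact:2-or-3} applies. In case (iii) I would aim for a 3-term decomposition: for odd $p_i$, the first bullet of \Cref{fact:2-or-3} gives a 3-term decomposition, and for $p_i = 2$ with $j$ odd, the third bullet does. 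Then I would set $b_t \in \mathbb{Z}/d\mathbb{Z}$ (for $t = 1, 2$, or $t = 1,2,3$) to be the unique element whose reduction modulo each $p_i^{a_i}$ equals $b_t^{(i)}$, using CRT. By construction $\sum_t b_t \equiv j \pmod d$ and each $b_t$ is coprime to every $p_i$, hence to $d$.

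The only real subtlety is ensuring the number of summands can be chosen uniformly across all prime factors simultaneously, which is exactly why \Cref{fact:2-or-3} provides \emph{both} a 2-term and a 3-term decomposition for odd prime powers: whenever the 2-part forces us into a 3-term decomposition (case (iii)), we still need 3-term decompositions at the odd prime powers to line up the CRT. Beyond this bookkeeping, the proof is a short case analysis.
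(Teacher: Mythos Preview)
Your proposal is correct and follows essentially the same approach as the paper: factor $d$ into prime powers, decide uniformly (based on the parities of $d$ and $j$) whether to target a 2-term or 3-term decomposition, apply \Cref{fact:2-or-3} at each prime power, and glue via CRT. The paper's proof is the same argument with the same case split.
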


\begin{proof}
    Consider the prime factorization of $d$: $d=p_1^{a_1}\cdot\ldots\cdot p_\ell^{a_\ell}$. Given $j\in \mathbb{Z}/d\mathbb{Z}$, we use the Chinese remainder theorem to express $j$ via its residues mod $p_i^{a_i}$, for each prime factor $p_i$. Define $j_i \coloneqq j\mod p_i^{a_i}$. We note that $j$ is relatively prime to $d$ if and only if each $j_i$ is relatively prime to its corresponding prime power $p_i^{a_i}$.

    We use Fact~\ref{fact:2-or-3} (and the Chinese remainder theorem) to canonically represent each $j\in \mathbb{Z}/d\mathbb{Z}$ as the sum mod $d$ of either 2 or 3 numbers that are each relatively prime to $d$. If either $d$ is odd or $j$ and $d$ are both even, we represent $j$ as the sum of 2 numbers, by applying Fact~\ref{fact:2-or-3} to represent each $j_i$ as $j_i \equiv b_i^{(1)} + b_i^{(2)} \mod p_i^{a_i}$ with $b_i^{(1)}$ and $b_i^{(2)}$ relatively prime to $p_i^{a_i}$. By the Chinese remainder theorem, the system, $b^{(1)} \equiv b_i^{(1)} \mod p_i^{a_i}, \forall i \in [\ell]$, has a unique solution $b^{(1)}$ relatively prime to $d$; similarly, there exists a unique $b^{(2)}$ relatively prime to $d$ satisfying $b^{(2)} \equiv b_i^{(2)} \mod p_i^{a_i}$ for all $i$. Finally, $b^{(1)} + b^{(2)} \equiv j \mod d$, which proves the first case. If $j$ is odd and $d$ is even, we use the analogous procedure to represent $j$ as the sum of 3 numbers relatively prime to $d$, which proves the second case. 
\end{proof}

Using the representation given by \Cref{lem:mod-representation}, we analyze the system of complex-valued linear equations described at the beginning of this subsection.

\begin{lemma}\label{lem:mod-relations}
Let $d \in \mathbb{N}$ and $z_j \in \mathbb{C}$ for each $j \in [d]$ relatively prime to $d$. Suppose that for every $m \le 6$, and any $m$-tuple $(j_1, \ldots, j_m)$ relatively prime to $d$ satisfying $j_1 + \cdots + j_m \equiv 0 \pmod d$, we have $z_{j_1} + \cdots + z_{j_m} \in \mathbb{Z}$. Then there exists an integer $c$ such that for all $j$ relatively prime to $d$ we have $z_j - \frac{cj}{d} \in \mathbb{Z}$.
\end{lemma}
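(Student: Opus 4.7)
First I would show each $z_j$ is real. Let $y_j = \mathrm{Im}(z_j)$; the hypothesis yields $\sum_i y_{j_i} = 0$ in $\mathbb{R}$ whenever $\sum_i j_i \equiv 0 \pmod d$ and all $j_i$ are coprime to $d$. The $4$-sum constraint shows that $y_a + y_b$ depends only on $a + b \bmod d$ over coprime $2$-decompositions $s = a+b$. By Lemma~\ref{lem:mod-representation}, this defines a function $Y$ on $\mathbb{Z}/d\mathbb{Z}$ when $d$ is odd and on $2\mathbb{Z}/d\mathbb{Z}$ when $d$ is even. The $6$-sum constraint makes $Y$ additive; since the domain is finite while $\mathbb{R}$ is torsion-free, $Y \equiv 0$, so $y_j = 0$ for all $j$.

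Next, set $g(j) = z_j \bmod \mathbb{Z}$ and view $g$ as a map into $\mathbb{R}/\mathbb{Z}$. \emph{For $d$ odd}, the $3$-sum on $(a, b, -(a+b))$ (with all three coprime to $d$), combined with $g(-j) = -g(j)$ from the $2$-sum, gives $g(a+b) = g(a) + g(b) \bmod \mathbb{Z}$. I would extend to $G : \mathbb{Z}/d\mathbb{Z} \to \mathbb{R}/\mathbb{Z}$ by $G(s) = g(a)+g(b)$ for any coprime $2$-decomposition of $s$; existence uses Lemma~\ref{lem:mod-representation}, well-definedness uses the $4$-sum, and additivity uses the $6$-sum. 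Since $G$ is a homomorphism from a cyclic group of order $d$ into $\mathbb{R}/\mathbb{Z}$, we get $G(s) = cs/d \bmod \mathbb{Z}$ for some integer $c$. The $3$-sum identity forces $g(s) = G(s)$ for $s$ coprime to $d$, concluding the odd case.

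\emph{The even case} is where the main obstacle lies. Every coprime index is now odd, so $\sum j_i \equiv 0 \pmod d$ forces $m$ even, leaving only $m \in \{2, 4, 6\}$ as effective. Defining $G$ on $2\mathbb{Z}/d\mathbb{Z}$ analogously gives $G(s) = cs/d \bmod \mathbb{Z}$ for an integer $c$, and $G(2j) = 2g(j)$ yields $g(j) = cj/d + \epsilon_j$ with $\epsilon_j \in \{0, 1/2\}$. I must show $\epsilon$ is constant on coprime indices. The $2$-sum gives $\epsilon_{-j} = \epsilon_j$; to link $\epsilon_j$ with $\epsilon_{j'}$, I invoke the $6$-sum $(j, j', a, a, b, b)$ with $a+b \equiv -(j+j')/2 \pmod{d/2}$ and $a, b$ coprime to $d$, which collapses to $\epsilon_j = \epsilon_{j'}$ in $\mathbb{F}_2$. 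Such $a, b$ exist via Lemma~\ref{lem:mod-representation} whenever one of $-(j+j')/2$ or $-(j+j')/2 + d/2$ (reduced mod $d$) is even: this always holds when $d \equiv 2 \pmod 4$, and when $4 \mid d$ it holds exactly when $j \not\equiv j' \pmod 4$. The remaining case $j \equiv j' \pmod 4$ is resolved by chaining through an intermediate coprime index of opposite residue mod $4$, which exists since $d-1$ has opposite residue mod $4$ from $1$ when $4 \mid d$.

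Finally, if $\epsilon \equiv 1/2$ uniformly, replacing $c$ by $c + d/2$ absorbs it: for $j$ odd, $(c + d/2) j / d \equiv cj/d + 1/2 \pmod{\mathbb{Z}}$. In all cases $z_j - cj/d \in \mathbb{Z}$ for some integer $c$, as required.
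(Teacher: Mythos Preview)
Your proof is correct, but it takes a rather different route from the paper's. You first separate real and imaginary parts, build an explicit group homomorphism $G$ into $\mathbb{R}/\mathbb{Z}$, and case-split on the parity of $d$, with the even case requiring a further $\{0,\tfrac12\}$-ambiguity argument and a chaining step when $4\mid d$. The paper instead works directly with the complex $z_j$ and avoids any parity split: using the canonical $2$- or $3$-term representation $[[j]]$ from Lemma~\ref{lem:mod-representation}, it writes down three families of constraints (each a sum of at most six $z$'s, the key one being $z_{[[j]]}+z_1+z_{d-[[j+1]]}\in\mathbb{Z}$) which combine to give $z_{[[j+1]]}-z_{[[j]]}\equiv z_1\pmod{\mathbb{Z}}$ for every $j$. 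Thus $z_{[[j]]}\equiv jz_1$, and summing around the cycle forces $dz_1\in\mathbb{Z}$, i.e.\ $z_1=c/d$; agreement $z_j\equiv z_{[[j]]}$ for coprime $j$ then finishes. The paper's argument is shorter and uniform across parities, while yours makes the underlying homomorphism structure more explicit and may be easier to generalize to other constraint patterns; both ultimately rely on Lemma~\ref{lem:mod-representation} to manufacture tuples of length at most six.
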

\begin{proof}
For all $j \in \mathbb{Z}/d\mathbb{Z}$, let $[[j]]$ denote the canonical representation of $j$ given by \Cref{lem:mod-representation} as the sum mod $d$ of either 2 or 3 numbers relatively prime to $d$ (so $[[j]]$ is a set of size 2 or 3). Let $z_{[[j]]}$ denote the sum of the 2 or 3 corresponding variables. For convenience, we write the set $d - [[j]] \coloneqq \{d - j': j' \in [[j]]\}$. Then the conditions of the lemma imply the following set of conditions:
\begin{align*}
    &\forall j \text{ relatively prime to } d:\;z_j+z_{d-j}\in \mathbb{Z}, \\
    &\forall j \text{ relatively prime to } d:\;z_{[[j]]}+z_{d-j}\in \mathbb{Z}, \\
    &\forall j\in \mathbb{Z}/d\mathbb{Z}:\; z_{[[j]]}+z_1+z_{d-[[j+1]]} \in \mathbb{Z}
\end{align*} 
To see the implication, it suffices to show that the left hand side of each constraint is a sum of at most 6 real variables whose indices are relatively prime to $d$. For all $j$ with $\gcd(j,d) = 1$, we have $\gcd(d-j,d) = 1$, so the first constraint satisfies the condition. For all $j$, $z_{[[j]]}$ can be written as a sum of 2 or 3 variables whose indices are relatively prime to $d$, so the second constraint requires at most 4 variables. Finally, if $d$ is odd, the representation $[[j]]$ comprises 2 variables for all $j$, in which case the third constraint requires 5 variables. If $d$ is even, the values $j$ and $j+1$ comprise 1 even number and 1 odd number, and $d - [[j+1]]$ has the same parity as $j+1$, resulting in at most 6 terms. 

Note that these three sets of conditions collectively imply that $z_{[[j]]}+z_1- z_{[[j+1]]} \in\mathbb{Z}$ for all $j \in \mathbb{Z}/d\mathbb{Z}$. Therefore, any set of solutions $z_{[[j]]}$ must form an arithmetic sequence of increment $z_1$ plus some integer, i.e., $z_{[[j]]} - jz_1 \in\mathbb{Z}$ for all $j$. Therefore, $dz_1 \in \mathbb{Z}$, which requires that $z_1$ is a multiple of $\frac{1}{d}$ (and is therefore real). Writing $z_1 = \frac{c}{d}$ for some integer $c$, we have $z_{[[j]]} \equiv \frac{cj}{d} \mod 1$ for all $j \in \mathbb{Z}/d\mathbb{Z}$. Finally, the first and second conditions imply that $z_j = z_{[[j]]}$ for all $j$ relatively prime to $d$, which proves the lemma. 
\end{proof}

The following technical lemma is used in the proof of the main result of this section, Lemma~\ref{lem:cyclic-stats}. Intuitively, in the proof of Lemma~\ref{lem:cyclic-stats} we aim to show that variables $z_0,\ldots,z_{k-1}$ form an arithmetic sequence. Instead we show that, for each value $\alpha$, those $z_j$ with index $j$ satisfying $gcd(j,k)=\alpha$ form an arithmetic sequence of increment $c_\alpha$. We use the following lemma to show that this implies there is a consistent arithmetic sequence across all $j$, with some increment $c$.

\begin{definition}
    For a positive integer $a \in \mathbb{N}$ and prime number $p$, define the \textit{$p$-adic valuation} of $a$ to be  $\nu_p(a) \coloneqq \max\{\ell \in \mathbb{Z}_{\geq 0} : p^{\ell} \mid a\}$. 
\end{definition}

\begin{lemma}
\label{lem:unique-c}
    Let $k \in \mathbb{N}$ and $A \subseteq [k]$. Let $c_\alpha \in \mathbb{Z}$ for each $\alpha \in A$, and suppose the following are true for all pairs $\alpha, \alpha' \in A$:
    \begin{enumerate}
        \item $\lcm(\alpha,\alpha') \in A$.
        \item $(c_{\alpha'} - c_{\alpha}) \cdot \lcm(\alpha,\alpha') \equiv 0 \mod k$
    \end{enumerate}
    Then there exists $c \in \mathbb{Z}$ such that $(c - c_\alpha)\alpha \equiv 0 \mod k$ for all $\alpha \in A$. 
\end{lemma}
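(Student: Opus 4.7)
The plan is to build $c$ prime-by-prime via the Chinese Remainder Theorem. First, I would translate the desired conclusion $(c - c_\alpha)\alpha \equiv 0 \pmod k$ into the equivalent congruence $c \equiv c_\alpha \pmod{k/\alpha}$, valid when $\alpha \mid k$, as holds in the lemma's intended application (where $A$ consists of divisors of $k$, coming from the classes $G_\alpha$ of Section~\ref{sec:fourier}). The problem becomes finding a single $c$ simultaneously satisfying this congruence for every $\alpha \in A$.

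For each prime $p$, I would choose $\alpha^*_p \in A$ minimizing $\nu_p(\alpha)$---equivalently, maximizing $\nu_p(k/\alpha)$---and impose the single constraint $c \equiv c_{\alpha^*_p} \pmod{p^{\nu_p(k/\alpha^*_p)}}$. Because these constraints live at pairwise coprime prime-power moduli, the ordinary Chinese Remainder Theorem combines them into one integer $c$.

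It remains to verify $c \equiv c_\alpha \pmod{p^{\nu_p(k/\alpha)}}$ for every prime $p$ and every $\alpha \in A$, which together will yield the required $c \equiv c_\alpha \pmod{k/\alpha}$. Two observations suffice. Since $\nu_p(\alpha^*_p) \le \nu_p(\alpha)$, we have $\nu_p(k/\alpha^*_p) \ge \nu_p(k/\alpha)$, so by construction $c \equiv c_{\alpha^*_p} \pmod{p^{\nu_p(k/\alpha)}}$. And hypothesis (2) applied to the pair $(\alpha^*_p, \alpha)$ states $(c_\alpha - c_{\alpha^*_p})\lcm(\alpha^*_p,\alpha) \equiv 0 \pmod k$; using $\nu_p(\lcm(\alpha^*_p,\alpha)) = \max(\nu_p(\alpha^*_p), \nu_p(\alpha)) = \nu_p(\alpha)$, this forces $\nu_p(c_\alpha - c_{\alpha^*_p}) \ge \nu_p(k) - \nu_p(\alpha) = \nu_p(k/\alpha)$, so $c_{\alpha^*_p} \equiv c_\alpha \pmod{p^{\nu_p(k/\alpha)}}$. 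Chaining the two congruences proves the claim at each prime, and re-assembling across primes finishes the proof.

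The main obstacle is careful bookkeeping of $p$-adic valuations, in particular the subtlety that $\alpha^*_p$ must be chosen separately for each prime $p$ (minimizing $\nu_p$ locally) rather than taken as a single globally chosen element---the example $k=12$, $A=\{3,4,12\}$ with $c_3=c_4=0$, $c_{12}=1$ shows that no fixed $c_\alpha$ for $\alpha \in A$ need satisfy all conclusions simultaneously. Hypothesis (1) (the $\lcm$-closure of $A$) does not explicitly appear in this prime-by-prime argument; it is a natural feature of the $A$ arising in the intended application, and would be used instead in an alternative inductive proof that removes a divisibility-minimal $\alpha^* \in A$ at each step.
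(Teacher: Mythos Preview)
Your proposal is correct and follows essentially the same approach as the paper's proof: factor $k$ into prime powers, for each prime $p$ pick the $\alpha \in A$ of minimal $p$-adic valuation, set $c$ to agree with $c_{\alpha^*_p}$ modulo the relevant prime power via CRT, and verify the remaining congruences using hypothesis~(2) and the identity $\nu_p(\lcm(\alpha^*_p,\alpha)) = \nu_p(\alpha)$. Your observation that hypothesis~(1) is never invoked is also accurate---the paper's proof likewise makes no use of the $\lcm$-closure of $A$.

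One small remark: your initial reformulation $(c-c_\alpha)\alpha \equiv 0 \pmod k \Leftrightarrow c \equiv c_\alpha \pmod{k/\alpha}$ requires $\alpha \mid k$, an assumption you correctly flag as holding in the intended application but not in the lemma as stated. The paper's proof sidesteps this by working directly with the condition $(c-c_\alpha)\alpha \equiv 0 \pmod{p_i^{a_i}}$ at each prime, which is well-defined for any $\alpha \in [k]$. If you want your write-up to match the lemma's generality, simply replace $k/\alpha$ by $k/\gcd(k,\alpha)$ throughout (equivalently, replace $\nu_p(k/\alpha)$ by $\max(0,\nu_p(k)-\nu_p(\alpha))$); the rest of your argument goes through unchanged.
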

\begin{proof}
    Consider the prime factorization of $k$, $k = p_1^{a_1}\ldots p_{\ell}^{a_{\ell}}$. For each $i \in [\ell]$, we will show there exists $c_i$ satisfying the desired property mod $p_i^{a_i}$, and conclude the lemma by the Chinese remainder theorem. 
    
    Fix $i \in [\ell]$. Define $\alpha_i$ to be the element of $A$ with the lowest $p_i$-adic valuation: $\alpha_i = \argmin_{\alpha \in A} \nu_{p_i}(\alpha)$. Let $c_i$ be an integer satisfying $(c_i - c_{\alpha_i})\alpha_i \equiv 0 \mod p_i^{a_i}$, and consider any $\alpha' \in A$. By a well-known property of $p$-adic valuations, we have $\nu_{p_i}(\lcm(\alpha',\alpha_i)) = \max\{\nu_{p_i}(\alpha'), \nu_{p_i}(\alpha_i)\} = \nu_{p_i}(\alpha')$. The prime power $p_i^{\nu_{p_i}(\alpha')}$ divides $\alpha'$, so it also divides $\lcm(\alpha', \alpha_i)$. Therefore, we have the following two facts:
    \begin{align*}
        (c_{\alpha_i} - c_{\alpha'})p^{{\nu_{p_i}(\alpha')}} &\equiv 0 \mod p_i^{a_i} \tag{by assumption (2) of the lemma} \\
        (c_i - c_{\alpha_i}) p_i^{\nu_{p_i}(\alpha_i)} &\equiv 0 \mod p_i^{a_i} \tag{by definition of $c_i$}
    \end{align*}
    Using the fact that $\nu_{p_i}(\alpha') \geq \nu_{p_i}(\alpha_i)$ to combine the two equations, we have $(c_i - c_{\alpha'}) p_i^{\nu_{p_i}(\alpha')} \equiv 0 \mod p_i^{a_i}$. 
    Finally, we multiply both sides by the integer $\alpha'p_i^{-\nu_{p_i}(\alpha')}$ to yield $(c_i - c_{\alpha'}) \alpha' \equiv 0 \mod p_i^{a_i}$. 
    Choosing $c_i = c_{\alpha_i}$ for all $i$, the equation above holds for all pairs $i,\alpha$. We now wish to find $c$ such that $c \equiv c_i \mod p_i^{a_i}$ for all $i$. Such an integer $c$ is guaranteed to exist by the Chinese remainder theorem, which completes the proof.
\end{proof}

\subsection{Main Characterization}
\label{sec:cyclic-main}

We now prove the main result of this section. 

\begin{lemma}
    \label{lem:cyclic-stats}
    Let $x$ and $y$ be two integer sequences of length $k$ with identical cyclic statistics up to order $6$. Then $x$ and $y$ must be identical up to a cyclic shift. 
\end{lemma}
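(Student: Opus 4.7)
The plan is to translate the matching-cyclic-statistics hypothesis into multiplicative identities on Fourier coefficients via \Cref{lem:fourier-similar}, convert these into additive linear constraints on logarithms, then extract a single integer $c$ using \Cref{lem:mod-relations} (within each group) and \Cref{lem:unique-c} (gluing across groups), and finally read off from Fourier inversion that $y$ is the cyclic shift of $x$ by $c$.

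First I would apply \Cref{lem:fourier-similar}: for every $m\le 6$ and every $m$-tuple with $\sum_s i_s\equiv 0\pmod k$, $\prod_s \hat{x}_{i_s}=\prod_s \hat{y}_{i_s}$. Taking $m=2$ and the pair $(j,-j)$, together with $\hat{x}_{-j}=\overline{\hat{x}_j}$ (since $x$ is real-valued), yields $|\hat{x}_j|=|\hat{y}_j|$; combined with \Cref{lem:zero-or-nonzero} the supports of $\hat{x}$ and $\hat{y}$ coincide and form a union $\bigcup_{\alpha\in A}G_\alpha$ for some set $A$ of divisors of $k$. On this common support I define $z_j\in\mathbb{R}/\mathbb{Z}$ by $\hat{x}_j=\hat{y}_j\,e^{2\pi i z_j}$, so the product identities become additive constraints $\sum_s z_{i_s}\equiv 0\pmod 1$ over all valid tuples.

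Next I would handle each group in isolation. For each $\alpha\in A$, the elements of $G_\alpha$ are exactly $\{\alpha j : j\in(\mathbb{Z}/(k/\alpha)\mathbb{Z})^\times\}$, and a tuple in $(\mathbb{Z}/(k/\alpha)\mathbb{Z})^\times$ summing to $0\pmod{k/\alpha}$ rescales by $\alpha$ to a valid tuple in $G_\alpha$ summing to $0\pmod k$; so the rescaled variables $w_j\coloneqq z_{\alpha j}$ satisfy the hypotheses of \Cref{lem:mod-relations} with $d=k/\alpha$, yielding $c_\alpha\in\mathbb{Z}$ with $z_j\equiv c_\alpha j/k\pmod 1$ for every $j\in G_\alpha$.

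The final step is to glue the $c_\alpha$'s using \Cref{lem:unique-c}. For each pair $\alpha,\alpha'\in A$ I plan to exhibit an at-most-$6$-term cross-group relation $\sum_s z_{j_s}\equiv 0\pmod 1$ which, after substituting the within-group formulas and simplifying using $\sum_s j_s\equiv 0\pmod k$, collapses to $(c_\alpha-c_{\alpha'})\cdot\lcm(\alpha,\alpha')\equiv 0\pmod k$, which is hypothesis (2) of \Cref{lem:unique-c}. The canonical construction uses $\alpha'/\gcd(\alpha,\alpha')$ copies of a suitable index in $G_\alpha$ together with $\alpha/\gcd(\alpha,\alpha')$ copies of a suitable index in $G_{\alpha'}$, giving $(\alpha+\alpha')/\gcd(\alpha,\alpha')$ terms total---at most $6$ whenever the compatibility is non-vacuous. \Cref{lem:unique-c} then produces a single $c\in\mathbb{Z}$ with $z_j\equiv cj/k\pmod 1$ on the entire support, so $\hat{x}_j=\hat{y}_j\,e^{2\pi i cj/k}$ for every $j$, and Fourier inversion identifies $y$ with the cyclic shift of $x$ by $c$. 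The main obstacle will be this cross-group step: because $A$ need not be closed under $\lcm$---one can construct integer sequences whose Fourier support meets $G_\alpha$ and $G_{\alpha'}$ yet entirely misses $G_{\lcm(\alpha,\alpha')}$---we cannot freely route compatibility through a common-refinement group, and the short relation must be assembled purely from groups already in $A$, which is precisely where the budget of $6$ terms is needed.
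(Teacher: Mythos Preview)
Your overall architecture matches the paper's exactly: translate matching cyclic statistics into Fourier product identities via \Cref{lem:fourier-similar}, pass to additive constraints on the phases $z_j$, extract a $c_\alpha$ within each group $G_\alpha$ via \Cref{lem:mod-relations}, and glue using \Cref{lem:unique-c}. The within-group step is correct, and you have correctly identified the cross-group step as the crux.

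The gap is that your proposed cross-group relation does not fit in the order-$6$ budget. You take $\alpha'/\gcd(\alpha,\alpha')$ copies of an index in $G_\alpha$ together with $\alpha/\gcd(\alpha,\alpha')$ copies of one in $G_{\alpha'}$, for $(\alpha+\alpha')/\gcd(\alpha,\alpha')$ terms total, and assert this is at most $6$ ``whenever the compatibility is non-vacuous.'' This is false: with $k=1000$, $\alpha=2$, $\alpha'=5$ (both of which can lie in $A$) one gets $(\alpha+\alpha')/\gcd=7$, yet $\lcm(\alpha,\alpha')=10$ and the target constraint $(c_2-c_5)\cdot 10\equiv 0\pmod{1000}$ is genuinely restrictive. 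So your relation overruns the $6$-term budget precisely where it is needed, and you have not supplied an alternative.

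The paper's fix is to invoke \Cref{lem:mod-representation} a second time, now at the cross-group level. Since $\ell\coloneqq\lcm(\alpha,\alpha')$ is a multiple of $\alpha$, the integer $\ell/\alpha$ can be written as a sum mod $k/\alpha$ of $2$ or $3$ integers coprime to $k/\alpha$; scaling by $\alpha$ represents $\ell$ as a sum mod $k$ of $2$ or $3$ elements of $G_\alpha$, call this set $[[\ell]]$. Repeat for $\alpha'$ to get $[[\ell]]'\subset G_{\alpha'}$. Then $[[\ell]]\cup(k-[[\ell]]')$ is a $4$-to-$6$ element tuple in $G_\alpha\cup G_{\alpha'}$ summing to $0\pmod k$, and substituting $z_j\equiv c_\alpha j/k$ on $G_\alpha$ and $z_j\equiv c_{\alpha'}j/k$ on $G_{\alpha'}$ collapses the resulting constraint to exactly $(c_\alpha-c_{\alpha'})\ell\equiv 0\pmod k$. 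As a side remark, hypothesis~(1) of \Cref{lem:unique-c} is never actually used in its proof, so your concern about $A$ failing to be $\lcm$-closed is not an obstacle.
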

\begin{proof}
    It suffices to show that there exists an integer $c$ such that $\hat{x}_j = \hat{y}_j \cdot \exp\left(2\pi i \cdot \frac{cj}{k}\right)$ for all $j \in [k]$. By \Cref{lem:zero-or-nonzero}, this property holds for all $j$ where $\hat{x}_j = 0$. Therefore, we consider only the divisors $\alpha$ such that the Fourier coefficients are nonzero everywhere in $G_\alpha$ for both $x$ and $y$. For all $j$ with nonzero Fourier coefficients $\hat{x}_j$ and $\hat{y}_j$, define $z_j \coloneqq \frac{1}{2\pi i} \cdot \log \left( \frac{\hat{x}_j}{\hat{y}_j} \right)$, where $\log$ is the unique complex-valued function satisfying $e^{\log z} = z$ and $\textrm{Im} \log z \in (0,2\pi]$ for all $z \in \mathbb{C} \backslash \{0\}$. To prove the lemma, we show that there exists an integer $c$ such that for all $j$ with $\hat{x}_j \neq 0$ and $\hat{y}_j \neq 0$, $\textrm{Im} z_j = 0$ and $z_j \equiv \frac{cj}{k} \mod 1$. 
    
    Fix $\alpha \in [k]$ and let $d \coloneqq \frac{k}{\alpha}$. Note that for all $j \in S_{\alpha}$, $\gcd\left(\frac{j}{\alpha}, d\right) = 1$. Therefore, each $j \in G_\alpha$ corresponds to a unique number relatively prime to $d$. Consider any $m$-tuple $j_1,\ldots,j_m \in G_\alpha$ such that $\frac{j_1}{\alpha} + \ldots + \frac{j_m}{\alpha} \equiv 0 \mod d$. By \Cref{lem:fourier-similar}, we have $z_{j_1} + \ldots + z_{j_m} \in \mathbb{Z}$. Applying \Cref{lem:mod-relations} (and using the fact that $k = d\alpha$), there exists an integer $c_\alpha$ such that for all $j \in G_\alpha$, we have $z_j - \frac{c_{\alpha}j}{k} \in \mathbb{Z}$. In other words, $z_j$ is real and $z_j \equiv \frac{c_{\alpha}j}{k} \mod 1$. 

    We have shown that for all $G_\alpha$ with nonzero Fourier coefficients, we can write $kz_j \equiv c_\alpha j \mod k$. We will now show that this is true for a \textit{single} consistent integer $c = c_\alpha$ across all $\alpha$. Consider $\alpha\neq\alpha'$ where $\hat{x}$ and $\hat{y}$ are nonzero everywhere in $G_\alpha \cup S_{\alpha'}$. Let $c_\alpha,c_{\alpha'}$ respectively be the corresponding coefficients, and $\ell=\lcm(\alpha,\alpha')$. Since $\ell$ is a multiple of $\alpha$, \Cref{lem:mod-representation} implies that we can write $\ell/\alpha$ as a sum mod $k/\alpha$ of 2 or 3 numbers relatively prime to $k/\alpha$. Multiplying by $\alpha$, we can write $\ell$ as a sum mod $k$ of 2 or 3 numbers whose gcd with $k$ is $\alpha$; denote this representation as $[[\ell]]$. Since $\ell$ is also a multiple of $\alpha'$, it must also have a representation $[[\ell]]'$ as a sum mod $k$ of 2 or 3 numbers whose gcd with $k$ is $\alpha'$.
    
    We write $z_{[[\ell]]}$ and $z_{[[\ell']]}$ to denote the sum of the variables corresponding to the indices in $[[\ell]]$ and $[[\ell']]$ respectively, and use the convention that $k-[[\ell]] = \{k - j : j \in [[\ell]]\}$. Since $\frac{j}{\alpha} + \frac{k-j}{\alpha} \equiv 0 \mod \frac{k}{\alpha}$ for all $j \in G_\alpha$, we have $z_{\ell} + z_{k-[[\ell]]} \equiv 0 \mod 1$ for all $\ell$. From the previous paragraph, we have that the sum of the elements in $[[\ell]]$ equals the sum of the elements in $[[\ell]]'$ mod $k$. We thus conclude the condition that $\ell$ has a \emph{consistent} coefficient in both representations, as $z_{[[\ell]]}=z_{[[\ell]]'}$, and reexpress this as a sum of 4 to 6 coefficients, moving everything to the left-hand side, as $z_{[[\ell]]}+z_{k-[[\ell]]'}\equiv 0 \mod 1$. Recall that $[[\ell]]$ only contains indices $j$ with $\gcd(j,k)=\alpha$, and the corresponding variables $z_j$ satisfy $kz_j \equiv c_\alpha j \mod k$. Similarly, $[[\ell]]'$ only contains indices $j'$ with $\gcd(j',k)=\alpha'$, and each $z_{j'}$ satisfies $kz_{j'} \equiv c_{\alpha'} j' \mod k$. Therefore, the constraint $(c_{\alpha'}-c_{\alpha})\ell \equiv 0 \mod k$ holds for any pair of $\alpha,\alpha' \in A \subseteq [k]$, where $A = \{\alpha: \hat{x},\hat{y} \text{ are nonzero everywhere in }S_{\alpha}\}$. 
    
    Applying \Cref{lem:unique-c}, there exists $c\in\mathbb{Z}$, such that $(c-c_{\alpha})\alpha \equiv 0 \mod k$ for all $\alpha \in A$. 
    Therefore, for every $\alpha \in A$, for every $j\in S_{\alpha}$, $c\alpha \frac{j}{\alpha} \equiv c_{\alpha} \alpha \frac{j}{\alpha} \mod k$. 
    Notice that by definition, $\frac{j}{\alpha}$ is an integer and thus, for every $\alpha \in A$, for every $j\in S_{\alpha}$, $cj \equiv c_{\alpha} j \mod k$. 
    Therefore, for every $j$ where $\hat{x}_j$, $\hat{y}_j$ are nonzero, we have $kz_j \equiv c_{\alpha} j \equiv c j \mod k$. 
    Dividing by $k$, we have $z_j \equiv \frac{cj}{k} \mod 1$ for every $j$ where $\hat{x}_j$, $\hat{y}_j$ are nonzero, as desired.
\end{proof}

\section{Upper Bound}
\label{sec:upperbound}

We now give an algorithm for distinguishing any two cyclically distinct strings with a constant number of $1$s. Given \Cref{lem:cyclic-stats}, a naïve approach for testing whether a set of traces is generated from $x$ or $y$ is to identify a cyclic statistic in which $x$ and $y$ differ, and then estimate this statistic from traces. For a string with $k$ $1$s, the deletion channel will preserve all $1$s with constant probability; a resulting trace can be represented as a sequence of $k$ ``gaps,'' for which we can compute cyclic statistics. In particular, for a string $x = (x_1,...,x_k)$ and retention probability $q:=1-p$, a trace can be represented as a sequence of binomial random variables $\tx = (\tx_1,...,\tx_k)$, where $\tx_j \sim \Bin{x_j}{q}$. 

Unfortunately, computing a cyclic statistic of $\tx$ requires multiplying up to 6 of these binomial random variables. In the worst case, we might have $x_j = \Omega(n)$ for all $j$, causing our estimator to have variance roughly $n^{11}$. Since cyclic statistics are integers, we may need to estimate the desired cyclic statistic to within constant error, and thus the naïve algorithm would require $\Omega(n^{11})$ samples.

We give a more efficient algorithm using the following insight: while the product of $6$ independent binomials with $\Omega(n)$ trials each has variance roughly $\Omega(n^{11})$, the variance is considerably lower if each binomial is shifted to have mean $\tilde{O}(\sqrt{n})$. To accomplish this, we preprocess $x$ and $y$ by greedily grouping the values $qx_1,\ldots,qx_k,qy_1,\ldots,qy_k$ that are within $\tilde{O}(\sqrt{n})$ of each other into a cluster. When processing a trace $\tz$ from either $x$ or $y$, we compute the cluster center closest to each element $\tz_j$, and subtract that center from $\tz_j$. Because the gap sizes $\tz_j$ are highly likely to be within $\tilde{O}(\sqrt{n})$ of their expectations, we can find the ``correct'' cluster with high probability. The resulting estimator has variance $\tilde{O}(n^6)$. 

Of course, subtracting an arbitrary sequence $s$ from the observed trace $\tz$ raises another concern: the sequences $qx - s$ and $qy - s$ might now be cyclically equivalent, erasing any differences in their cyclic statistics. We circumvent this problem with a slightly different estimator which only considers cyclic statistics \emph{mod $\ell$} where $\ell$ is chosen so that $s$ is preserved by a cyclic shift of $\ell$.

%where $s$ is consistent mod $\ell$.which sums over cyclic shifts which preserve $s$. We show that such statistics (which we call \textit{cyclic statistics mod $\ell$}) preserve differences between the original strings $x$ and $y$. 

The remainder of this section is organized as follows. In \Cref{sec:centers}, we give an algorithm for generating a well separated partition of the expectations $qx_1,\ldots,qx_k,qy_1,\ldots,qy_k$. %, and show that this partition is well behaved under the deletion channel.
In \Cref{sec:cyclic-stats-ub}, we show that cyclic statistics mod $\ell$ are preserved by subtraction of a sequence $s$ invariant to cyclic shifts of $\ell$. Finally, in \Cref{sec:algorithm}, we present our cyclic statistics-based algorithm and prove its correctness. 

\subsection{Determining Centers}
\label{sec:centers}

As a preprocessing step, our algorithm partitions the means of $2k$ binomial random variables (the gaps in traces from $x$ and $y$) into well separated clusters. For our purposes, clusters are considered to be well separated if they are $\Omega(\sqrt{n}\log n)$ far apart; we give a formal definition below.

\begin{definition}
    \label{def:partition}
    Let $x_1,\ldots,x_k \in \mathbb{R}$ and $C >0$. Then a function $c: \mathbb{R} \rightarrow [k]$ is a \textit{$C$-separated partition} of $x_1,\ldots,x_k$ if it satisfies the following:
    \begin{enumerate}
        \item For any $j,j' \in [k]$, if $c(x_j) \neq c(x_{j'})$, then $|qx_{j} - qx_{j'}| > 2C \sqrt{n} \log n$.
        
        \item For any $j,j' \in [k]$, if $c(x_j) = c(x_{j'})$, then $|qx_{j} - qx_{j'}| \leq 2Ck\sqrt{n} \log n$. 

        \item For all $x \in \mathbb{R}$, $c(x) = c\left(\argmin_{x_j:j \in [k]} \abs{x_j - x}\right)$. 
    \end{enumerate}
\end{definition}

We show that a $C$-separated partition exists for any $C > 0$ and $x_1,\ldots,x_k \in \mathbb{R}$. The proof is constructive: we give a greedy algorithm 
\ifarxiv
(\Cref{alg:partition}) 
\fi
for producing a $C$-separated partition of a fixed set of points $x_1,\ldots,x_k$. The algorithm starts by assigning each point to its own cluster; it then iteratively merges clusters which violate the separation condition. When no violations remain, we prove that the algorithm has obtained a $C$-separated partition.  

\ifarxiv
\begin{algorithm}[tbph]
\caption{\textsc{Partition}}\label{alg:partition}
\KwIn{$x_1,\ldots,x_k \in \mathbb{R}; C > 0$}
\KwOut{$C$-separated partition $c$ of $x_1,\ldots,x_k$} 

\DontPrintSemicolon
\nl $S_i \gets \{x_i\}$, $\forall i \in [k]$ \label{line:init} \tcp{initialize clusters with 1 element}
% \nl $t \gets 0$ \;
\nl \While{$\exists i, j \in [k] : i \neq j \land \min_{x \in S_i,\; x' \in S_j} |x - x'| \leq 2C\sqrt{n} \log n$} {
    \nl $S_i \gets S_i \cup S_j$ \tcp{merge clusters which are not $C$-separated}
    \nl $S_j \gets \emptyset$ \label{line:emptyset} \;
    % \nl $S_{\ell}^{(t+1)} \gets S_{\ell}^{(t)}, \forall \ell \in [k] \backslash \{i,j\}$ \;
    % \nl $t \gets t+1$ \;
}
\nl \textbf{foreach }$i \in [k]$ \textbf{do} $c(x) \gets i, \ \forall x \in S_i$ \label{line:part1}\;

\nl Define $c(x) = c(\argmin_{x_j:j \in [k]} \abs{x_j - x}), \forall x \in \mathbb{R} \backslash \{x_1,\ldots,x_k\}$ \label{line:part2} \;
\nl \Return $c$ \;
\end{algorithm}
\fi

\begin{lemma} 
    \label{lem:partition}
    Let $x_1,\ldots,x_k \in [n]$, and $C > 0$. Then there exists a $C$-separated partition of $x_1,\ldots,x_k$. 
\end{lemma}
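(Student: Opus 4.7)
The plan is to exhibit a $C$-separated partition via the greedy merging procedure of \Cref{alg:partition}: initialize each $x_j$ as its own singleton cluster, then repeatedly merge any pair of clusters $S_i, S_j$ whose minimum pairwise distance is at most $2C\sqrt{n}\log n$, until no such pair remains. Finally, extend $c$ to all of $\mathbb{R}$ by mapping each $x \in \mathbb{R}$ to the cluster of its nearest $x_j$; this gives property (3) of \Cref{def:partition} by construction. Termination is immediate, since each merge strictly reduces the number of non-empty clusters from the initial count of $k$, so the loop runs at most $k-1$ times.

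Property (1) follows directly from the exit condition: once the loop terminates, every pair of distinct non-empty clusters $S_i, S_j$ satisfies $\min_{x \in S_i, x' \in S_j} |x - x'| > 2C\sqrt{n}\log n$. In particular, any two of the original points $x_j, x_{j'}$ assigned to different clusters realize at least this minimum and are therefore more than $2C\sqrt{n}\log n$ apart, as required.

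The main content is property (2), which I will establish by induction on the merging history. I claim that at every moment during the execution, a cluster built up from $m$ of the original singletons has diameter at most $(m-1)\cdot 2C\sqrt{n}\log n$. The base case $m=1$ is trivial. For the inductive step, whenever the algorithm merges $S_i$ and $S_j$, the loop condition guarantees some $x \in S_i$ and $x' \in S_j$ with $|x-x'| \leq 2C\sqrt{n}\log n$. For any $y \in S_i$ and $y' \in S_j$, the triangle inequality combined with the inductive diameter bounds for $S_i$ and $S_j$ gives
\begin{equation*}
|y-y'| \;\leq\; |y-x| + |x-x'| + |x'-y'| \;\leq\; (|S_i|+|S_j|-1)\cdot 2C\sqrt{n}\log n.
\end{equation*}
Since no cluster can ever contain more than $k$ original singletons, every final cluster has diameter at most $(k-1)\cdot 2C\sqrt{n}\log n \leq 2Ck\sqrt{n}\log n$, which is exactly property (2).

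The only nontrivial bookkeeping is in tracking how diameters grow across merges, and the triangle inequality handles it cleanly, so no real obstacle arises. (The slight mismatch between the $|x-x'|$ threshold in the algorithm and the $|qx_j - qx_{j'}|$ form of \Cref{def:partition} is purely a matter of rescaling, absorbed without issue into the constant $C$.)
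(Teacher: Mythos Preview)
Your proof is correct and follows essentially the same approach as the paper: run the greedy merging of \Cref{alg:partition}, observe termination from the decreasing cluster count, get property~(1) from the loop's exit condition, prove property~(2) by the same inductive diameter bound via the triangle inequality, and get property~(3) by construction. Your parenthetical about the $q$-rescaling is a fair observation on a minor notational wrinkle in the definition, but it does not affect the argument.
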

\ifarxiv
\begin{proof}
    We show that \Cref{alg:partition} returns a valid partition. First, notice that at each iteration, the number of non-empty sets decreases by $1$ (\Cref{line:emptyset}), thus the algorithm terminates. 
    %We now show that the function $c$ defined on Lines \ref{line:part1} and \ref{line:part2} is a valid partition of $x_1,\ldots,x_k$. Note that for all $j,j' \in [k]$, $c(x_j) = c(x_{j'})$ if and only if $x_j$ and $x_{j'}$ belong to the same set $S_i$. 
    When the while loop terminates, all sets are separated by at least $2C\sqrt{n}\log n$, so the first condition in  \Cref{def:partition} is satisfied. 

    We prove the second condition by induction over the number of iterations $t$: for all $i \in [k]$ and $x,x' \in S_i$, we have $|x-x'| \leq 2C(|S_i|-1)\sqrt{n}\log n$. For $t=0$, the claim is trivial. Suppose the claim holds after the $t$th iteration of the while loop (with $t > 0$). If the loop does not terminate after this iteration, there must exist some $i,j$ and $x \in S_i, x' \in S_j$ with $|x - x'| \leq 2C\sqrt{n}\log n$. Since $|S_i|$ does not decrease, the claim still holds for any $y,y' \in S_i$ (equivalently, $y,y' \in S_j$) by the inductive hypothesis. For $y \in S_i$ and $y' \in S_j$, we have
    \begin{align*}
        |y-y'| &\leq |y-x| + |x-x'| + |x'-y'| \\
        &= 2C(|S_i|-1)\sqrt{n}\log n + 2C\sqrt{n}\log n + 2C(|S_j|-1)\sqrt{n}\log n \\
        &= 2C(|S_i \cup S_j|-1)\sqrt{n}\log n.
    \end{align*}
    %which proves the claim.
    
    After the loop terminates, $|S_i| \leq k$ for all $i$. Applying the claim, for any $x,x' \in S_i$, we must have $|x-x'| \leq 2C(|S_i| - 1)\sqrt{n} \leq 2Ck\sqrt{n}\log n$. Therefore, for all $j,j' \in [k]$, $c(x_j) = c(x_{j'})$ if and only if $|x_j - x_{j'}| \leq 2Ck\sqrt{n} \log n$. 

    Finally, the third condition holds by construction of $c$ (\Cref{line:part2}), which completes the proof.
\end{proof}
\fi

\subsection{Uniqueness of Cyclic Statistics}
\label{sec:cyclic-stats-ub}

% Our algorithm relies on computing cyclic statistics of modified sequences -- specifically, sequences shifted by a constant sequence. \Cref{lem:cyclic-stats} guarantees that every pair of cyclically distinct strings $x$ and $y$ differs in some cyclic statistic $S_{i_1,\ldots,i_m;1}$ with $m \leq 6$. Recall that the notation $i_1,\ldots,i_m;1$ implies that this statistic is taken ``mod 1," meaning the product $x_{i_1}\cdots x_{i_m}$ is computed for \textit{every} cyclic shift of $x$. 

% However, as explained in the introduction to this section, simply estimating $S_{i_1,\ldots,i_m;1}$ does not suffice, because each of the gaps in the trace $\tx_1,\ldots,\tx_k$ may be of size $\Omega(n)$. A more efficient estimator would center each of these variables to have mean $O(1)$ before multiplying them together. 
As discussed at the beginning of this section, our goal is to reduce the variance of our estimator by shifting each binomial random variable $\tx_1,\ldots,\tx_k$ by a value which is within a constant of its mean. In particular, let $c$ be a $C$-separated partition of $qx_1,\ldots,qx_k,qy_1,\ldots,qy_k$, and $s = (c(qx_1),\ldots,c(qx_k))$. The binomial random variables concentrate tightly around their expectation, allowing us to approximately recover the partition of $x$ and $y$ generated by 
\ifarxiv
\Cref{alg:partition} 
\else 
the partitioning algorithm
\fi 
with high probability. The logical approach is therefore to estimate cyclic statistics of $x-s$. 

Unfortunately, this introduces a second problem. If $x$ and $y$ are distinct after applying the partition $c$ elementwise, then they can be distinguished simply by recovering the partition. But if $x$ and $y$ appear similar under the partition --- that is, $s_x = (c(qx_1),\ldots,c(qx_k))$ is a cyclic shift of $s_y =(c(qy_1),\ldots,c(qy_k))$ --- then $x-s_x$ might be a cyclic shift of $y-s_y$, resulting in identical cyclic statistics and thus a failure to distinguish $x$ from $y$.

\begin{example}\label{ex:cyclic}
Consider the case where $k=6$, and where the sequence of shifts/clusters $s_x=s_y$ follows the pattern $s,s',s',s,s',s'$. Further, let $x=(\frac{s}{q}+0,\frac{s'}{q}+1,\frac{s'}{q}+2,\frac{s}{q}+0,\frac{s'}{q}+1,\frac{s'}{q}+2)$ and $y=(\frac{s}{q}+1,\frac{s'}{q}+2,\frac{s'}{q}+0,\frac{s}{q}+1,\frac{s'}{q}+2,\frac{s'}{q}+0)$. In this case, when we sample from $\Del{x}$ or $\Del{y}$, once we subtract $s_x=s_y$, we get two 6-tuples that are cyclically identical; however, algorithmically, we would hope to be able to distinguish traces from $x$ vs $y$, since the \emph{alignment} between the sequence of shifts $s_x=s_y$ vs the sequence of offsets $(0,1,2,0,1,2)$ differs for $x$ vs $y$. The right thing to do here, after subtracting $s_x=s_y$, is to look at \emph{cyclic statistics mod 3}.
\end{example}

This motivates our general strategy of looking at cyclic statistics mod $\ell$, where $\ell$ is the smallest period of repetition of the shift sequence.

%We overcome this difficulty by estimating cyclic statistics over only \textit{some} cyclic shifts of $x$ and $y$. If $x$ and $y$ correspond to the same sequence of cluster centers, then a given trace $\tz$ from either $x$ or $y$ should display a signal in at least one of the cyclic shifts which line up each element with the correct cluster. This motivates our definition of shifted cyclic statistics mod $\ell$.

\begin{definition}
    A sequence $s = (s_1,\ldots,s_k) \in \mathbb{R}^k$ is \textit{symmetric mod $\ell$} if for all $j \in [k]$, $s_j = s_{(j + \ell) \mod k}$. 
\end{definition}

\begin{definition}
    Let $x = (x_1,\ldots,x_k) \in \mathbb{Z}^k$, $i_1,\ldots,i_m \in [k]$, and $\ell$ be a divisor of $k$. For a sequence $s = (s_1,\ldots,s_k) \in \mathbb{R}^k$, the function $S_{i_1,\ldots,i_m;\ell}(x-s) \coloneqq \sum_{j=1}^{k/\ell} (x_{i_1 + j\ell} - s_{i_1 + j\ell}) \cdots (x_{i_m + j\ell} - s_{i_m + j\ell})$ is an $m$-th order cyclic statistic mod $\ell$, \textit{shifted by $s$}. 
\end{definition}

\ifarxiv
    We caution that for $\ell > 1$, an arbitrary statistic $S_{i_1,\ldots,i_m;\ell}(x-s)$ is \emph{not} necessarily invariant under cyclic shifts of $x$. However, if $s$ is symmetric mod $\ell$, then $S_{i_1,\ldots,i_m;\ell}(x-s)$ \emph{is} invariant under cyclic shifts which are multiples of $\ell$. This turns out to be sufficient: in \Cref{lem:stats-diff}, we prove that if $x$ and $y$ are cyclically distinct integer sequences and $s$ is a sequence which is symmetric mod $\ell$, $x-s$ and $y-s$ must differ in \emph{some} cyclic statistic mod $\ell$. We begin by showing that this difference holds for unshifted sequences. 
\else
    We show that for any period $\ell$ and set of shifts $s$ that repeats mod $\ell$, there is an $\leq 6$-th order statistic that differs on $x,y$, when shifted by $s$ and taken mod $\ell$. We prove this by starting with \Cref{lem:cyclic-stats} that guarantees the existence of a statistic of order $\leq 6$ that distinguishes $x$ from $y$; we show this implies existence of a distinguishing statistic of the same order mod $\ell$. Taking the distinguishing statistic of \emph{smallest} order lets us subtract off any sequence $s$ that is symmetric mod $\ell$, without modifying the discrepancy between $x$ and $y$. This yields:
\fi

\ifarxiv
\begin{lemma}
    \label{lem:cyclic-stats-mod}
    Let $x$ and $y$ be two cyclically distinct integer sequences of length $k$. Then for any divisor $\ell$ of $k$, there exists $m \leq 6$ and a sequence $i_1,\ldots,i_m \in [k]$ such that $S_{i_1,\ldots,i_m;\ell}(x) \neq S_{i_1,\ldots,i_m;\ell}(y)$. 
\end{lemma}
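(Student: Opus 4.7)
The plan is to reduce the lemma to Lemma~\ref{lem:cyclic-stats} by observing that every full-period cyclic statistic decomposes naturally as a sum of cyclic statistics mod $\ell$, indexed by the residues modulo $\ell$. Since Lemma~\ref{lem:cyclic-stats} already produces one $m \leq 6$ and indices $i_1,\ldots,i_m \in [k]$ with $S_{i_1,\ldots,i_m}(x) \neq S_{i_1,\ldots,i_m}(y)$, it will suffice to push this discrepancy into one of the $\ell$ summands.

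Concretely, I would write
\[
S_{i_1,\ldots,i_m}(x) \;=\; \sum_{j=1}^{k} x_{i_1+j}\cdots x_{i_m+j} \;=\; \sum_{r=0}^{\ell-1} \sum_{t=1}^{k/\ell} x_{i_1+r+t\ell}\cdots x_{i_m+r+t\ell},
\]
where in the inner sum I have reparameterized $j = r + t\ell$ for $j$ ranging over the arithmetic progression of residue $r$ modulo $\ell$ (this uses $\ell \mid k$). By the definition of the shifted mod-$\ell$ cyclic statistic, the inner sum is precisely $S_{i_1+r,\,\ldots,\,i_m+r;\,\ell}(x)$, so
\[
S_{i_1,\ldots,i_m}(x) \;=\; \sum_{r=0}^{\ell-1} S_{i_1+r,\,\ldots,\,i_m+r;\,\ell}(x),
\]
and analogously for $y$.

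Since $x$ and $y$ are cyclically distinct and have identical cyclic statistics up to order less than $m$ only if $m$ is chosen minimally (we do not actually need minimality here), Lemma~\ref{lem:cyclic-stats} gives that the left-hand sides differ. Therefore at least one index $r^\star \in \{0,1,\ldots,\ell-1\}$ must satisfy
\[
S_{i_1+r^\star,\,\ldots,\,i_m+r^\star;\,\ell}(x) \;\neq\; S_{i_1+r^\star,\,\ldots,\,i_m+r^\star;\,\ell}(y),
\]
and the tuple $(i_1+r^\star,\ldots,i_m+r^\star)$ has length $m \leq 6$, proving the lemma.

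There is really no substantial obstacle; the entire argument is the bookkeeping identity that splits a cyclic sum into $\ell$ ``sub-lattice'' cyclic sums. The only thing worth double-checking is that the index shifts are interpreted mod $k$ consistently (which they are, by the conventions introduced in the definition of cyclic statistics), and that the divisibility $\ell \mid k$ is used exactly once, to ensure each coset of $\ell\mathbb{Z}$ in $\mathbb{Z}/k\mathbb{Z}$ has exactly $k/\ell$ elements so that the reparameterization is a bijection.
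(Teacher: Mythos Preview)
Your proof is correct and takes essentially the same approach as the paper: both arguments rest on the identity $S_{i_1,\ldots,i_m}(x) = \sum_{r} S_{i_1+r,\ldots,i_m+r;\ell}(x)$ combined with Lemma~\ref{lem:cyclic-stats}. The paper phrases it via the contrapositive (assuming all mod-$\ell$ statistics agree and deducing cyclic equivalence), whereas you argue directly, but the content is identical.
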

\begin{proof}
    We prove the contrapositive. Suppose that $S_{i_1,\ldots,i_m;\ell}(x) = S_{i_1,\ldots,i_m;\ell}(y)$ for all $i_1,\ldots,i_m$ for all $m\leq 6$. For any $i_1,\ldots,i_m \in [k]$, the ``mod 1'' statistic $S_{i_1,\ldots,i_m;1}$ can be written as a sum of $m$th order cyclic statistics mod $\ell$:
    \begin{align*}
        S_{i_1,\ldots,i_m;1}(x) &= \sum_{j=1}^k x_{i_1+j}\ldots x_{i_m+j} = \sum_{j=1}^{\ell} \sum_{j'=1}^{k/\ell}x_{i_1+j+\ell j'}\ldots x_{i_m+j+ \ell j'}\\
        &= \sum_{j=1}^{\ell} S_{i_1+j,\ldots,i_m+j;\ell}(x) = \sum_{j=1}^{\ell} S_{i_1+j,\ldots,i_m+j;\ell}(y) 
        = S_{i_1,\ldots,i_m;1}(y)
    \end{align*}
    where all indices are interpreted mod $k$ (and the second to last equality is by assumption). Since the above equality holds for all $m \leq 6$, \Cref{lem:cyclic-stats} implies that $x$ and $y$ are identical up to a cyclic shift. Taking the contrapositive yields the lemma.
\end{proof}
\fi

%Using the above, we show the main result of this subsection, that $x$ and $y$ differ in a cyclic statistic mod $\ell$ when shifted by a common sequence $s$. 

\begin{lemma}
\label{lem:stats-diff}
    Let $x$ and $y$ be two cyclically distinct integer sequences of length $k$. Let $\ell$ divide $k$, and let $s = (s_1,\ldots,s_k) \in \mathbb{R}^k$ be a sequence that is symmetric mod $\ell$. Then there exists $i_1,\ldots,i_m \in [k]$ with $m\leq 6$ such that \[\abs{S_{i_1,\ldots,i_m;\ell}(x-s) - S_{i_1,\ldots,i_m;\ell}(y-s)}=\abs{S_{i_1,\ldots,i_m;\ell}(x) - S_{i_1,\ldots,i_m;\ell}(y)} \geq 1.\]
\end{lemma}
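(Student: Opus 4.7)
\medskip\noindent\textbf{Proof plan.} The plan is to choose the distinguishing cyclic statistic mod $\ell$ to be of \emph{minimal} order, then show that shifting by $s$ preserves the difference because $s$ being symmetric mod $\ell$ makes the ``shift correction'' terms involve only strictly lower-order cyclic statistics mod $\ell$, which by minimality agree on $x$ and $y$.

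First, by \Cref{lem:cyclic-stats-mod} (in the arXiv version) applied to the cyclically distinct pair $x,y$, there exists some $m \le 6$ and indices $(i_1,\ldots,i_m) \in [k]^m$ such that $S_{i_1,\ldots,i_m;\ell}(x) \neq S_{i_1,\ldots,i_m;\ell}(y)$. Fix $m$ to be the \emph{smallest} such order, and fix a distinguishing tuple $(i_1,\ldots,i_m)$ realizing this order. By minimality, for every $m' < m$ and every tuple $(j_1,\ldots,j_{m'}) \in [k]^{m'}$, the statistic $S_{j_1,\ldots,j_{m'};\ell}$ takes the same value on $x$ and $y$. (Note $m \ge 1$, and the $m'=0$ ``empty'' statistic trivially agrees, being $k/\ell$.)

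Next, I expand $S_{i_1,\ldots,i_m;\ell}(x-s)$. Since $s$ is symmetric mod $\ell$, $s_{i_r+j\ell} = s_{i_r}$ for all $j$. Thus
\[
S_{i_1,\ldots,i_m;\ell}(x-s) = \sum_{j=1}^{k/\ell}\prod_{r=1}^m (x_{i_r+j\ell} - s_{i_r}) = \sum_{T\subseteq [m]} (-1)^{|T|} \Bigl(\prod_{r\in T} s_{i_r}\Bigr)\, S_{(i_r)_{r\notin T};\ell}(x),
\]
where the $T=\emptyset$ term is exactly $S_{i_1,\ldots,i_m;\ell}(x)$ and every $T\neq\emptyset$ term involves a cyclic statistic mod $\ell$ of order $m-|T| < m$. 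The identical expansion holds for $y$.

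Subtracting the two expansions, each $T\neq\emptyset$ term pairs up into $\bigl(\prod_{r\in T} s_{i_r}\bigr)\bigl[S_{(i_r)_{r\notin T};\ell}(x) - S_{(i_r)_{r\notin T};\ell}(y)\bigr]$, and the bracket vanishes by minimality of $m$. Only the $T=\emptyset$ term survives, yielding the desired equality $S_{i_1,\ldots,i_m;\ell}(x-s) - S_{i_1,\ldots,i_m;\ell}(y-s) = S_{i_1,\ldots,i_m;\ell}(x) - S_{i_1,\ldots,i_m;\ell}(y)$. Since $x,y$ are integer sequences, this difference is a nonzero integer, so its absolute value is at least $1$.

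The main subtlety — which is really the whole point — is justifying why lower-order correction terms cancel. Individually picking indices that distinguish $x$ from $y$ at some order is not enough; the argument hinges on picking $m$ \emph{minimal}, so that \emph{every} cyclic statistic mod $\ell$ of order $< m$ automatically agrees on $x$ and $y$, including all the sub-tuple statistics arising from the expansion. Everything else (the binomial expansion, using symmetry of $s$ mod $\ell$ to replace $s_{i_r+j\ell}$ by $s_{i_r}$, and integrality) is routine.
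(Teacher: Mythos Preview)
Your proof is correct and follows essentially the same approach as the paper: choose the distinguishing cyclic statistic mod $\ell$ of minimal order (via \Cref{lem:cyclic-stats-mod}), expand the shifted statistic using the symmetry of $s$ mod $\ell$, and observe that all lower-order correction terms cancel by minimality, leaving a nonzero integer difference. The only cosmetic difference is that you index the expansion by subsets $T\subseteq[m]$ whereas the paper indexes by sub-tuples $I\subseteq\{i_1,\ldots,i_m\}$.
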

\ifarxiv
\begin{proof}
    Let $S_{i_1,\ldots,i_m}$ be an order $m$ cyclic statistic such that (1) $S_{i_1,\ldots,i_m;\ell}(x) \neq S_{i_1,\ldots,i_m;\ell}(y)$ and (2) $x$ and $y$ have identical order $m'$ cyclic statistics for all $m' < m$. The existence of a statistic satisfying (1) is guaranteed (with $m \leq 6$) by \Cref{lem:cyclic-stats-mod}; and picking the smallest valid $m$ lets us satisfy property (2). We use this to verify the claim by direct calculation, relying on the assumption that $s$ is symmetric mod $\ell$, meaning that $s_i = s_{i + \ell j}$ for all $i, j \in [k]$---interpreting all indices mod $k$: %where here and below we interpret all indices mod $k$. We have
    \begin{align*}
        &\abs{S_{i_1,\ldots,i_m;\ell}(x-s) - S_{i_1,\ldots,i_m;\ell}(y-s)} \\
        =& \abs{\sum_{j=1}^{k/\ell} (x_{i_1 + \ell j} - s_{i_1 + \ell j})\ldots (x_{i_m + \ell j} - s_{i_m + \ell j}) - \sum_{j=1}^{k/\ell} (y_{i_1 + \ell j} - s_{i_1 + \ell j})\ldots (y_{i_m + \ell j} - s_{i_m + \ell j})} \\
        =& \abs{\sum_{j=1}^{k/\ell} \sum_{I \subseteq \{i_1,\ldots,i_m\}} \left( \prod_{i \in \{i_1,\ldots,i_m\} \backslash I} -s_{i + \ell j} \right) \left( \prod_{i \in I} x_{i + \ell j} - \prod_{i \in I} y_{i + \ell j} \right)} \\
        =& \abs{\sum_{I \subseteq \{i_1,\ldots,i_m\}} \left( \prod_{i \in \{i_1,\ldots,i_m\} \backslash I} -s_i \right) \sum_{j=1}^{k/\ell} \left( \prod_{i \in I} x_{i + \ell j} - \prod_{i \in I} y_{i + \ell j} \right)} \\
        =& \abs{\sum_{I \subseteq \{i_1,\ldots,i_m\}} \left( \prod_{i \in \{i_1,\ldots,i_m\} \backslash I} -s_i \right) \left( S_{I;\ell}(x) - S_{I;\ell}(y) \right)} \\
        =& \ \abs{S_{i_1,\ldots,i_m;\ell}(x) - S_{i_1,\ldots,i_m;\ell}(y)\tag{since for $I\subsetneq \{i_1,\ldots,i_m\}$ we have $S_{I;\ell}(x) = S_{I;\ell}(y)$}} \\
        \geq& \ 1
    \end{align*}
    where the last line follows from the facts that all cyclic statistics of integer sequences are integers and $S_{i_1,\ldots,i_m;\ell}(x) \neq S_{i_1,\ldots,i_m;\ell}(y)$. 
\end{proof}
\fi

\subsection{Our Algorithm}
\label{sec:algorithm}

We are now ready to describe our proposed tester, \Cref{alg:tester}. Given two candidates $x$ and $y$, a trace $\tz$ from the deletion channel can be described as a sequence of binomial random variables $\tz_1,\ldots,\tz_k$, each of which has an expectation in the set $\{qx_1,\ldots,qx_k,qy_1,\ldots,qy_k\}$. The algorithm begins by preprocessing the strings $x$ and $y$ to create a $C$-separated partition $c$ of the expectations (\Cref{line:partition}). We then consider the sequences $s_x$ and $s_y$ created by applying $c$ elementwise to $x$ and $y$. We design two tests for two separate cases:
\begin{itemize}
    \item If $s_x$ is a cyclic shift of $s_y$, then traces from $x$ and $y$ appear relatively similar. In this case, we invoke a subroutine (\Cref{alg:tester-similar}), which determines a shifted cyclic statistic in which $x$ and $y$ differ, then estimates this statistic from traces. Owing to the shift, the resulting estimate has low variance, allowing us to estimate the cyclic statistic from $\tilde{O}(n^6)$ traces.   
    \item If $s_x$ and $s_y$ are cyclically distinct, then the algorithm can reliably distinguish $x$ and $y$ by examining only a single trace whose $1$s are intact. Therefore, we draw a constant number of traces, choose an arbitrary trace $\tz$ with $k$ $1$s (\Cref{line:choice}), and cluster the gaps according to $c$. For $C$ large enough, the resulting sequence will with high probability correspond exactly to either $s_x$ or $s_y$.
\end{itemize}
Combining the results from each case yields our overall upper bound. We now state and prove the main result of \Cref{sec:upperbound}. Our proof relies on Lemmas~\ref{lem:centers-prob} and~\ref{lem:tester-similar} in \Cref{sec:similar-strings-ub}, which prove the correctness of \Cref{alg:tester-similar}. %, which is proved later, in \Cref{sec:similar-strings-ub}.

\subsubsection{Proof of Upper Bound}

\begin{algorithm}[tbp]
\caption{\textsc{Test-Cyclic-Traces}}\label{alg:tester}
\KwIn{$k \geq 0; x,y \in \mathbb{N}^{k}$; $q \in (0,1)$; sample access to deletion channel}
\KwOut{$x$ or $y$} 

\DontPrintSemicolon
\nl $C \gets $ absolute constant determined by \Cref{lem:centers-prob} \;
\nl $c \gets \textsc{Partition}(qx_1,\ldots,qx_k,qy_1,\ldots,qy_k; C)$ \label{line:partition} \;
\nl $s_x \gets (c(qx_1),\ldots,c(qx_k))$, $s_y \gets (c(qy_1),\ldots,c(qy_k))$ \;
\nl \If{$s_x$ is a cyclic shift of $s_y$} {
    \nl \Return \textsc{Test-Similar-Traces}($k$, $x$, $y$, $q$, $s_x$, $C$) \;
} \Else {
    \nl Draw $\Theta(1)$ traces from the deletion channel \;
    \nl $\tz \gets$ any trace with exactly $k$ $1$s \label{line:choice} \;
    \nl \textbf{if} $c(\tz_1),\ldots,c(\tz_k)$ is a cyclic shift of $s_x$ \textbf{then} \Return $x$ \textbf{else} \Return $y$ \;
}
\end{algorithm}

\begin{theorem}
    \label{thm:upperbd}
    Let $x$ and $y$ be two cyclically distinct $k$-sparse binary strings of length $n$. Then Algorithm~\ref{alg:tester} distinguishes $x$ from $y$ with probability $\geq\frac{2}{3}$ for sufficiently large $n$, using $O(n^6\log^{12}n)$ traces from a cyclic deletion channel.  
\end{theorem}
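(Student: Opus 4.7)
The plan is to analyze the two branches of \Cref{alg:tester} separately. In the ``cyclically distinct shift pattern'' branch (where $s_x$ and $s_y$ are not cyclic shifts of each other), I would show that just a single trace that preserves all $k$ of the $1$s information-theoretically pins down which of $x, y$ was the source, so $\Theta(1)$ traces suffice with constant success probability. Concretely, I would first note that the deletion channel preserves all $1$s with probability $(1-p)^k = \Omega(1)$, so a constant number of samples yields a trace $\tilde{z}$ with $k$ $1$s with probability arbitrarily close to $1$. Conditioned on this event, each gap satisfies $\tilde{z}_j \sim \Bin{z_{j+c^*}}{q}$ for a uniform random cyclic shift $c^*$, and a Hoeffding bound gives $|\tilde{z}_j - q z_{j+c^*}| \leq C\sqrt{n}\log n$ simultaneously for all $j$ with probability $1 - n^{-\Omega(C^2 \log n)}$. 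The separation property from \Cref{def:partition} (clusters are $2C\sqrt{n}\log n$ apart) combined with the triangle inequality then forces $c(\tilde{z}_j) = c(q z_{j+c^*})$, so the reconstructed cluster sequence equals $s_z$ up to a cyclic shift and, by hypothesis on this branch, matches exactly one of $s_x, s_y$.

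Next I would handle the ``similar shift pattern'' branch, which invokes \textsc{Test-Similar-Traces}. This is where the $\tilde{O}(n^6)$ trace complexity actually enters: I would simply invoke \Cref{lem:centers-prob} and \Cref{lem:tester-similar}, which together guarantee both correctness and the $O(n^6 \log^{12} n)$ sample bound for that subroutine. The underlying mechanism, which I would indicate but not reprove, is that \Cref{lem:stats-diff} supplies a shifted cyclic statistic mod $\ell$ of order $m \leq 6$ on which $x$ and $y$ differ by at least $1$ even after the shift by $s_x$; since each $\tilde{z}_j - c(q x_j)$ has magnitude $\tilde{O}(\sqrt{n})$ with high probability, a product of at most $6$ such shifted gaps has variance $\tilde{O}(n^6)$, and a Chebyshev-style argument converts this into $\tilde{O}(n^6)$ samples to resolve a constant gap.

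The concluding step is to union-bound the failure probabilities from the two branches and observe that the overall sample complexity is dominated by the $O(n^6 \log^{12} n)$ traces consumed inside \textsc{Test-Similar-Traces}; the constant-sample branch contributes only lower-order terms. The main conceptual obstacle -- preventing cluster-center subtraction from collapsing $x - s_x$ and $y - s_x$ to cyclically equivalent sequences -- is already resolved upstream via the move to cyclic statistics \emph{mod $\ell$} (cf.\ \Cref{ex:cyclic}) and \Cref{lem:stats-diff}. For the proof of this theorem itself, the main thing to verify carefully is that the branching condition in \Cref{alg:tester} funnels each input pair $(x, y)$ into the correct case, and that the $C$ chosen in the partitioning step is large enough (as a function of $k$ and $p$) so that the concentration bound in the cyclically distinct branch holds with probability at least $99/100$.
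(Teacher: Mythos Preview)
Your proposal is correct and follows essentially the same two-case structure as the paper's proof: invoke \Cref{lem:tester-similar} when $s_x$ and $s_y$ are cyclic shifts of each other, and otherwise use concentration plus the $C$-separation property (the paper packages this as \Cref{lem:centers-prob}) to recover the correct cluster pattern from a single trace with all $1$s intact. The only quibble is phrasing: the two branches are mutually exclusive (determined deterministically by $x,y$), so there is no union bound \emph{across} branches---you simply verify each branch succeeds with probability $\geq 2/3$ on its own.
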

\begin{proof}
    We analyze \Cref{alg:tester} to show the theorem. We consider two cases:
    \begin{description}
        \item[Case 1:] $s_x$ is a cyclic shift of $s_y$. In this case, we invoke \Cref{alg:tester-similar}, which draws $O(n^6\log^{12}n)$ returns the correct answer with probability at least $2/3$, by \Cref{lem:tester-similar}.

        \item[Case 2:] $s_x$ is not a cyclic shift of $s_y$. Then we draw $T = \log(1/4)/\log (1-q^k) = O(1)$ traces. With probability $3/4$, at least one trace $\tz$ will contain exactly $k$ $1$s.  Suppose $\tz \sim \Del{x}$. Due to Lemma~\ref{lem:centers-prob}, with probability at least $1-O(n^{-10})$, we have $c(\tz_j) = c(qx_j)$ for all $j \in [k]$. Therefore, $c(\tz) = s_x$, and the algorithm returns $x$ with probability at least $3/4 - O(n^{-10}) \geq 2/3$ for large enough $n$.  
    \end{description}
    In both cases, the number of traces $T$ is bounded by $O(n^6\log^{12}n)$.
\end{proof}

\subsubsection{Distinguishing Similar Strings}
\label{sec:similar-strings-ub}

We now handle the case in which $s_x = s_y$, i.e., $x$ and $y$ are cyclically equivalent under the $C$-separated partition $c$ constructed in Algorithm~\ref{alg:tester}. We build an estimator based on shifted cyclic statistics and show that $\tilde{O}(n^6)$ traces suffice to distinguish $x$ from $y$. In particular, let $\ell$ be the minimum value such that $s_x$ and $s_y$ are symmetric mod $\ell$. For each $j \in [k]$, let $P_j \coloneqq \{x_{j'} : c(qx_{j'}) = j\} \cup \{y_{j'} : c(qy_{j'}) = j\}$. Define the function $g: [k] \rightarrow \mathbb{R}$ to be the average of the points in cluster $j$; that is, $g(j) = \frac{1}{\abs{P_j}} \sum_{z \in P_j} qz$. The algorithm chooses the smallest value of $m$ for which there is an order $m$ cyclic statistic $S_{i_1,\ldots,i_m;\ell}$ such that $S_{i_1,\ldots,i_m;\ell}(x) \neq S_{i_1,\ldots,i_m;\ell}(y)$. We call a trace $\tz$ {\bf useful} if it  satisfies the following conditions:
    \begin{enumerate}
        \item There are exactly $k$ $1$s in $\tz$.
        \item For all $j \in [k]$, $\abs{\tz_j - g(c(\tz_j))} \leq (4k+1)C\sqrt{n}\log n$. 
    \end{enumerate}
Intuitively, the second condition says that the gaps in $\tz$ can be matched up with cluster centers as we expect. Suppose there are $T'$ useful traces. Our estimator will apply the statistic $S_{i_1,\ldots,i_m;\ell}(x)$  to all the useful traces after subtracting out their cluster centers:
\begin{equation}
        \label{eq:estimator}
        f_{i_1,\ldots,i_m;\ell}(\tz) = \frac{1}{q^m}\sum_{j=1}^{k/\ell} \left(\tz_{i_1+j\ell} - g\left(c(\tz_{i_1+j\ell})\right)\right) \ldots \left(\tz_{i_m+j\ell} - g\left(c(\tz_{i_m+j\ell})\right)\right)
\end{equation}
We define $\hat{f} = \frac{1}{T'} \sum_{\tz:\textrm{useful}(\tz)} f_{i_1,\ldots,i_m;\ell}(\tz)$ as the average of the above equation over useful traces. We show that, for any set of $\tilde{\Theta}(n^6)$ traces from a string $z$, $\abs{\hat{f} - S_{i_1,\ldots,i_m;\ell}(z)} \leq 1/3$ with probability at least $2/3$, which implies an efficient tester. 

The main result of this subsection requires the following lemma, which roughly states that we can recover clusters from a random trace with high probability. 

\begin{lemma}
    \label{lem:centers-prob}
    Let $z_1,\ldots,z_{2k} \in \mathbb{N}$, $q \in (0,1)$, and $\tz_j \sim \Bin{z_j}{q}$ for all $j \in [k]$. Then there exists an absolute constant $C > 0$ such that, for any $C$-separated partition $c$ of $qz_1,\ldots,qz_{2k}$, $|\tz_j - qz_j| \leq C\sqrt{n}\log n$ and $c(\tz_j) = c(qz_j)$ for all $j \in [k]$ with probability at least $1 - O_k(n^{-10})$. 
\end{lemma}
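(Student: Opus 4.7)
The plan is to prove the lemma by a routine application of Hoeffding's inequality combined with a union bound, followed by a short deterministic argument using the separation property of the partition.

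First, I would handle the concentration inequality for each $\tilde{z}_j$ individually. Since $\tilde{z}_j \sim \Bin{z_j}{q}$ with $z_j \leq n$ (because gap sizes are bounded by the string length), Hoeffding's inequality gives $\Pr[|\tilde{z}_j - qz_j| > t] \leq 2\exp(-2t^2/z_j) \leq 2\exp(-2t^2/n)$. Setting $t = C\sqrt{n}\log n$ yields $\Pr[|\tilde{z}_j - qz_j| > C\sqrt{n}\log n] \leq 2\exp(-2C^2 \log^2 n) \leq 2n^{-2C^2 \log n}$. Choosing $C$ to be a sufficiently large absolute constant (e.g., such that $2C^2 \geq 11$ for all $n \geq 2$), we get this probability is at most $n^{-11}$. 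A union bound over the $k$ values of $j$ (with $k$ constant) then ensures that with probability at least $1 - O_k(n^{-11})$, we simultaneously have $|\tilde{z}_j - qz_j| \leq C\sqrt{n}\log n$ for all $j \in [k]$.

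Second, conditional on the event that all the $\tilde{z}_j$ are $C\sqrt{n}\log n$-close to their means, I would show $c(\tilde{z}_j) = c(qz_j)$ for each $j$ deterministically. By property (3) of \Cref{def:partition}, $c(\tilde{z}_j) = c(qz_{j^*})$ where $j^* = \arg\min_{j' \in [2k]} |qz_{j'} - \tilde{z}_j|$. It suffices to show that this nearest neighbor $qz_{j^*}$ lies in the same cluster as $qz_j$. Indeed, $|qz_j - \tilde{z}_j| \leq C\sqrt{n}\log n$, while for any $qz_{j'}$ with $c(qz_{j'}) \neq c(qz_j)$, property (1) of \Cref{def:partition} gives $|qz_{j'} - qz_j| > 2C\sqrt{n}\log n$, so by the triangle inequality $|qz_{j'} - \tilde{z}_j| \geq |qz_{j'} - qz_j| - |qz_j - \tilde{z}_j| > C\sqrt{n}\log n \geq |qz_j - \tilde{z}_j|$. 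Hence the nearest $qz_{j^*}$ to $\tilde{z}_j$ must satisfy $c(qz_{j^*}) = c(qz_j)$, giving $c(\tilde{z}_j) = c(qz_j)$ as required.

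Combining the two steps yields the claim with failure probability $O_k(n^{-10})$ (absorbing the factor of $k$ into the $O_k$). There is no substantive obstacle here; the main thing to check is that the constant $C$ can be chosen simultaneously large enough to make the Hoeffding tail inequality give probability $n^{-11}$ and to validate the separation-based deterministic argument above---but since both requirements are satisfied by taking $C$ larger than some absolute constant independent of $n$ and $k$, this is straightforward.
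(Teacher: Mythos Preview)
Your proposal is correct and essentially identical to the paper's proof: the paper also fixes $j$, invokes a Chernoff/Hoeffding bound to get $|\tilde z_j - qz_j|\le C\sqrt{n}\log n$ with probability $1-n^{-10}$, then uses the triangle inequality against the $2C\sqrt{n}\log n$ separation (property~(1)) to conclude $c(\tilde z_j)=c(qz_j)$, and finishes with a union bound over $j\in[k]$. The only cosmetic differences are that you invoke property~(3) explicitly and aim for a slightly stronger $n^{-11}$ per-coordinate tail, neither of which changes the argument.
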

\begin{proof}
    Fix $j \in [k]$. Let $c$ be a $C$-separated partition of $qz_1,\ldots,qz_{2k}$. By a Chernoff bound, there exists a constant $C$ such that $|\tz_j - qz_j| \leq C\sqrt{n}\log n$ with probability at least $1 - n^{-10}$. When this occurs, the following holds for all $j' \in [k]$ with $c(qz_{j'}) \neq c(qz_j)$:
    \begin{equation*}
        |\tz_j - qz_{j'}| \geq |qz_j - qz_{j'}| - |\tz_j - qz_{j}| > 2C\sqrt{n}\log n - C \sqrt{n}\log n= C\sqrt{n}\log n \geq |\tz_j - qz_{j}|
    \end{equation*}
    In this case, clearly $c(\tz_j) = c(qz_j)$. The contrapositive of this is that, if $c(\tz_j) \neq c(qz_j)$, we must have $|\tz_j - qz_j| > C\sqrt{n}\log n$. Taking a union bound over the $k$ random variables, the probability that this occurs for some $j \in [k]$ is at most $kn^{-10}$, as desired. 
\end{proof}

\begin{algorithm}[tbp]
\caption{\textsc{Test-Similar-Traces}}\label{alg:tester-similar}
\KwIn{$k \geq 0; x,y \in \mathbb{N}^{k}$; $q \in (0,1)$; $s \in \mathbb{R}^k$; $C > 0$; sample access to deletion channel}
\KwOut{$x$ or $y$} 

\DontPrintSemicolon
\nl $\ell \gets$ minimum value such that $s$ is symmetric mod $\ell$ \;
\nl \label{line:cyclic-stat}Let $S_{i_1,\ldots,i_m;\ell}$ be the statistic from Lemma~\ref{lem:stats-diff} \;%minimal cyclic statistic (mod $\ell$) in which $x$ and $y$ differ \label{line:cyclic-stat} \;
\nl Draw $T = \Theta(n^6\log^{12} n)$ traces $\tz^{(1)},\ldots,\tz^{(T)}$ from the deletion channel \;
\nl Let $\hat{f}$ be the average of $f_{i_1,\ldots,i_m;\ell}(\tz^{(t)})$ (from \Cref{eq:estimator}) for all useful traces $\tz^{(t)}$\;
%\nl $\hat{f}, T' \gets 0, 0$ \;
%\nl \For{$t = 1,\ldots,T$} {
%    \nl \If{$\tz^{(t)}$ contains exactly $k$ 1s and  $\abs{\tz_j^{(t)} - g\left(c\left(\tz_j^{(t)}\right)\right)} < (4k+1)C\sqrt{n}\log n \ \forall j \in [k]$} {
 %       \nl $\hat{f} \gets \hat{f} + f_{i_1,\ldots,i_m;\ell}(\tz^{(t)})$, where $f_{i_1,\ldots,i_m;\ell}$ is defined in \Cref{eq:estimator} \;
%        \nl $T' \gets T' + 1$ \;
%    }
%}
%\nl $\hat{f} \gets \hat{f} / T'$ \label{line:avg} \;
\nl \Return $\argmin_{z \in \{x,y\}} \abs{\hat{f} - S_{i_1,\ldots,i_m;\ell}(z - \frac{1}{q}g(s))}$ \label{line:decision} \;
\end{algorithm}

We are now ready to prove the correctness of \Cref{alg:tester-similar}.

\begin{lemma}
    \label{lem:tester-similar}
    Let $x = (x_1,\ldots,x_k)$ and $y = (y_1,\ldots,y_k)$ be two cyclically distinct binary strings of length $n$, and $q \in (0,1)$. Let $c$ be a $C$-separated partition of $qx_1,\ldots,qx_k,qy_1,\ldots,qy_k$, and suppose $c(qx_1),\ldots,c(qx_k)$ is a cyclic shift of $c(qy_1),\ldots,c(qy_k)$. Then for sufficiently large $C$ and $n$, there exists an algorithm (\Cref{alg:tester-similar}) which distinguishes $x$ from $y$ with probability at least $2/3$ using $T = O(n^6\log^{12}n)$ traces from a cyclic deletion channel.
\end{lemma}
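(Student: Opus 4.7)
The plan is to show (i) that the estimator $\hat f$ concentrates around its expectation using $\tilde O(n^6)$ traces, (ii) that this expectation is essentially $S_{i_1,\ldots,i_m;\ell}\big(z-\tfrac{1}{q}g(s)\big)$ at the true input $z\in\{x,y\}$, and then (iii) invoke \Cref{lem:stats-diff} with $s':=\tfrac{1}{q}g(s)$ to conclude that the $\argmin$ on \Cref{line:decision} selects the true $z$. The choice of $s'$ is legal for \Cref{lem:stats-diff} because $s$ is symmetric mod $\ell$ by construction and $g$ acts coordinatewise, so $s'$ is symmetric mod $\ell$; the lemma then gives $\big|S_{i_1,\ldots,i_m;\ell}(x-s')-S_{i_1,\ldots,i_m;\ell}(y-s')\big|\ge 1$.

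First I would control a single trace. A draw $\tz\sim\Del{z}$ retains all $k$ ones with constant probability $q^k$, and on that event $\tz_j\sim\Bin{z_j}{q}$ independently. A Chernoff bound gives $|\tz_j-qz_j|\le C\sqrt{n}\log n$ for every $j\in[k]$ simultaneously with probability $1-O(n^{-10})$; combined with $|qz_j-g(s_j)|\le 2Ck\sqrt{n}\log n$ (since $qz_j$ and $g(s_j)$ lie in the same cluster) and \Cref{lem:centers-prob}, this forces $c(\tz_j)=s_j$ and $|\tz_j-g(s_j)|\le(2k+1)C\sqrt{n}\log n$ for every $j$. Hence a trace is useful with constant probability, and conditional on usefulness together with the high-probability event of correct cluster identification, every $g(c(\tz_{i_r+j\ell}))$ collapses to the deterministic constant $g(s_{i_r+j\ell})$; each of the $m\le 6$ factors in $f$ is bounded in absolute value by $(4k+1)C\sqrt{n}\log n$, so $|f(\tz)|=\tilde O(n^{m/2})=\tilde O(n^3)$ and the per-trace second moment is $\tilde O(n^6)$.

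Next I would compute the conditional expectation. Since the $\tz_j$ are independent and $g(c(\tz_j))=g(s_j)$ on the good event, coordinatewise expansion of the product of $m$ factors yields
\[
E_z\!\left[f(\tz)\mid\text{useful}\right]=S_{i_1,\ldots,i_m;\ell}\!\left(z-\tfrac{1}{q}g(s)\right)+b(z)+O(n^{-4}),
\]
where the $O(n^{-4})$ absorbs the $\tilde O(n^{m-1})$-bounded contribution from the low-probability cluster-misidentification event, and $b(z)$ is a polynomial combination of cyclic statistics of $z-\tfrac{1}{q}g(s)$ mod $\ell$ indexed by proper submultisets of $\{i_1,\ldots,i_m\}$, arising from the central-moment expansion of any repeated factor $(\tz_j-g(s_j))^r$. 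When the chosen $i_1,\ldots,i_m$ are distinct mod $k$ the expansion is multilinear and $b(z)\equiv 0$; in general, by the minimality of $m$ supplied in the proof of \Cref{lem:stats-diff}, every such lower-order cyclic statistic mod $\ell$ agrees on $x$ and $y$, so $b(x)=b(y)$. A Chernoff bound yields $T'=\Omega(T)$ useful traces out of $T$ total with high probability, and Chebyshev applied to the resulting average gives $|\hat f-E_z[f\mid\text{useful}]|<1/3$ with probability at least $2/3$ once $T=\Theta(n^6\log^{12}n)$.

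Combining these, $\hat f$ lies within $1/2$ of the common shift $S_{i_1,\ldots,i_m;\ell}(z-\tfrac{1}{q}g(s))+b$ (with $b:=b(x)=b(y)$), and the $\ge 1$ gap between $S_{\ldots}(x-\tfrac{1}{q}g(s))$ and $S_{\ldots}(y-\tfrac{1}{q}g(s))$, preserved under the common shift, forces the $\argmin$ step to return the true $z$. The main obstacle I expect is the algebraic bookkeeping for $b(z)$: one must check that every summand produced by expanding a repeated factor $(\tz_j-g(s_j))^r$ in central moments of the binomial is itself a cyclic statistic mod $\ell$ indexed by a proper subset of $\{i_1,\ldots,i_m\}$, so that minimality of $m$ in \Cref{lem:stats-diff} indeed forces $b(x)=b(y)$. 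A cleaner but more computational alternative is to replace each factor in \eqref{eq:estimator} by its factorial-moment unbiased estimator for $(qz_{i_r+j\ell}-g(s_{i_r+j\ell}))^{r}$, which makes $\hat f$ literally unbiased for $S_{i_1,\ldots,i_m;\ell}(z-\tfrac{1}{q}g(s))$ and thus matches the $\argmin$ comparison of \Cref{line:decision} exactly.
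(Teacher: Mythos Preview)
Your plan is essentially the paper's proof: bound each useful trace's estimator by $\tilde O(n^3)$, show the number of useful traces is $\Omega(T)$, apply a concentration inequality (the paper uses Hoeffding where you use Chebyshev, but either suffices), compute the conditional expectation by factoring the product, and then invoke \Cref{lem:stats-diff}. The paper carries out exactly these steps, writing the conditional-expectation calculation as
\[
\E{f(\tz)\mid A}=S_{i_1,\ldots,i_m;\ell}\!\left(z-\tfrac{1}{q}g(s)\right)\pm O(n^{-4})
\]
by splitting the product into a product of expectations (annotated ``$\tz_j$s are independent'').

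Where you go beyond the paper is in worrying about repeated indices among $i_1,\ldots,i_m$; the paper's factorization step silently assumes the $m$ factors involve distinct coordinates, and you are right that this need not hold in general. Your observation that the extra central-moment terms $b(z)$ are linear combinations of cyclic statistics mod $\ell$ of order strictly less than $m$, and hence satisfy $b(x)=b(y)$ by the minimality of $m$ in \Cref{lem:stats-diff}, is correct. However, your conclusion does not quite close: the algorithm's \Cref{line:decision} compares $\hat f$ to $S_{i_1,\ldots,i_m;\ell}(x-\tfrac{1}{q}g(s))$ and $S_{i_1,\ldots,i_m;\ell}(y-\tfrac{1}{q}g(s))$, not to those quantities plus $b$. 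A large common bias $b$ can therefore push $\hat f$ closer to the wrong target even though the gap between the two biased targets is still $\ge 1$ (take, e.g., $S(x-s')=0$, $S(y-s')=1$, $b=10$, $z=x$). The fix is trivial---since $x,y$ are known, the algorithm can compute $b(x)=b(y)$ and compare $\hat f$ to $S(\cdot-s')+b$ instead---or you can use your own suggested alternative of replacing each factor by its factorial-moment unbiased estimator, which makes $b\equiv 0$ and restores the decision rule exactly as written.
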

\begin{proof}
    Fix $C$ to be the constant determined by \Cref{lem:centers-prob}. Let $\mathcal{Z}$ denote the set of useful traces (defined at the top of \Cref{sec:similar-strings-ub}).
    The first usefulness condition holds with probability exactly $q^k$ for each trace. By \Cref{lem:centers-prob}, we have $\abs{\tz_j - qz_j} \leq C\sqrt{n}\log n$ and $c(\tz_j) = c(qz_j)$ with probability at least $1 - O(n^{-10})$. In this case, we have $\abs{\tz_j - g(c(\tz_j))} \leq \abs{\tz_j - qz_j} + \abs{qz_j - g(c(\tz_j))} \leq  C\sqrt{n}\log n + 4Ck\sqrt{n}\log n$, which is precisely the second condition. Therefore, for large enough $n$, each trace falls in $\mathcal{Z}$ with constant probability. By a Chernoff bound, $\abs{\mathcal{Z}} = \Omega(T)$ with probability $1 - o(1)$. For each $\tz \in \mathcal{Z}$, the algorithm computes $f_{i_1,\ldots,i_m;\ell}(\tz)$, where $f_{i_1,\ldots,i_m;\ell}$ is defined in \Cref{eq:estimator}. 
    % \begin{equation*}
    %     f_{i_1,\ldots,i_m;\ell}(\tz) = \sum_{j=1}^{k/\ell} \left(\frac{\tz_{i_1+j\ell}}{q} - g(c(\tz_{i_1+j\ell}))\right) \ldots \left(\frac{\tz_{i_m+j\ell}}{q} - g(c(\tz_{i_m+j\ell}))\right) 
    % \end{equation*}
    Let $A$ be the event that $\tz \in \mathcal{Z}$ and $c(\tz_j) = c(qz_j)$ for all $j \in [k]$. Conditioned on $A$, it is possible to compute $f(\tz)$ precisely, since we can choose a consistent cyclic shift of $s$ and match $\tz$ to that shift up to symmetry. We therefore have 
    \begin{align*}
        \E{f(\tz) \mid A} &= \E{\frac{1}{q^m}\sum_{j=1}^{k/\ell} \left( \tz_{i_1+j\ell} - g(c(qz_{i_1+j\ell})) \right) \ldots \left( \tz_{i_m+j\ell} - g(c(qz_{i_m+j\ell})) \right)}\pm O(n^{-4}) \\
        &= \frac{1}{q^m}\sum_{j=1}^{k/\ell} \E{\tz_{i_1+j\ell} - g(c(qz_{i_1+j\ell}))} \ldots \E{\tz_{i_m+j\ell} - g(c(qz_{i_m+j\ell}))}\pm O(n^{-4}) \tag{$\tz_j$s are independent} \\
        &= \frac{1}{q^m}\sum_{j=1}^{k/\ell} (qz_{i_1+j\ell} - g(s_{i_1+j\ell})) \ldots (qz_{i_m+j\ell} - g(s_{i_m+j\ell}))\pm O(n^{-4})  \\
        &= S_{i_1,\ldots,i_m;\ell}(z-\frac{1}{q}g(s)) \pm O(n^{-4}) 
    \end{align*}
    where the $\pm O(n^{-4})$ term in the first line comes from the $O(n^{-10})$ probability that there are $k$ 1s in $\tilde{z}$ but event $A$ fails; this is multiplied by the universal $O(n^6)$ bound on $|f(\tilde{z})|$ for $m\leq 6$.
    
    By \Cref{lem:centers-prob}, $A$ holds with probability at least $1 - O(n^{-10})$ conditioned on $\tz \in \mathcal{Z}$. Additionally, the second condition implies that for all $\tz \in \mathcal{Z}$, we have $\abs{f(\tz)} \leq \frac{k}{q^m}(4k+1)^6C^6n^3\log^6 n$. Combining these facts and applying the law of total probability yields
    \begin{equation*}
        \abs{\E{f(\tz) \mid \tz \in \mathcal{Z}} - S_{i_1,\ldots,i_m;\ell}(z-\frac{1}{q}g(s))} \leq O \left( \frac{n^3\log^6 n}{n^{10} } + n^{-4}\right) = o(1) 
    \end{equation*}
    Combining the bound on $\abs{f(\tz)}$ with Hoeffding's inequality, we have
    \begin{align*}
        \prob{\abs{\hat{f} - \E{f(\tz \mid z \in \mathcal{Z})}} \geq \frac{1}{4}} &\leq 2 \exp \left( -\frac{2\abs{\mathcal{Z}}^2}{16\abs{\mathcal{Z}} \cdot \left(\frac{k}{q^m}(4k+1)^6C^6 n^3 \log^6n\right)^2} \right) \\
        &= 2 \exp \left( -\frac{\abs{\mathcal{Z}}}{16\frac{k^2}{q^{2m}}(4k+1)^{12}C^{12}n^{6}\log^{12}n} \right) 
    \end{align*}
    By our choice of $T$, we have $\abs{\mathcal{Z}} = \Omega(n^{6}\log^{12}n)$ with probability $1 - o(1)$. Therefore, for an appropriate choice of constants, the above equation is bounded by $0.1 + o(1)$. By the triangle inequality, the following holds for large enough $n$ with probability at least $0.8$: $\abs{\hat{f} - S_{i_1,\ldots,i_m;\ell}(z-\frac{1}{q}g(s))} \leq \frac{1}{4} + o(1) \leq \frac{1}{3}$. By \Cref{lem:stats-diff}, there must exist a statistic $S_{i_1,\ldots,i_m;\ell}$ such that $\abs{S_{i_1,\ldots,i_m;\ell}(x-\frac{1}{q}g(s)) - S_{i_1,\ldots,i_m;\ell}(y-\frac{1}{q}g(s))} \geq 1$ (namely, the statistic chosen on \Cref{line:cyclic-stat}). Therefore, if $z = x$, the algorithm returns $x$ with probability at least $0.9$ (\Cref{line:decision}). The same is true for $y$, which completes the proof.
\end{proof}

\section{Lower Bounds}\label{sec:lower}
Our goal in this section is to upper-bound the distance between the distributions $\Del{x}$ and $\Del{y}$ for two strings $x,y$ with that are permutations of each other and have identical low-order cyclic statistics, which will yield a lower bound for the number of traces needed to distinguish these strings. We first analyze the probability of observing a given trace $a$ from $\Del{x}$, viewed as a polynomial in $n$. We show that the higher-order coefficients of this polynomial depend on the lower-order cyclic statistics of $x$. Using this fact, we bound the distance between probabilities from $\Del{x}$ versus $\Del{y}$. We conclude by providing two strings $x,y$ with identical cyclic statistics up to order 4, proving our $\tilde{\Omega}(n^5)$ lower bound.

\begin{lemma}
    \label{lem:probability}
    Let $x = 10^{n+x_1}10^{n+x_2}\ldots10^{n+x_k}$ with $x^*$ an integer upper bound on $x_1,\ldots,x_k$, and let $a$ be any cyclic shift of $10^{a_1}10^{a_2}\ldots10^{a_k}$. Then for a circular deletion channel $\Del{x}$ with deletion probability $p$, letting $n'=pn$, and letting $b_j=a_j-n(1-p)$ for $j\in [k]$, we have, for $Sym(x;n,p,b,x^*)$ some symmetric function of $x_1,\ldots,x_k$, that depends on $n,p,b,x^*$, that
    \begin{equation}\label{eq:trace-probability-lemma}
        \prob[\tx \sim \Del{x}]{\tx = a} = Sym(x;n,p,b,x^*) \sum_{i=1}^k \prod_{j = 1}^k \prod_{h = x_j+1}^{x^*}(n'-b_{j+i}+h),
    \end{equation}
    where all indices are taken modulo $k$.
\end{lemma}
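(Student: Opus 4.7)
The plan is to compute $\prob[\tilde{x} \sim \Del{x}]{\tilde{x} = a}$ by conditioning on the (constant-probability) event that none of the $k$ ones in $x$ is deleted, and then analyzing the joint distribution of the $k$ resulting gap sizes.

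First, I would note that for $\tilde{x}$ to contain exactly $k$ ones---necessary since $a$ has $k$ ones---none of the ones in $x$ can be deleted, an event of probability $(1-p)^k$ that is independent of the fate of the zeros. Conditional on this event, the pre-shift trace has gaps $(\tilde{x}_1, \ldots, \tilde{x}_k)$ with $\tilde{x}_j \sim \Bin{n+x_j}{q}$ independently, where $q = 1-p$. The output equals the specific string $a$ iff (a) the gap vector $(\tilde{x}_1, \ldots, \tilde{x}_k)$ equals the shifted pattern $(a_{j+i})_j$ for some $i \in \{1, \ldots, k\}$, and (b) the uniformly random cyclic shift (over the length $L = k + \sum_j a_j$) lands on the correct offset. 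A short bookkeeping argument---in which any overcount due to a nontrivial rotational symmetry of $(a_1, \ldots, a_k)$ is exactly canceled by the corresponding repetition in $\sum_{i=1}^k$---yields
\[\prob{\tilde{x} = a} \;=\; \frac{(1-p)^k}{L} \sum_{i=1}^k \prod_{j=1}^k \prob{\tilde{x}_j = a_{j+i}}.\]

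Next, I would expand each binomial probability as $\binom{n+x_j}{a_{j+i}} q^{a_{j+i}} p^{n+x_j - a_{j+i}}$ and isolate the factors that depend on $i$. The key observation is that since $j \mapsto j+i$ is a bijection on $\{1,\ldots,k\}$ (indices mod $k$), the products $\prod_j a_{j+i}!$, $\prod_j q^{a_{j+i}}$, and $\prod_j p^{n+x_j - a_{j+i}}$ depend only on $\sum_j a_j$ and $\sum_j x_j$, not on $i$. Only the factor $\prod_j (n+x_j - a_{j+i})!^{-1}$ is $i$-dependent. To cast this in the claimed product form, I would multiply and divide by $\prod_j (n+x^* - a_{j+i})!$---which is likewise $i$-independent by the same bijection---and apply the telescoping identity
\[\frac{1}{(n+x_j - a_{j+i})!} \;=\; \frac{\prod_{h=x_j+1}^{x^*}(n+h-a_{j+i})}{(n+x^* - a_{j+i})!}.\]

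Finally, the substitution $n + h - a_{j+i} = pn + nq - a_{j+i} + h = n' - b_{j+i} + h$ matches the target form exactly. Collecting all $i$-independent factors yields
\[Sym(x; n, p, b, x^*) \;=\; \frac{(1-p)^k \cdot q^{\sum_j a_j} \cdot p^{kn + \sum_j x_j - \sum_j a_j} \cdot \prod_j (n+x_j)!}{L \cdot \prod_j a_j! \cdot \prod_j (n+x^* - a_j)!},\]
whose symmetry in $x_1, \ldots, x_k$ is immediate, since the only $x$-dependent factors are $\prod_j (n+x_j)!$ and $p^{\sum_j x_j}$, both of which depend only on the multiset $\{x_j\}$. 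The only real subtlety in the argument is the cyclic-shift bookkeeping that produces the clean $1/L$ prefactor regardless of the period of $(a_1,\ldots,a_k)$; once that is in hand, the rest is routine algebraic rewriting.
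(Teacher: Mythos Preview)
Your proposal is correct and follows essentially the same route as the paper: you derive the same starting identity $\prob{\tilde{x}=a}=\frac{1}{|a|}\bigl(\sum_{i}\prod_{j}\binom{n+x_j}{a_{j+i}}\bigr)p^{|x|-|a|}q^{|a|}$ (the paper simply states it, while you justify the $1/L$ bookkeeping), and then perform the same factorization, multiplying and dividing by $(n+x^*-a_{j+i})!$ to produce the product $\prod_{h=x_j+1}^{x^*}(n-a_{j+i}+h)$ and absorbing the remaining $i$-independent, symmetric-in-$x$ factors into $Sym$. The only cosmetic difference is that the paper splits $(n+x_j)!$ as $n!\prod_{h=1}^{x_j}(n+h)$ before absorbing it, whereas you keep $\prod_j(n+x_j)!$ intact; both are manifestly symmetric in $x_1,\ldots,x_k$.
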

\begin{proof}
    Since all cyclic shifts of a trace have equal probability under the deletion channel, we assume without loss of generality that $a = 10^{a_1}10^{a_2}\ldots10^{a_k}$. By definition of a circular deletion channel, and letting $|a|,|x|$ denote the length of the underlying binary strings $a,x$ respectively, we have
    \begin{equation}
        \label{eq:trace-probability}
        \prob[\tx \sim \Del{x}]{\tx = a} = \frac{1}{|a|}\left(\sum_{i=1}^k \prod_{j = 1}^k \binom{n+x_j}{a_{j+i}}\right)p^{|x| - |a|}q^{|a|}
    \end{equation}
    where indices are taken modulo $k$. We point out that the terms outside the parentheses, $\frac{1}{|a|}p^{|x| - |a|}q^{|a|}$ are a symmetric function of $x$ (depending only on $|x|$), and thus can be absorbed into the initial $Sym(x;n,p,b,x^*)$ term in the lemma statement. We slightly refactor each binomial term as
    \[\binom{n+x_{j}}{a_{i+j}}=\frac{n!}{a_{i+j}!(n+x^*-a_{i+j})!} \left(\prod_{h=1}^{x_{j}}(n+h) \right)\left(\prod_{h=x_{j}+1}^{x^*}(n-a_{i+j}+h)\right)\]

    We thus need only consider the sum over $i$ of the product over $j$ of the above expression. The first term $\frac{n!}{a_{i+j}!(n+x^*-a_{i+j})!}$ does not depend on $x$ so is by definition symmetric in $x$ and can be absorbed into the initial $Sym(x;n,p,b,x^*)$. The next term, $\prod_{h=1}^{x_{j}}(n+h)$, when we take its product over all $j$, becomes a symmetric function of $x_1,\ldots,x_k$ and can thus also be absorbed into $Sym(x;n,p,b,x^*)$. The final term becomes the last term in Equation~\ref{eq:trace-probability-lemma}, after noting that $n-a_{i+j}=n'-b_{i+j}$, by definition of $n',b$.
\end{proof}

The only non-symmetric portion of Equation~\ref{eq:trace-probability-lemma} is the expression inside the sum, which is 
\begin{equation}\label{eq:probability-term}\sum_{i=1}^k \prod_{j = 1}^k \prod_{h = x_j+1}^{x^*}(n'-b_{j+i}+h).\end{equation}
Thus, when we compare traces from $\Del{x}$ versus $\Del{y}$ for sequences $x,y$ that are permutations of each other, it is this last term that will distinguish between them. We now view this term as a polynomial in $n'$, and show that the high-order coefficients of this polynomial can be expressed in terms of low-order cyclic statistics of $x$. Thus for two sequences $x,y$ with identical low-order cyclic statistics, we expect the distributions of $\Del{x}$ versus $\Del{y}$ to have high-order terms in $n'$ that exactly cancel, leading to our main lower bound.

\begin{lemma}\label{lem:cyclic-form}Expression \ref{eq:probability-term} when viewed as a polynomial in $n'$ of degree $L=\sum_{j=1}^k x^*-x_j$, for any $m\geq 0$ has ${n'}^{L-m}$ coefficient that is a linear combination of degree $\leq m$ cyclic statistics of $x$, each multiplied by some symmetric function of $x$ (possibly depending on $n,p,b,x^*$).
\end{lemma}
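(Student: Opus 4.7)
The plan is to expand Expression~\ref{eq:probability-term} as a polynomial in $n'$ and show, by direct algebraic manipulation, that the coefficient of $(n')^{L-m}$ organizes into the claimed form. First, I would fix $i$ and view $\prod_{j}\prod_{h=x_j+1}^{x^*}(n'-b_{j+i}+h)$ as a product of $L$ linear factors in $n'$; the coefficient of $(n')^{L-m}$ is then the $m$-th elementary symmetric polynomial of the ``non-$n'$'' terms. Grouping these terms by position $j$, with $N_j := x^*-x_j$, this gives $\sum_{(r_j):\,\sum r_j=m,\,r_j\le N_j}\prod_{j=1}^k e_{r_j}(x_j+1-b_{j+i},\ldots,x^*-b_{j+i})$. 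Summing over $i$ yields the coefficient $C_m(x;b)$ of $(n')^{L-m}$ in Expression~\ref{eq:probability-term}.

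Next, I would apply the shift identity $e_r(y_1-b,\ldots,y_N-b) = \sum_{v=0}^{r}(-b)^v \binom{N-r+v}{v}\, e_{r-v}(y_1,\ldots,y_N)$ with $y_s=x_j+s$ and $b=b_{j+i}$ inside each factor, separating the $b$-dependence from the $x$-dependence of every $e_{r_j}$. After substituting $s_j:=r_j-v_j$, distributing the product over $j$, swapping sums, and summing over $i$, the $b_{j+i}^{v_j}$ factors collect into cyclic statistics of $b$, namely $S^{(b)}_{(v_j)}:=\sum_i\prod_j b_{j+i}^{v_j}$, yielding the formula $C_m(x;b) = \sum_{(v_j):\,\sum v_j\le m}(-1)^{\sum v_j}\,S^{(b)}_{(v_j)}\,\Psi^{(v)}(x,x^*)$, where $\Psi^{(v)}(x,x^*):=\sum_{(s_j):\,\sum s_j=m-\sum v_j}\prod_j \binom{N_j-s_j}{v_j}\,e_{s_j}(x_j+1,\ldots,x^*)$ and each $S^{(b)}_{(v_j)}$ is a scalar for fixed $b$.

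The crucial step is analyzing the $x$-polynomial $\Psi^{(v)}(x,x^*)$. Since $S^{(b)}_{(v_j)}$ is invariant under cyclic shifts of $(v_j)$, I would group the outer sum $\sum_{(v_j)}$ by the cyclic equivalence class of $(v_j)$; within each class the contributions combine to a cyclically-symmetrized polynomial $\widetilde\Psi^{(v)}(x,x^*):=\sum_\tau \Psi^{(v^{(\tau)})}(x,x^*)$, which is cyclic-invariant in $x$. The key claim is that $\widetilde\Psi^{(v)}$ decomposes as $\sum_\lambda F_\lambda(x;x^*)\cdot S_\lambda(x)$ with each $F_\lambda$ symmetric in $x$ and each $S_\lambda$ a cyclic statistic of $x$ of degree at most $\sum_j v_j\le m$. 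The reason is that positions $j$ with $v_j=0$ enter $\Psi^{(v)}$ only through the one-variable polynomials $e_{s_j}(x_j+1,\ldots,x^*)$, and the inner sum over $(s_j)$ distributes these symmetrically across all $v_j=0$ positions, producing symmetric functions. Positions with $v_j\ge 1$ carry the additional factor $\binom{N_j-s_j}{v_j}$, a polynomial of leading degree $v_j$ in $x_j$; the cyclic sum over positional placements of these ``tagged'' factors therefore yields a cyclic statistic of $x$ whose leading non-symmetric contribution has degree $\sum_{j}v_j$, with lower-degree parts absorbing into the symmetric multipliers.

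The main obstacle I anticipate is rigorously tracking this last decomposition. I would proceed by induction on $\sum v_j$: at each step, strip off the ``leading'' cyclic statistic of degree $\sum v_j$ coming from the top $x_j$-degree part of $\prod_{j:\,v_j\ge 1}\binom{N_j-s_j}{v_j}$, and argue that the remainder is expressible using data with strictly smaller $\sum v_j$, to which the inductive hypothesis applies. Combining, each piece of $C_m(x;b)$ becomes (a symmetric function of $x$, depending on $n,p,b,x^*$) times (a cyclic statistic of $x$ of degree $\le m$), as the lemma asserts.
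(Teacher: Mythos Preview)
Your setup is correct through the identity $C_m(x;b)=\sum_{(v_j)}(-1)^{|v|}S^{(b)}_{(v_j)}\Psi^{(v)}(x,x^*)$, and grouping by the cyclic class of $(v_j)$ to obtain the cyclically symmetrized $\widetilde\Psi^{(v)}$ is legitimate. The argument stalls at the ``key claim'' that $\widetilde\Psi^{(v)}$ decomposes as a sum of (symmetric in $x$)$\times$(cyclic statistic of degree $\le |v|$). Your justification---that the $v_j=0$ positions contribute symmetrically while the $v_j\ge 1$ positions contribute degree-$v_j$ ``tags''---only establishes symmetry of $\Psi^{(v)}$ in the subset $\{x_j:v_j=0\}$, not in all of $x$; and at \emph{every} position $j$, including $j\in J_+$, the factor $e_{s_j}(x_j{+}1,\ldots,x^*)$ is itself a polynomial of degree roughly $2s_j$ in $x_j$, so the ``tagged'' positions carry far more $x$-degree than $v_j$. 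The proposed induction on $|v|$ is not well-specified: after stripping the leading part of $\prod_{j\in J_+}\binom{N_j-s_j}{v_j}$, the remainder is not of the form $\widetilde\Psi^{(v')}$ for any $v'$ with $|v'|<|v|$, so no inductive hypothesis applies. This is a genuine obstacle: not every cyclically invariant polynomial lies in the span of (degree $\le d$ cyclic statistics)$\times$(symmetric functions) for small $d$---for $k=4$, the polynomial $\sum_j x_j^2 x_{j+1}$ is not in that span for $d=2$, as every element of the span is invariant under the reversal $x_j\mapsto x_{5-j}$ while this polynomial is not. So the claim really does require exploiting the particular structure of $\widetilde\Psi^{(v)}$, which your sketch does not do.

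The paper takes a different route that sidesteps this difficulty. Instead of elementary symmetric polynomials, it uses Newton's identities to rewrite the $m$-th elementary symmetric function of $\{h-b_{j+i}\}_{(j,h)\in S}$ as a linear combination of products of \emph{power sums} $\prod_{r=1}^{s}\sum_{(j,h)\in S}(h-b_{j+i})^{\ell_r}$ with $\sum_r\ell_r=m$. Each power sum is binomial-expanded as $\sum_j\sum_{t}\binom{\ell_r}{t}(-b_{i+j})^{\ell_r-t}P_{t+1}(x_j)$, where $P_{t+1}(x_j)=\sum_h h^t$ has degree $t{+}1$ in $x_j$. Factors with $t_r=\ell_r$ have no $b$-dependence and are already symmetric in $x$. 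For the remaining factors, the substitution $j_r\mapsto j_r-i$ makes $b_{j_r}$ independent of $i$, so the outer sum over $i$ acts \emph{only} on the $x$-variables and directly yields $\sum_i\prod_r P_{t_r+1}(x_{j_r-i})$, a cyclic statistic of degree $\sum_{r:t_r<\ell_r}(t_r{+}1)\le\sum_r\ell_r=m$. The degree bound is immediate because $t_r<\ell_r$ forces $t_r{+}1\le\ell_r$; the elementary-symmetric route has no analogous inequality (the $j$-th factor in your $\Psi^{(v)}$ has $x_j$-degree $v_j+2s_j\ge r_j$, not $\le r_j$), which is exactly why your degree-control step is hard to complete.
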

\begin{proof}
Define $S$ to be the set of pairs $(j,h)$ such that $j\in[k]$ and $h\in \{x_j+1,\ldots,x^*\}$, where $|S|=L$ as defined in the lemma. Thus the main expression of the lemma equals $\sum_{i=1}^k \prod_{(j,h)\in S} (n'-b_{j+i}+h)$. In particular, the coefficient of ${n'}^{L-m}$ in this expression can be found by multiplying the non-$n'$ parts of all combinations of $m$ \emph{distinct} terms: \begin{equation}\label{eq:distinct} \sum_{i=1}^k\sum_{\substack{\text{distinct }(j_1,h_1),\\\ldots,(j_m,h_m) \in S}} (h_1 - b_{j_1+i})\ldots(h_m - b_{j_m+i})\end{equation}
For each fixed $i$, this expression is a \emph{symmetric} polynomial of the terms $(h-b_{i+j})$ for $(j,h)\in S$, and we thus use Newton's symmetric polynomial identities to conclude that we may reexpress this expression for fixed $i$ as a linear combination of products of ``power sums'' of these terms; the power sum of degree $\ell$ is defined as $\sum_{(j,h) \in S} (h - b_{j+i})^\ell$. Namely, Equation~\ref{eq:distinct} can be expressed as some linear combination of the following expressions, where $\ell_1\ldots,\ell_s$ are some positive integers that sum to $m$:
\begin{equation}\label{eq:h-b} \sum_{i=1}^k\prod_{r=1}^s \sum_{(j,h)\in S} (h-b_{i+j})^{\ell_r}\end{equation}
%We now claim that the innermost (power) sum, $\sum_{(j,h)\in S} (h-b_{i+j})^{\ell_r}$ is a linear combination of order $\leq \ell_r$ cyclic statistics of $x$, plus some symmetric function of $x$. 
We break down the inner sum of this expression using the definition of $S$, and the binomial expansion (across different powers $t$), as 
\[\sum_{(j,h)\in S} (h-b_{i+j})^{\ell_r}=\sum_{j=1}^k \sum_{t=0}^{\ell_r} {\binom{\ell_r}{t}} (-b_{i+j})^{\ell_r-t}\sum_{h=x_j}^{x^*} h^t.\]
For any exponent $t$, we consider the final sum as a degree $t+1$ polynomial in $x_j$, with notation $P_{t+1}(x_j):=\sum_{h=x_j}^{x^*} h^t$. % is a polynomial of order $t+1$ in $x_j$: let $P_{t+1}(x_j):=\sum_{h=x_j}^{x^*} x_j^t$, which is a degree $t+1$ polynomial in $x_j$.
Thus Equation~\ref{eq:h-b} equals
\[\sum_{i=1}^k\prod_{r=1}^s\left(\sum_{j_r=1}^k \sum_{t_r=0}^{\ell_r} {\ell_r\choose t_r} (-b_{i+j_r})^{\ell_r-t_r} P_{t_r+1}(x_{j_r})\right).\]
We pull the sums over $t_r$ outside:
\[ \sum_{t_1=0}^{\ell_1}\cdots
\sum_{t_s=0}^{\ell_s}\sum_{i=1}^k\prod_{r=1}^s\left(\sum_{j_r=1}^k {\ell_r\choose t_r} (-b_{i+j_r})^{\ell_r-t_r} P_{t_r+1}(x_{j_r})\right)\]
We split the product by whether, for each $r$, we have $t_r=\ell_r$, since when this is true the expression simplifies significantly (note that we also move the sum over $i$ to the right, past the first parenthetical, which does not depend on $i$):
\[\sum_{t_1=0}^{\ell_1}\cdots
\sum_{t_s=0}^{\ell_s}\left(\prod_{r\in [s]:t_r=\ell_r}  \sum_{j_r=1}^k P_{\ell_r+1}(x_{j_r})\right)\left(\sum_{i=1}^k\prod_{r\in [s]:t_r<\ell_r} \sum_{j_r=1}^k {\ell_r\choose t_r} (-b_{i+j_r})^{\ell_r-t_r} P_{t_r+1}(x_{j_r})\right)\]
The first parenthetical is clearly a symmetric function of $x_1,\ldots,x_k$. To analyze the second expression, we apply a variable substitution in the inner sum, replacing $j_r$ with $j_r-i$. For any fixed tuple $t_1,\ldots,t_s$, the last parenthetical equals 
\[\sum_{i=1}^k\prod_{r\in [s]:t_r<\ell_r} \sum_{j_r=1}^k {\ell_r\choose t_r} (-b_{j_r})^{\ell_r-t_r} P_{t_r+1}(x_{j_r-i}).\]

We view the product as a polynomial in the variables $x_{1-i},\ldots,x_{k-i}$, which we denote as $Q(x_{1-i},\ldots,x_{k-i})$; the degree of this polynomial is thus bounded by  $\sum_{r:t_r<\ell_r}(t_r+1)$, which is at most $\sum_r \ell_r=m$.
Crucially, the sum over $i$ now equals $\sum_{i=1}^k Q(x_{1-i},\ldots,x_{k-i})$, which, since we sum a degree $m$ polynomial over all cyclic shifts of $x_1,\ldots,x_k$, is thus clearly a linear combination of cyclic statistics of degree $\leq m$.

In conclusion, Equation~\ref{eq:h-b} is thus a sum of cyclic statistics of $x$ each time some symmetric function of $x$. Thus, the coefficient of ${n'}^{L-m}$ in Equation~\ref{eq:probability-term}, being a linear combination of expressions of the form of Equation~\ref{eq:h-b}, is also a sum of cyclic statistics of $x$ each time some symmetric function of $x$, as desired. 
\end{proof}

We now directly compare the probabilities of getting a certain trace $a$ from $\Del{x}$ versus $\Del{y}$.

\begin{lemma}
    \label{lem:ratio}
    Let $x = 10^{n+x_1}10^{n+x_2}\ldots10^{n+x_k}$ and $y = 10^{n+y_1}10^{n+y_2}\ldots10^{n+y_k}$. Suppose $(x_1,\ldots,x_k)$ is a permutation of $(y_1,\ldots,y_k)$ and that the two sequences have matching cyclic statistics up to some order $z\geq 1$. Let $a = 10^{a_1}10^{a_2}\cdots10^{a_k}$. If $a_j-n(1-p) \in \left[-C\sqrt{n\log n}, C\sqrt{n\log n}\right]$ for all $j \in [k]$ then for sufficiently large $n,c$ depending on $C$, $k$, $p$, and $x^*=\max\{x_1,\ldots,x_k\}$ we have
    \begin{equation*}
        \frac{\prob[\tx \sim \Del{x}]{\tx = a}}{\prob[\tilde{y} \sim \Del{y}]{\tilde{y} = a}} \in \left[1-(c\log n/n)^{\frac{z+1}{2}}, 1+(c\log n/n)^{\frac{z+1}{2}}\right].
    \end{equation*}
\end{lemma}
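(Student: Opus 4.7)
The plan is to reduce the ratio to a ratio of polynomials in $n'=pn$, invoke Lemma~\ref{lem:cyclic-form} to force all high-order coefficients to cancel, and then bound the leftover low-order tail. First, I would apply Lemma~\ref{lem:probability} to both $x$ and $y$ with a single common $x^* \geq \max_j \max(x_j,y_j)$. Since $(y_j)$ is a permutation of $(x_j)$ and $\mathrm{Sym}(\cdot;n,p,b,x^*)$ is a symmetric function of its first argument, the $\mathrm{Sym}$ prefactors are identical and cancel in the ratio. So the ratio reduces to $F(x)/F(y)$, where
\[ F(x) \;:=\; \sum_{i=1}^k \prod_{j=1}^k \prod_{h=x_j+1}^{x^*} (n' - b_{j+i} + h). \]

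Second, I would appeal to Lemma~\ref{lem:cyclic-form}: $F(x)$ is a polynomial in $n'$ of degree $L = \sum_j(x^*-x_j)$ (equal for $x$ and $y$ since they are permutations), and the coefficient of $(n')^{L-m}$ is a linear combination of cyclic statistics of $x$ of order $\le m$, each multiplied by a symmetric function of $x$ that depends only on $n,p,b,x^*$. Because $x,y$ match on all symmetric functions and on all cyclic statistics of order $\le z$, these coefficients coincide for every $m \le z$. Thus $F(x)-F(y)$ is a polynomial in $n'$ of degree at most $L-z-1$.

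Third, I would bound the size of the residual coefficients. For each fixed $i$, the coefficient of $(n')^{L-m}$ in $\prod_{(j,h)\in S}(n'-b_{j+i}+h)$ is $\sum_{T\subseteq S,\,|T|=m} \prod_{(j,h)\in T}(h-b_{j+i})$; each factor has magnitude at most $x^* + C\sqrt{n\log n} \le 2C\sqrt{n\log n}$ for large $n$, so this coefficient is bounded by $\binom{L}{m}(2C)^m(n\log n)^{m/2}$. Summing over the $k$ shifts and then over $m\ge z+1$, the dominant term (since $(n\log n)^{1/2}/n' = O(\sqrt{\log n/n})$ is small) is $m=z+1$, giving $|F(x)-F(y)| \le C_1 (n')^L (\log n/n)^{(z+1)/2}$ for a constant $C_1$ depending on $k,C,p,x^*,z$. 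Meanwhile the leading coefficient of $F(y)$ is $k$ with smaller corrections, so $F(y) \ge (k/2)(n')^L$ for $n$ large. Dividing yields $|F(x)/F(y)-1| \le (2C_1/k)(\log n/n)^{(z+1)/2}$, and choosing $c$ large enough absorbs the constant factor into $(c\log n/n)^{(z+1)/2}$.

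The main obstacle is careful accounting of the magnitudes of the symmetric-function multipliers produced by Lemma~\ref{lem:cyclic-form}, which in principle can depend intricately on $n$ and $b$. The key simplification is that one does not need to re-derive those multipliers from the proof of Lemma~\ref{lem:cyclic-form}: the coefficient of $(n')^{L-m}$ in each single product $\prod_{(j,h)}(n'-b_{j+i}+h)$ can be bounded directly by an elementary-symmetric-polynomial argument as above, yielding the clean $(n\log n)^{m/2}$ scaling uniformly in the identity of the sequence, which is all that is needed.
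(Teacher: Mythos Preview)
Your proposal is correct and follows essentially the same route as the paper: apply Lemma~\ref{lem:probability} so that the symmetric prefactors cancel, use Lemma~\ref{lem:cyclic-form} to match the $(n')^{L-m}$ coefficients for $m\le z$, bound the remaining coefficients directly via the elementary-symmetric expansion $\binom{L}{m}\bigl((C+x^*)\sqrt{n\log n}\bigr)^m$, control the tail as a geometric series, and divide by the lower bound $F(y)\ge \tfrac{k}{2}(n')^L$. Your closing observation that one need not track the symmetric multipliers from the proof of Lemma~\ref{lem:cyclic-form} and can instead bound the $(n')^{L-m}$ coefficient of each product directly is exactly the simplification the paper uses.
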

\begin{proof}
As above, define $b_j=a_j-n(1-p)$ and define $n'=pn$. Since $x^*$ upper bounds $x_1,\ldots,x_k$, we have that $x^*$ upper bounds $y_1,\ldots,y_k$ too, since they are a permutation of $x_1,\ldots,x_k$.
    By Lemma~\ref{lem:probability}, we have
    \[\prob[\tx \sim \Del{x}]{\tx = a} = Sym(x;n,p,b,x^*) \sum_{i=1}^k \prod_{j = 1}^k \prod_{h = x_j+1}^{x^*}(n'-b_{j+i}+h)\]
    and the corresponding expression for $y$.
%    \[\prob[\ty \sim \Del{y}]{\ty = a} = Sym(y;n,p,b,x^*) \sum_{i=1}^k \prod_{j = 1}^k \prod_{h = y_j+1}^{x^*}(n'-b_{j+i}+h)\]
    Let $D_1 \coloneqq \sum_{i=1}^k \prod_{j = 1}^k \prod_{h = x_j+1}^{x^*}(n'-b_{j+i}+h)$ and $D_2 \coloneqq \sum_{i=1}^k \prod_{j = 1}^k \prod_{h = y_j+1}^{x^*}(n'-b_{j+i}+h)$. Because $(y_1,\ldots,y_k)$ is a permutation of $(x_1,\ldots,x_k)$, we have 
    \begin{equation*}
        \frac{\prob[\tx \sim \Del{x}]{\tx = a}}{\prob[\ty \sim \Del{y}]{\ty = a}} = \frac{\sum_{i=1}^k \prod_{j = 1}^k \prod_{h = x_j+1}^{x^*}(n'-b_{j+i}+h)}{\sum_{i=1}^k \prod_{j = 1}^k \prod_{h = y_j+1}^{x^*}(n'-b_{j+i}+h)} = \frac{D_1}{D_2} = 1+\frac{D_1-D_2}{D_2}
    \end{equation*}
    It remains only to show that $\frac{D_1-D_2}{D_2} \in \left[-(c\log n/n)^{\frac{z+1}{2}}, (c\log n/n)^{\frac{z+1}{2}}\right]$. 
    
    Let $L \coloneqq kx^* - \sum_{j=1}^k x_j$. Then $D_1,D_2$ are both degree $L$ polynomials in $n'$ with leading coefficient $k$. For $m\geq 1$ we now bound the contribution of the terms of degree $\leq L-m$. 

    By assumption, each $b_j\in \left[-C\sqrt{n\log n}, C\sqrt{n\log n}\right]$; thus for $C'=C+x^*$ and $n\geq 2$ we trivially have that $|-b_j+h|\leq C'\sqrt{n\log n}$, for $h\in \{1,\ldots,x^*\}$. Thus by definition of $D_1,D_2$, the contribution of the ${n'}^{L-m}$ term to either $D_1$ or $D_2$ has magnitude at most $k{L\choose m} (C'\sqrt{n\log n})^m {n'}^{L-m}$. We bound ${L\choose m}\leq L^m$, and also bound $L\leq k x^*$. We use these bounds to bound the contribution to $D_1$ or $D_2$ of \emph{all} terms of degree $\leq L-m$ in $n'$: this is bounded by the geometric series $k\sum_{i\geq m} (kx^* C' \sqrt{n\log n})^i {n'}^{L-i}$. We choose $n$ large enough so that each term of this series is at most $\frac{1}{3}$ of the previous term, namely, we choose $n$ so that $\frac{\sqrt{n}}{\log n}\geq 3\frac{kx^* C'}{p}$. Because the ratio of terms is $\leq\frac{1}{3}$, the $i=0$ term of the series is at least twice as large as the sum of magnitudes the remaining terms combined, so that, since the $i=0$ term of $D_2$ is $k (n')^L$, we conclude that  $D_2\geq \frac{1}{2} k {n'}^L$. Second, the sum of the series starting with the $i=m$ term has magnitude at most $\frac{3}{2}$ times the $i=m$ term, namely, \[k\sum_{i\geq m} (kx^* C' \sqrt{n\log n})^i {n'}^{L-i}\leq \frac{3}{2}k (kx^* C' \sqrt{n\log n})^m {n'}^{L-m}=\frac{3}{2}k{n'}^L \left(\frac{(kx^*C')^2\log n}{p^2 n}\right)^{\frac{m}{2}}\]
    
    We now invoke Lemma~\ref{lem:cyclic-form}, which says that the ${n'}^{L-m}$ coefficients of $D_1,D_2$ respectively are linear combinations of degree $\leq m$ cyclic statistics of $x,y$ respectively, times symmetric functions of $x,y$ respectively. The assumption that $x,y$ are permutations of each other means that all symmetric functions of $x$ equal the corresponding symmetric functions of $y$; the assumption that $x,y$ have identical cyclic statistics up to order $z$ further implies that the ${n'}^{L-m}$ coefficients of $D_1,D_2$ must be identical for $m\leq z$. Thus we bound $|D_1-D_2|$ by plugging in the bounds of the previous paragraph starting at the first nonzero degree, $m=z+1$: $|D_1-D_2|\leq 2\cdot\frac{3}{2}k{n'}^L \left(\frac{(kx^*C)^2\log n}{p^2 n}\right)^{\frac{z+1}{2}}$. Combining with our lower bounds from above that $D_2\geq \frac{1}{2} k {n'}^L$, and setting $c=6\frac{(kx^*C)^2}{p^2}$, we conclude $\frac{D_1-D_2}{D_2} \in \left[-(c\log n/n)^{\frac{z+1}{2}}, (c\log n/n)^{\frac{z+1}{2}}\right]$ as desired.
\end{proof}
We use the following bound relating the Hellinger distance between $\mu,\nu$ to the TV distance between $t$ samples from $\mu$ or $\nu$.
\begin{lemma}[Lemma A.5 in~\cite{holden20Lower}]
    \label{lem:hellinger}
    Let $\mu$ and $\nu$ be probability measures with $d_H(\mu, \nu) \leq 1/2$. Then $1 - \DTV(\mu^t,\nu^t) \leq \eps$ if $t \leq \frac{\log(1/\eps)}{9d_H(\mu,\nu)}$. 
\end{lemma}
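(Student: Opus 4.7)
The plan is to chain two classical ingredients: the Cauchy-Schwarz bound controlling $\DTV$ in terms of $d_H$ for a single pair of measures, and the multiplicative behavior of the mass overlap $\int\min(d\mu,d\nu) = 1 - \DTV(\mu,\nu)$ under product measures. Together these reduce the $t$-sample statement to a single-sample Hellinger control, which is precisely the hypothesis $d_H(\mu,\nu) \leq 1/2$.

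For the product step, I would use the pointwise inequality $\min(ab,cd) \geq \min(a,c)\min(b,d)$ for nonnegative reals, applied to the factored densities of $\mu^t$ and $\nu^t$, together with Fubini's theorem, to conclude
\[
1 - \DTV(\mu^t,\nu^t) \;=\; \int \min(d\mu^t, d\nu^t) \;\geq\; \Bigl(\int \min(d\mu, d\nu)\Bigr)^t \;=\; \bigl(1 - \DTV(\mu,\nu)\bigr)^t.
\]
For the single-sample Cauchy-Schwarz step, I would factor $|\mu(x) - \nu(x)| = |\sqrt{\mu(x)} - \sqrt{\nu(x)}|\cdot|\sqrt{\mu(x)} + \sqrt{\nu(x)}|$ to obtain $\DTV(\mu,\nu) \leq \frac{1}{2}d_H(\mu,\nu)\sqrt{4 - d_H^2(\mu,\nu)} \leq d_H(\mu,\nu)$ whenever $d_H\leq 1/2$. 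Combining, $1 - \DTV(\mu^t,\nu^t) \geq (1 - d_H(\mu,\nu))^t$. Using $d_H \leq 1/2$ and the elementary bound $\log(1-x) \geq -x/(1-x)$ valid for $x \in [0,1/2]$, this is at least $\exp(-2td_H)$, and substituting the hypothesis $t \leq \log(1/\eps)/(9d_H)$ yields a bound of the form $\eps^{2/9}$.

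The main obstacle is matching the constants in the statement exactly: the straightforward analysis above yields $1 - \DTV(\mu^t,\nu^t) \geq \eps^{2/9}$, and sharpening to a cleaner dependence on $\eps$ requires the finer single-sample bound $\DTV \leq \frac{1}{2}d_H\sqrt{4-d_H^2}$ together with the tighter expansion $\log(1-x) \geq -x - x^2$ that exploits $d_H \leq 1/2$ more carefully; the factor $9$ in the denominator is engineered so that these loss factors absorb cleanly. One subtlety concerns the direction of the inequality: this standard technique naturally produces a \emph{lower} bound on $1 - \DTV(\mu^t,\nu^t)$ (equivalently, an upper bound on $\DTV$), which is the direction needed for the lower-bound application in \Cref{sec:lower}, where one wants to conclude indistinguishability when $t$ is small. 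The literal form ``$1 - \DTV(\mu^t,\nu^t) \leq \eps$'' as typeset in the statement reads as the opposite inequality, but the surrounding argument appears to invoke the lemma in the direction that my technique produces, and I would rely on this reading to close the proof.
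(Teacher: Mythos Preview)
The paper does not prove this lemma; it is imported verbatim as ``Lemma A.5 in~\cite{holden20Lower}'' and invoked as a black box in the proof of \Cref{lem:lowerbd}. So there is no in-paper proof to compare against, and your write-up is the standard argument for this kind of statement.

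Your reasoning is correct for the direction the paper actually needs (a \emph{lower} bound on $1-\DTV(\mu^t,\nu^t)$, i.e., indistinguishability when $t$ is small), and you are right that the inequality sign as typeset points the wrong way for that usage.

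One substantive remark. Your chain
\[
1-\DTV(\mu^t,\nu^t)\;\geq\;(1-\DTV(\mu,\nu))^t\;\geq\;(1-d_H(\mu,\nu))^t
\]
yields a sample threshold scaling like $1/d_H$, matching the literal denominator $9\,d_H$ in the stated lemma. The tighter (and more standard) route goes through the Hellinger \emph{affinity}, using the exact tensorization
\[
1-\tfrac{1}{2}d_H^2(\mu^t,\nu^t)\;=\;\bigl(1-\tfrac{1}{2}d_H^2(\mu,\nu)\bigr)^t
\]
together with $\DTV\leq d_H$, which gives a threshold scaling like $1/d_H^2$. The paper's application actually requires this stronger $1/d_H^2$ scaling: in the proof of \Cref{lem:lowerbd} they compute $d_H^2=O((\log n/n)^{z+1})$ but then write ``$d_H(\Del{x},\Del{y})\leq d_H(\mu_0,\nu_0)=O((\log n/n)^{z+1})$'' when invoking the lemma, and conclude the $\Omega(n^{z+1}/\log^{z+1}n)$ lower bound. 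So both the lemma statement and its invocation appear to carry a $d_H$ versus $d_H^2$ slip; your proof is faithful to the statement as written, but be aware it is a square-root weaker than what the surrounding argument relies on.
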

\begin{lemma}
    \label{lem:lowerbd}
    Let $x = 10^{n+x_1}10^{n+x_2}...10^{n+x_k}$ and $y = 10^{n+y_1}10^{n+y_2}...10^{n+y_k}$ be two strings where $(x_1,...x_k)$ and $(y_1,...,y_k)$ are permutations of each other and have identical cyclic statistics up to order $z$. Then any algorithm which distinguishes $\Del{x}$ from $\Del{y}$ with probability $\frac{2}{3}$ requires $\Omega(n^{z+1}/\log^{z+1}n)$ samples. 
\end{lemma}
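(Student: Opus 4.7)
The plan is to upper-bound the squared Hellinger distance $d_H^2(\Del{x},\Del{y})$ by $\tilde{O}(n^{-(z+1)})$, from which the standard Hellinger tensorization underlying Lemma~\ref{lem:hellinger} yields $\DTV((\Del{x})^{\otimes T}, (\Del{y})^{\otimes T}) = o(1)$ whenever $T = o(n^{z+1}/\log^{z+1} n)$, so no algorithm can distinguish with constant advantage using that few samples.

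Writing $\mu = \Del{x}$ and $\nu = \Del{y}$, I would partition traces into three regions and bound each contribution to $d_H^2 = \sum_a (\sqrt{\mu(a)} - \sqrt{\nu(a)})^2$: (i) \emph{good} traces, having exactly $k$ ones with each gap $a_j$ lying in $[n(1-p) - C\sqrt{n \log n}, n(1-p) + C\sqrt{n\log n}]$ for a large constant $C$; (ii) \emph{atypical} traces with $k$ ones but some gap outside this window; and (iii) \emph{deficient} traces with fewer than $k$ ones. For good traces, Lemma~\ref{lem:ratio} directly gives $\mu(a)/\nu(a) \in [1-\epsilon, 1+\epsilon]$ with $\epsilon = (c\log n/n)^{(z+1)/2}$, and the elementary estimate $(\sqrt{\mu(a)}-\sqrt{\nu(a)})^2 = \nu(a)\bigl(1-\sqrt{\mu(a)/\nu(a)}\,\bigr)^2 \leq \nu(a)\epsilon^2$ (valid for small $\epsilon$) yields a total contribution of $O(\epsilon^2) = \tilde{O}(n^{-(z+1)})$. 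For atypical traces, a union bound over Chernoff tails for the $k$ independent binomial gaps bounds their combined mass under $\mu$ (and $\nu$) by $n^{-\Omega(C^2)}$; the crude estimate $(\sqrt{\mu(a)}-\sqrt{\nu(a)})^2 \leq \mu(a) + \nu(a)$ then makes this region's contribution polynomially negligible once $C$ is chosen large enough.

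The main obstacle is region (iii): deficient traces carry combined probability $1 - q^k = \Theta(1)$, so they cannot be dismissed by probability alone. I would handle them by extending the polynomial analysis of Lemmas~\ref{lem:probability} and \ref{lem:cyclic-form} to a merged-gap setting. Condition on the subset $S = \{i_1 < \cdots < i_j\}$ of preserved ones: the resulting gap lengths are independent binomials with trial counts $(i_{r+1}-i_r)n + X_r^{(S)}$, where $X_r^{(S)} = x_{i_r} + x_{i_r+1} + \cdots + x_{i_{r+1}-1}$. Because each merged sum $X_r^{(S)}$ is a \emph{linear} form in the $x_j$'s, redoing the derivation of Lemma~\ref{lem:cyclic-form} shows that the high-order coefficients (in $n$) of the corresponding probability polynomial remain linear combinations of low-order cyclic statistics of the original sequence $(x_1,\ldots,x_k)$, multiplied by symmetric functions of the $x_j$'s. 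Averaging over subsets $S$ (which themselves are a symmetric operation) preserves this structure, so that the permutation and matching-cyclic-statistics assumptions force the top $z$ coefficients to cancel exactly between $x$ and $y$. Hence the ratio bound $\mu(a)/\nu(a) = 1 \pm \tilde{O}(n^{-(z+1)/2})$ still holds for typical deficient traces, while atypical deficient traces are controlled as in region (ii). Summing the three contributions yields $d_H^2(\Del{x},\Del{y}) = \tilde{O}(n^{-(z+1)})$, and invoking Lemma~\ref{lem:hellinger} completes the proof.
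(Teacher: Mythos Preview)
Your handling of regions (i) and (ii) is correct and is exactly what the paper does. The divergence is entirely in region (iii), and here the paper takes a much simpler route that you missed: the \emph{data processing inequality}. The paper observes that a sample from $\Del{x}$ can be produced by first sampling from $\mu_0$ (the channel conditioned on all $k$ ones surviving, i.e.\ delete only $0$s then uniformly shift) and then passing the result through a second channel that deletes each $1$ independently with probability $p$ (followed by another uniform shift). This second channel is identical for $x$ and $y$, so $d_H(\Del{x},\Del{y})\le d_H(\mu_0,\nu_0)$, and the entire problem reduces to your regions (i) and (ii). No analysis of deficient traces is needed at all.

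Your proposed direct attack on region (iii) has a real gap. For a \emph{fixed} subset $S$ of surviving ones, ``redoing the derivation of Lemma~\ref{lem:cyclic-form}'' does not produce cyclic statistics of the original sequence $(x_1,\ldots,x_k)$: the cyclic sum in that proof runs over $k$ alignments and is what manufactures the cyclic invariance in $x$, whereas for fixed $S$ you only sum over $j<k$ alignments of $a$ against the merged gaps $(X_1^{(S)},\ldots,X_j^{(S)})$. What you get are cyclic statistics of the \emph{merged} sequence, which are not cyclic statistics of $x$. Cyclic invariance in $x$ only appears \emph{after} you additionally sum over all size-$j$ subsets $S$; at that point one can argue (by a straight degree count, not by mimicking Lemma~\ref{lem:cyclic-form}) that the $n^{L-m}$ coefficient is a cyclically invariant polynomial of degree $\le m$ in the $x_j$'s and hence a combination of order-$\le m$ cyclic statistics. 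So your conclusion is salvageable, but not by the route you describe. You would also still owe the analog of Lemma~\ref{lem:ratio}'s denominator bound, which is messier here because different $S$'s give heterogeneous merged-gap multipliers $(i_{r+1}-i_r)$, so the refactoring of Lemma~\ref{lem:probability} does not go through uniformly. The data processing step bypasses all of this in one line.
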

\begin{proof}
    %Let $\mu$ and $\nu$ denote the distributions of traces from $x$ and $y$ respectively. 
    Let $\mu_0$ represent $\Del{x}$ conditioned on deleting only $0$s; with $\nu_0$ defined correspondingly for $\Del{y}$. The distributions $\mu_0,\nu_0$ are only supported on traces that are cyclic shifts of $a = 10^{a_1}10^{a_2}\cdots10^{a_k}$, for some gaps $a_1,\ldots,a_k$.
     
     As above, define $b_i=a_i-n(1-p)$.     
     %Lemma~\ref{lem:ratio} requires $b_i\in [-C\sqrt{n\log n}, C\sqrt{n\log n}]$. 
     We point out that, by a standard Chernoff bound, we have $b_i \in [-C\sqrt{n\log n}, C\sqrt{n\log n}]$ with probability at least $1-n^{-(z+100)}$ for some universal constant $C$ (assuming $z$ is a constant).

    Combining Lemma \ref{lem:ratio} with this Chernoff bound yields the following bound on the Hellinger distance between $\mu_0$ and $\nu_0$:
    \begin{align*}
        &d_H^2(\mu_0, \nu_0) = \sum_{a_1,...,a_k} \left( \sqrt{\mu_0(10^{a_1}...10^{a_k})} - \sqrt{\nu_0(10^{a_1}...10^{a_k})} \right)^2 \\
        &= \frac{1}{(1-p)^k} \sum_{a_1,...,a_k} \left( \sqrt{\Pr_{a'\sim \Del{x}}[a'=a]} - \sqrt{\Pr_{a'\sim \Del{y}}[a'=a]} \right)^2 \\
        &\leq \frac{2 kn^{-(z+100)}}{(1-p)^k} + \frac{1}{(1-p)^k} \hspace{-1.6cm}\sum_{\substack{a_1,...,a_k \in\\ [n(1-p)-C\sqrt{n\log n}, n(1-p)+C\sqrt{n\log n}]}} \hspace{-1.6cm}\Pr_{a'\sim \Del{y}}[a'=a] \left( 1 - \sqrt{\frac{\Pr_{a'\sim \Del{x}}[a'=a]}{\Pr_{a'\sim \Del{y}}[a'=a]}} \right)^2 \\
        &= O\left( \frac{\log^{z+1} n}{n^{z+1}} \right)
    \end{align*}
    We can generate a sample from $\Del{x}$ by first sampling from $\mu_0$ and then passing the sample through a second channel which deletes only $1$s (and similarly for $\Del{y}$ and $\nu_0$). By the data processing inequality, the second channel cannot increase the Hellinger distance between $\mu_0$ and $\nu_0$, so we have $d_H(\Del{x}, \Del{y}) \leq d_H(\mu_0, \nu_0) = O\left( \frac{\log^{z+1} n}{n^{z+1}} \right)$. Applying Lemma \ref{lem:hellinger}, we conclude that one requires $\Omega \left( \frac{n^{z+1}}{\log^{z+1} n} \right)$ traces to distinguish $x$ from $y$ with constant success probability. 
\end{proof}

Finally, we provide two strings that have identical cyclic statistics up to $4$-th order, and whose gaps are permutations of each other. They are the sequence $x_{(j)}=(0,2,3,2,1,1,1,1,2,3,2,0)$ and the sequence $y_{(j)}=3-x_{(j)}$.
Thus, distinguishing these two strings requires $\Omega(\frac{n^5}{\log^5 n})$ traces.

\bibliography{references}
\bibliographystyle{plainnat}

\end{document}